%% file: main.tex
\documentclass[12pt]{amsart}

\usepackage[top=0.75in,bottom=0.75in,left=0.75in,right=0.75in]{geometry}

\input{ams_header}

\title[A correlation self-test for entanglement embezzlement]{A commuting operator self-test for exact entanglement embezzlement}

\author{Connor Paddock}
\author{Simon Schmidt}

\address{Connor Paddock, Department of Computer Science and the Institute for Quantum Science and Technology, University of Calgary, Canada}
\email{connor.paddock@ucalgary.ca}

\address{Simon Schmidt, Faculty of Computer Science, Ruhr University Bochum, Germany}
\email{s.schmidt@rub.de}

\begin{document}
\begin{abstract}
We show that the optimal commuting operator correlation for a variant of Coladangelo's two-player generalized nonlocal game, based on the embezzlement of entanglement, is a commuting operator self-test. Our result gives the first correlation self-test for an exact entanglement embezzlement protocol in the commuting operator framework. Because the unique optimal correlation necessitates the exact embezzlement of entanglement, the correlation is unrealizable using quantum tensor-product models, even if the local spaces are allowed to be infinite dimensional. As such, our result provides an example of a commuting operator self-test where the Hilbert space shared by the two parties cannot be factored into a tensor-product of local spaces that is compatible with the local measurement operators. Our main proof technique relies on Harris' unique abstract state characterization of exact embezzlement of entanglement involving Cuntz algebras.

\end{abstract}

\maketitle

\section{Introduction}

In a generalized two-player nonlocal game, two spatially separated parties, named Alice and Bob, are each sent questions $x$ and $y$ selected from finite sets $X$ and $Y$ according to a known distribution. The two players each reply with answers $a$ and $b$ from finite sets $A$ and $B$. From this data, the verifier computes their score. The goal of the players is to maximize their score. The behaviour of the players is modelled by correlations $p:=\{p(a,b|x,y)\}_{a,b,x,y}$, and their score is an affine function of the correlations they can produce. Although the players cannot communicate during the game, they can pre-share entangled quantum states, and locally perform quantum measurements to determine their outcomes in the game. As such their correlations are determined by collections of quantum measurements (i.e.~POVMs) and quantum states.

More specifically, if the joint system is $|\psi\rangle\in H_A\otimes H_B$ and the players employ POVMs $\{\{M_a^x\}_{a\in A}:x\in X\}$ on $H_A$ and $\{\{N_b^y\}_{b\in B}:y\in Y\}$ on $H_B$, then we see that $$p(a,b|x,y)=\langle \psi|M_a^x\otimes N_b^y|\psi\rangle,\quad \text{for all $x,y,a,b\in X\times Y \times A \times B$}.$$ In particular, each quantum correlation $p$ is determined by a quantum model (or strategy) that we denote by $S=\left(H_A\otimes H_B,\{\{M_a^x\}_{a\in A}:x\in X\},\{\{N_b^y\}_{b\in B}:y\in Y\},|\psi\rangle\right)$. In fact, a correlation is a \emph{quantum correlation} precisely if such a model $S$ exists with $H_A$ and $H_B$ finite-dimensional. The set of quantum correlations, denoted $C_q$, contains the set of all classical correlations $C_c$. Notably, $C_q(X,Y,A,B)$ is a convex but not generally closed subset of $\R^{X\times Y\times A \times B}$ \cite{slofstra2019set}.

On the other hand, it is not hard to see that for many quantum correlations $p$ there can be many different models that realize it, even after accounting for things like the part of the Hilbert space outside the span of $|\psi\rangle$. Indeed, if $S$ was uniquely determined by $p$ then an observer could derive information about, not only the state of the player's joint system, but also the measurement operations they employed, just from the observed correlation.

It is then perhaps striking that there are special correlations called \emph{self-tests} which have a unique quantum model. The sense in which a self-test has a unique quantum model is conventionally defined in terms of the existence of $(i)$ an \emph{ideal model} $\wtd{S}$, and $(ii)$ local isometries taking any other model ($S$ for $p$) to the ideal model $\wtd{S}$ with an additional auxiliary register (that won't change the correlation). However, more recently a succinct mathematical definition of self-testing was given in terms of abstract states on operator algebras. The abstract definition of self-testing comes from the realization that if $p\in C_q$ is an extreme point, then $p$ is a self-test for all quantum models if and only if for every $k,\ell\ge1$ $x_1,\dots, x_k\in X$, $a_1,\ldots, a_k\in A$, $y_1,\ldots,y_\ell$, and $b_1,\ldots, b_\ell$, the value of
$$\langle \psi|M_{a_1}^{x_1}\cdots M_{a_k}^{x_k}\otimes N_{b_1}^{y_1}\cdots N_{b_\ell}^{y_\ell}|\psi\rangle$$
is the same for all quantum models realizing $p$ \cite[Theorem 3.5]{PSZZ}. This latter condition is called an \emph{abstract state self-test} and is equivalent to the condition that there is a unique abstract finite-dimensional state $f:\mcA_{\mathrm{POVM}}^{(X,A)}\otimes_{\min} \mcA_{\mathrm{POVM}}^{(Y,B)}\to \C$ such that $p(a,b|x,y)=f(m_a^x\otimes n_b^y)$. Here, $\mcA_{\mathrm{POVM}}^{(X,A)}$ is the universal POVM algebra with generators $\{m_a^x\}$ satisfying the positivity $m_a^x\geq 0$ and completeness $\sum_a m_a^x=1$ relations. Abstract states $f$ are positive linear functionals on $C^*$-algebras  satisfying $f(\Id)=1$. The usefulness of the unique abstract state definition of self-testing is made apparent by the GNS construction of $f$, by recovering the ideal model $\wtd{S}$.

However the main advantage of the abstract state self-testing definition is how it naturally generalizes to other families of (quantum) correlations. Indeed, if we relax the finite-dimension requirement on the state, then a correlation $p$ belongs to the closure of $C_q$ (i.e.~$\overline{C_q}=:C_{qa}$) if and only if there is a state $f$ on $\mcA_{\mathrm{POVM}}^{(X,A)}\otimes_{\min} \mcA_{\mathrm{POVM}}^{(Y,B)}$ realizing $p$. It is worth noting that correlations in $C_{qa}$ do not necessarily have tensor-product models. Furthermore, if we denote \(C_{qs}\) as correlations admitting tensor-product models with possibly infinite-dimensional local spaces, we also have $C_{qs}\subset C_{qa}$. The lack of apparent tensor-product realizations informs the definition of commuting operator correlations. In particular, $p$ is a commuting operator correlation (written $p\in C_{qc}$\footnote{The ``$qc$'' stands for quantum commuting, another commonly used term for commuting operator correlations.}) if and only if there is a state $f$ on $\mcA_{\mathrm{POVM}}^{(X,A)}\otimes_{\max} \mcA_{\mathrm{POVM}}^{(Y,B)}$ realizing $p$. States on the $\max$ tensor-product have GNS representations with a state on a single Hilbert space $H$ and commuting POVM operators rather than on separate factors of a tensor-product, that is $[M_a^x,N_b^y]=0$. As expected, every state on $\mcA_{\mathrm{POVM}}^{(X,A)}\otimes_{\min} \mcA_{\mathrm{POVM}}^{(Y,B)}$ pulls back to a state on $\mcA_{\mathrm{POVM}}^{(X,A)}\otimes_{\max} \mcA_{\mathrm{POVM}}^{(Y,B)}$, matching the fact that $C_{qa}\subset C_{qc}$. The landmark MIP$^*=$RE work of Ji, Natarajan, Vidick, Wright, and Yuen actually shows that there are correlations in $C_{qc}$ that are not in $C_{qa}$ \cite{ji2021mip}.

Unlike the conventional definition involving local isometries, the abstract state self-testing definition is much more amenable to commuting operator correlations. Indeed, we say that $p$ is a \emph{commuting operator self-test} if there is a unique abstract state $f$ on $\mcA_{\mathrm{POVM}}^{(X,A)}\otimes_{\max} \mcA_{\mathrm{POVM}}^{(Y,B)}$ realizing $p$. Several commuting operator self-tests were shown in \cite{PSZZ} including the optimal correlation for the CHSH game, among others. However, all the examples of commuting operator self-tests, the GNS of the unique state was finite-dimensional. This raised the question of whether one could find commuting operator self-test that did not ultimately belong to $C_q$ (or even to $C_{qs}$).

A promising starting point is the generalized nonlocal game introduced by Coladangelo in \cite{coladangelo2020two} (and the follow-up \cite{beigi2021separation}). Its optimal quantum value can be approached by finite-dimensional strategies, but cannot be attained by any tensor-product strategy, even when the local Hilbert spaces are infinite-dimensional. Consequently, the optimal quantum correlations belongs to $C_{qa}(4,4,3,3)\setminus C_{qs}(4,4,3,3)$. This gives us an explicit setting in which to investigate whether a correlation with no tensor-product realization can be a commuting operator self-test. 

Coladangelo's game combines two self-testing subgames with a consistency test that forces a coherent transformation between the self-tested entangled states. We replace one of these self-tests with another game, which has both fewer answers and is a commuting operator self-test \cite{schmidt2025guess}. We denote the resulting game $G_{emb}$. Approximate entanglement embezzlement implements this transformation with arbitrarily small error, yielding finite-dimensional strategies whose values converge to the optimal quantum value. 

Cleve, Liu and Paulsen \cite{cleve2017perfect} showed that exact entanglement embezzlement is possible in the commuting operator framework, although it is impossible in the tensor-product framework. In their setup, Alice and Bob act on a common catalyst Hilbert space through commuting operations, while each of them has a separate output register. These operations create an entangled state on the output registers while preserving the initial state on the catalyst Hilbert space. A key ingredient is an algebraic formulation using abstract states, making it well matched with the commuting operator self-testing framework. This gives a starting point for studying uniqueness of the state underlying exact entanglement embezzlement and applying it to Coladangelo's game.

Hence, an approach to showing that Coladangelo's generalized nonlocal game is a commuting operator self-test would be to establish a uniqueness criterion for the abstract embezzlement state. However, looking at commuting operator framework for the embezzlement of entanglement in \cite{cleve2017perfect} it becomes apparent that the state, although on tensor-product of $C^*$-algebras abstractly representing the local unitaries in the embezzlement protocol, is not unique. Because the embezzlement protocol involves an initial product state in the local registers, there is a lot of freedom in the $C^*$-algebra for another state to behave differently while still generating the necessary commuting operators to perform the embezzlement through the GNS representation.

However, more recently Harris was successful in pinning down the uniqueness statement for the abstract embezzlement state and the relevant operations on the state, indeed he showed in \cite[Theorem 3.4]{harris2026self} that each exact embezzlement protocol for a state with Schmidt rank $d$ comes from a unique state on the $C^*$ algebra $\mcO_d\otimes_{\max}\mcO_d$, where $\mcO_d$ is the Cuntz algebra of $d$ partial isometries. 

In our case this supplied the state dependent relations we needed to show in the local measurement algebras of the players. Once, we showed that these relations were satisfied, we obtained a unique state on a certain ``corners'' of the local measurement algebras. The last part of our proof was then to show that the state lifted uniquely to the rest of the measurement algebras. 

    \begin{theorem}\label{thm:main}
        The generalized nonlocal game $G_{emb}$ has a unique optimal commuting operator correlation $\hat{p}$. This correlation is a commuting operator self-test and belongs to $C_{qa}(4,4,3,3)\setminus C_{qs}(4,4,3,3)$.
    \end{theorem}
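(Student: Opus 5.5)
We reduce the optimality conditions of $G_{emb}$ to operator relations, show these force an exact embezzlement protocol, and then use Harris' uniqueness theorem \cite[Theorem 3.4]{harris2026self} on $\mcO_d\otimes_{\max}\mcO_d$.

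Write the optimal commuting operator value of $G_{emb}$ as the value approached by finite-dimensional approximate-embezzlement strategies. Take any state $f$ on $\mcA_{\mathrm{POVM}}^{(4,3)}\otimes_{\max}\mcA_{\mathrm{POVM}}^{(4,3)}$ attaining this value, with GNS data $(H,\pi,|\psi\rangle)$. First we show that optimality forces both subgames to be won optimally and the consistency test to be passed with certainty. This needs a tight upper bound on the value, which we would prove by a sum-of-squares decomposition of (optimal value)$\cdot 1-$(game polynomial) in $\mcA_{\mathrm{POVM}}\otimes_{\max}\mcA_{\mathrm{POVM}}$. We then invoke the known commuting operator self-tests for each subgame: the CHSH-type test from \cite{PSZZ} and the guessing-type game of \cite{schmidt2025guess}. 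Each gives state-dependent relations. On the question pair of the first subgame, certain projections act on $|\psi\rangle$ like Pauli/qubit operators on a maximally entangled pair tensored with an arbitrary catalyst. On the second, they act like the target product state with Schmidt rank $d$ (in the commuting operator sense). The consistency test, passed perfectly, gives further state-dependent relations linking the two descriptions.

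Next we build partial isometries from these relations. Inside Alice's algebra, compressed to the appropriate projections (the ``corners''), we define elements $s_1,\dots,s_d$ from products of her measurement operators. We then verify, on the cyclic subspace generated by Bob's algebra acting on $|\psi\rangle$, that $s_i^*s_j=\delta_{ij}$ and $\sum_i s_is_i^*=1$; we do the same for Bob. Because the relations hold only on the support of the state, we would pass to the commutants or use that $|\psi\rangle$ is cyclic and separating enough. The standard trick is that $X|\psi\rangle=0$ for Alice's $X$ implies $XN|\psi\rangle=0$ for all of Bob's $N$, so $X=0$ on $\overline{\pi(\mcB)|\psi\rangle}$. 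This gives a $*$-homomorphism $\mcO_d\otimes_{\max}\mcO_d\to B(H)$. The consistency constraint then says the resulting state satisfies exactly the embezzlement identities in Harris' characterization. By \cite[Theorem 3.4]{harris2026self}, the restriction of $f$ to the $C^*$-subalgebra generated by these corner elements is uniquely determined.

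Finally we show that $f$ is determined on all words in the generators $m_a^x\otimes n_b^y$, in particular on $\hat p$. Using the subgame relations, every measurement operator, restricted to the cyclic subspace, can be written as a noncommutative polynomial in the corner elements together with fixed finite-dimensional matrix units, applied on $|\psi\rangle$. So every moment $f(w_A\otimes w_B)$ reduces to a value of the unique Harris state tensored with a fixed finite-dimensional state. We expect this lifting step, together with proving that the Cuntz relations hold exactly rather than only state-dependently, to be the main obstacle. We would first attempt it by inductively moving operators onto the state using commutation and the identities $M|\psi\rangle=N|\psi\rangle$.

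Uniqueness of $f$ gives uniqueness of the optimal correlation $\hat p$ and the self-test property. For $\hat p\in C_{qa}$, the finite-dimensional approximate-embezzlement strategies have correlations converging to $\hat p$ by uniqueness and compactness. For $\hat p\notin C_{qs}$, note that any tensor-product model would be a model of $f$ and would yield exact embezzlement in a tensor-product framework, which is impossible \cite{cleve2017perfect}.
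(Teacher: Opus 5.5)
\textbf{Verdict: genuine gap.} Your pipeline has the same shape as the paper's: subgame self-tests, exact relations via a cyclic and separating vector, a Cuntz corner, Harris' theorem, lifting through matrix units, and $C_{qa}$ via the approximating sequence plus uniqueness and compactness. However, the step that actually produces $\mcO_d$ is only a placeholder. In addition, the subgame data you feed into it is misidentified in a way that matters.

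In $G_{emb}$ the tilted CHSH subgame ($\alpha=1/\sqrt2$) self-tests the \emph{non-maximally} entangled state $(|00\rangle+\sqrt2|11\rangle)/\sqrt3$. It gives $M_2(\C)$ matrix units $\wtd f^C_{uv}$ with $\wtd f^A_{10}|v\rangle=\tfrac{1}{\sqrt2}\wtd f^B_{01}|v\rangle$. Feige's game self-tests the maximally entangled qutrit $|\Phi_3\rangle$, giving $M_3(\C)$ matrix units $\wtd e^C_{ij}$ with $\wtd e^A_{ij}|v\rangle=\wtd e^B_{ji}|v\rangle$. Neither subgame self-tests the embezzled target.

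The missing idea is the \emph{gluing} of these two matrix-unit systems. Test (c) gives $A^{(0,F)}_0|v\rangle=B^{(0,t)}_0|v\rangle$, hence $\wtd e^A_{00}|v\rangle=\wtd f^B_{00}|v\rangle=\wtd f^A_{00}|v\rangle$. Once $|v\rangle$ is separating, this becomes the operator identity $\wtd e_{00}=\wtd f_{00}$, so $\wtd f_{11}=\wtd e_{11}+\wtd e_{22}$.

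Then $V_r=\wtd f_{01}\wtd e_{r0}$ for $r=1,2$ satisfy the $\mcO_2$ relations in the corner $\wtd e_{00}\mcM\wtd e_{00}$ purely algebraically. They also \emph{generate} that corner, because $\wtd f_{01}=V_1\wtd e_{01}+V_2\wtd e_{02}$. Without this surjectivity, Harris' theorem fixes $f$ only on a subalgebra of the corner, and your lifting step has nothing to lift from.

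The embezzlement identity $3\langle v|V^A_rV^B_s|v\rangle=\delta_{rs}/\sqrt2$ then comes from three inputs together: the factor $\sqrt2$ in the tilted CHSH relation, the Feige flip relation, and $\langle v|\wtd e_{00}|v\rangle=\tfrac13$. It does not come from the consistency test alone. So $d=2$ and the target is $|\Phi_2\rangle$, arising from the mismatch of Schmidt weights $(\tfrac13,\tfrac23)$ versus $(\tfrac13,\tfrac13,\tfrac13)$ off the consistency projection. With the maximally entangled qubit pair your description suggests, test (c) could not be won with certainty alongside optimal play in Feige's game, since $\tfrac12\neq\tfrac13$.

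The exactness issue you flag as the main obstacle is resolved in the paper before any Cuntz relation is checked, and your version of it is incomplete. Knowing $X=0$ on $\overline{\pi(\mcB)|\psi\rangle}$ helps only once that subspace is all of $H$. That is supplied by local cyclicity of the two ideal subgame models: their states have full Schmidt rank and their local algebras are full matrix algebras. Hence every generator of Alice's algebra, including the tilted-CHSH $\perp$-effects (which annihilate $|v\rangle$), satisfies $X|v\rangle=Y|v\rangle$ for some $Y$ in Bob's algebra, and vice versa. This gives the following chain:
\begin{itemize}
\item $\overline{\mcM_A|v\rangle}=\overline{\mcM_B|v\rangle}=H$, so $|v\rangle$ is separating for both local algebras;
\item the restriction maps to the subgame cyclic subspaces are therefore injective;
\item the matrix units lift with exact relations to all of $H$.
\end{itemize}
Note also that the corner functional must be renormalized by $3$.

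Smaller points:
\begin{itemize}
\item No new SOS is needed for the upper bound. It is just $\tfrac13(\omega_{qc}(G_F)+\omega_{qc}(G_{tCHSH(\frac{1}{\sqrt{2}})})+1)$.
\item The PSZZ tilted-CHSH self-test has a known error in its proof. This is why the paper reproves the case $\alpha=1/\sqrt2$ from the Bamps--Pironio SOS.
\item Your exclusion from $C_{qs}$ via Cleve--Liu--Paulsen is a reasonable alternative to the paper's use of Beigi's argument. But the GNS space of $f$ is not a tensor product. You must run the construction inside the given tensor-product model, restricting to the supports of the reduced states (which reduce the local algebras, by the transfer identities) so that the relations hold exactly there. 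Only then does the tensor-product impossibility apply.
\end{itemize}
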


    The proof of \cref{thm:main} follows from \cref{cor:main}. In particular \cref{thm:main} provides the first example of a commuting operator self-test for a correlation that has no finite, or even tensor-product, models.
    
    Additionally, this means that any players achieving the optimal score in $G_{emb}$ gives rise to an exact embezzlement protocol for the two-dimensional maximally entangled state, see \cref{cor:embez_protocol}. By an observation of van Luijk, Stottmeister, and Wilming \cite{van2025multipartite}, our result also implies that correlation self-tests exist for two-party Bell scenarios in which the local measurements generate a type III$_\lambda$-factor von Neumann algebra. This opens up a broader question of whether self-tests can be devised to classify other types of local von Neumann algebras.

 \section*{Acknowledgements}

 Connor Paddock acknowledges early discussions with Yuming Zhao, and William Slofstra pertaining to the correlation in \cite{coladangelo2020two} during the completion of \cite{PSZZ}. He would also like to thank Lauritz van Luijk for answering some questions regarding embezzlement and von Neumann algebras, and pointing him to the reference \cite{van2025multipartite}. Both authors thank Samuel Harris for helpful conversations.
 Simon Schmidt acknowledges support from the Deutsche Forschungsgemeinschaft (DFG, German Research Foundation) under Germany’s Excellence Strategy – EXC 2092
CASA – 390781972. Connor Paddock acknowledges support from the University of Calgary's Faculty of Science and Quantum City startup awards.

 \section*{Disclosure of AI usage}

 AI tools (ChatGPT 5.6/6.0 Sol) were used to aid the authors in parts of the work. However, the authors came up with the main ideas for the project, and the approach by combining the results in \cite{harris2026self} and \cite{coladangelo2020two} to obtain the main result (\cref{thm:main}). AI tools were primarily employed in the derivation and development of \cref{lem:isoCuntz}. They were also used in the proofs of \cref{lem:projective_to_POVM}, \cref{thm:tCHSH}, \cref{lem:uniquestate}, and \cref{prop:optimal_state}, whereas the authors were responsible for the claims themselves. Otherwise, AI tools were limited to more minor aspects like finding typos and grammatical errors. The authors take full responsibility for the correctness of the work.

 \section*{Concurrent work}

    During the completion of this work the authors became aware of \cite{KSZ26} in which the authors find a synchronous correlation $p$ that is both a commuting operator self-test and $p\in C^{sync}_{qa}\setminus C^{sync}_{qs}$.

\section{Preliminaries}\label{sec:prelims}

\subsection{Mathematical background and notation}

Let $H$ be a Hilbert space. In this work, all Hilbert spaces will be over $\C$. We denote by $\mbB(H)$ the space of bounded linear operators on $H$. A unital $C^*$-algebra $\mcA$ is a complex Banach $*$-algebra with a unit, where the norm satisfies the identity $\|aa^*\|=\|a\|^2$, for all $a\in \mcA$. All of the $C^*$-algebras will be unital. The obvious example of a $C^*$-algebra is $\mbB(H)$. In fact, by the Gelfand-Naimark Theorem every $C^*$-algebra is isometrically $*$-isomorphic to a norm-closed self-adjoint (closed) subalgebra of some $\mbB(H)$. If $\mcF\subseteq \mbB(H)$, we write $C^*(\mcF)$ to denote the smallest norm-closed $*$-subalgebra of $\mbB(H)$ containing $\mcF$. Equivalently, $C^*(\mcF)=\overline{\mathrm{span}}^{\|\cdot\|}\{F_1\cdots F_n:n\geq 0, F_i\in \mcF\cup \mcF^*\}$, where $\Id_H$ denotes the empty product, and $\mcF^*=\{F^*:F\in \mcF\}$.

On the other hand, given just a set of generators $X$ and admissible relations $R$, we let $C^*(X,R)$ be the \textbf{universal $C^*$-algebra} of $(X,R)$. In particular, $C^*(X,R)$ is the completion of the quotient of the free $*$-algebra $\C^*\ang{X}$ with respect to the ideal $\{a: \|a\|_U=0\}$, where $\|a\|_U=\sup_\rho\|\rho(a)\|$ is a seminorm induced by the $*$-representations $\rho:\C^*\ang{X}\to \mbB(H)$ satisfying the relations $R$. This is particularly useful for abstractly defining the notion of quantum measurements.

Measurements are described by \textbf{positive operator valued measures} (POVMs), which consist of a family of positive operators $\{M_i \in B(\mathcal{H})\,|\, i \in [m]\}$ such that $\sum_{i=1}^m M_i= 1_{B(\mathcal{H})}$. If all positive operators are projections ($M_i=M_i^*=M_i^2$), then we call $\{M_i \in B(\mathcal{H})\,|\, i \in [m]\}$ with $\sum_{i=1}^m M_i= 1_{B(\mathcal{H})}$ a \textbf{projective measurement} (PVM).

Using the earlier idea, we define $\mcA_{\mathrm{POVM}}^{(X,A)}$ to be the universal POVM ($C^*$)-algebra with generators $\{\{m_a^x\}_{a\in A}:x\in X\}$ and relations $m_a^x\geq 0$ for all $a\in A, x\in X$ and $\sum_am_a^x=1$ for all $x\in X$. Similarly, we let $\mcA_{\mathrm{PVM}}^{(X,A)}$ be the universal PVM (C$^*$)-algebra with generators $\{\{p_a^x\}_{a\in A}:x\in X\}$ and relations $p_a^x=(p_a^x)^*=(p_a^x)^2$ for all $a\in A, x\in X$ and $\sum_a p_a^x=1$ for all $x\in X$. The universal property ensures that every representation of $\mcA_{\mathrm{POVM}}^{(X,A)}$ (resp.~$\mcA_{\mathrm{PVM}}^{(X,A)}$) corresponds to a collection of (reps. projective) quantum measurements and vice versa.

There is another universal $C^*$-algebra that we will encounter in this work. The \textbf{Cuntz algebra} $\mcO_d$ is the universal $C^*$-algebra generated by isometries $V_0,\ldots, V_{d-1}$ satisfying $\sum_{i=0}^{d-1}V_iV_i^*=1$. For $d\geq 2$ the algebra $\mcO_d$ is simple, and therefore any isometries $\wtd{V}_0,\ldots, \wtd{V}_{d-1} \in \mbB(H)$ satisfying $\sum_{i=0}^{d-1}\wtd{V_i}\wtd{V_i}^*=\Id$ generate a $C^*$-algebra isomorphic to $\mcO_d$.

A \textbf{state} $f:\mathcal A\to \C$ on a $C^*$-algebra $\mathcal A$ is a positive linear functional with $f(1_{\mathcal A})=1$. The state space $\mcS(\mcA)$ of a $C^*$-algebra is both convex and weak-$*$ compact. A (quantum) vector state $\ket{\psi}$ is a unit vector in a Hilbert space $\mathcal{H}$. Note that any vector state $\ket{\psi}$ gives rise to a state $\Psi$ on $\mbB(H)$ via $\Psi(\cdot)=\tr(\cdot\ket{\psi}\bra{\psi})$. Abstract states also give rise to vector states. This is the famous GNS construction, named after Gelfand-Naimark-Segal. Indeed, the \textbf{GNS construction} says that every (abstract) state $f$ on a $C^*$-algebra gives rise to a vector state on a Hilbert space, such that $\langle \Omega|\pi(a)|\Omega\rangle=f(a)$ for all $a \in \mcA$. Notably, the Hilbert space, (unit) vector, and representation $(H,\pi, |\Omega\rangle)$ are the GNS triple associated with each $f$. A property of the GNS representation is that $|\Omega\rangle$ is \textbf{cyclic}, which means $\pi(\mcA)|\Omega\rangle=\mathrm{span}\{\pi(a_1\cdots a_n) |\Omega\rangle :n\geq 0,\text{ each } a_i\in \mcA\}$, is dense in $H$. 

Given two $C^*$-algebras $\mcA$ and $\mcB$ there are two important different ways of taking their (algebraic) tensor-product $\otimes_{alg}$ to obtain a new $C^*$-algebra. The first is the $C^*$-algebra $\mcA \otimes_{\min}\mcB$,  known as the $\min$ construction. In the min construction, the norm is completed with respect to the seminorm induced by (spatial) representations $\pi_A\otimes \pi_B$, where $\pi_A:\mcA\to \mbB(H_A)$, and $\pi_B:\mcA\to \mbB(H_B)$. The other is the $C^*$-algebra $\mcA \otimes_{\max}\mcB$, known as the $\max$ construction. Here the completion is taken with respect to the seminorm induced by (commuting) representations $\pi:\mcA\otimes_{alg} \mcB \to\mbB(H)$, where $[\pi(a),\pi(b)]=0$ for all $a\in \mcA,b\in \mcB$.

\subsection{Commuting operator self-tests} 
Let $A,B,X$, and $Y$ be finite sets, $\mu\colon X\times Y\to \R_{\geq 0}$ be a probability distribution, and $V\colon A\times B \times X\times Y \to \R$ be a function. Note that nonlocal games are usually defined via predicates $V$ with $V(a,b|x,y) \in \{0,1\}$, we use the more general definition in \cite{coladangelo2020two}.

\begin{definition}\label{def:nonlocal_game}
A \emph{(generalized, two-player) nonlocal game} is a tuple~$\mcG=(A,B,X,Y,\mu,V)$ describing a scenario consisting of non-communicating players, Alice and Bob, interacting with a referee.
In the game, the referee samples a pair of questions $(x,y)\in X\times Y$ according to~$\mu$, and sends question~$x$ to Alice and~$y$ to Bob.
Then, Alice (resp.\ Bob) returns answer~$a$ (resp.\ $b$) to the referee. Then, the players obtain the score $V(a,b|x,y)$ (in ordinary nonlocal games, we say that they win if $V(a,b|x,y)=1$, otherwise they lose).
\end{definition}

The players are allowed to decide on a strategy before the game starts.
However, once the game begins they are not allowed to communicate.
To the referee, the behaviour of the players can be modelled by the probabilities~$p(a,b|x,y)$ of answers~$a,b$ given questions~$x,y$ as determined by the strategy.
The collection of numbers~$\{p(a,b|x,y)\}_{a\in A,b\in B,x\in X,y\in Y} \in \R^{A \times B \times X \times Y}$ is called a \emph{(bipartite) correlation}.
Thus, the expected score (probability of winning the game $\mcG$ for ordinary nonlocal games) under a strategy~$S$, with correlations~$p$, is given by
\begin{equation*}
    \omega(\mcG,S)
=   \sum_{x \in X, y \in Y}\sum_{a \in A, b \in B}\mu(x,y)V(a,b|x,y)p(a,b|x,y).
\end{equation*}

Note that in this formulation, (generalized) nonlocal games and Bell inequalities are equivalent concepts. Quantum strategies for such games are modelled by the players being able to use POVM's on a shared (entangled) state. The most general definition is the following.

\begin{definition}
    A commuting operator model $S$ for a correlation $p$ consists of:
    \begin{itemize}
        \item[-] a Hilbert space $H$,
        \item[-] collections of measurement operators (a.k.a~POVMs) $\{M_a^x\}$ and $\{N_b^y\}$ such that $[M_a^x,N_b^y]=0$ for all $(x,y,a,b)\in X \times Y \times A\times B$, and
        \item[-] a vector state $|\psi\rangle \in H$,
    \end{itemize}
    such that $$p(a,b|x,y)=\langle \psi|M_a^x\cdot N_b^y|\psi\rangle,$$
    for all $(x,y,a,b)\in X \times Y \times A\times B$.
\end{definition}

In the context of a (generalized) nonlocal game a commuting operator model is the same thing as a commuting operator strategy. The \emph{commuting operator value} $\omega_{qc}$ of the game is then given by the supremum of the expected score of all such strategies.

\begin{definition}
    A commuting operator model $(H,\{M_a^x\},\{N_b^y\},|\psi\rangle)$ is cyclic if $$H=\overline{C^*(\{M_a^x\},\{N_b^y\})|\psi\rangle}.$$
    Observe that every commuting operator model restricts to a cyclic model on the Hilbert space $\widehat{H}=\overline{C^*(\{M_a^x\},\{N_b^y\})|\psi\rangle}$.
    In particular, the GNS model of a state $f$, denoted by $S_f=(H_f,\{\pi_f(m_a^x\otimes 1)\},\{\pi_f(1\otimes n_b^y)\},|\Omega_f\rangle)$, is cyclic on $H_f$.
    Moreover, we say that $S$ is locally cyclic if $$H=\overline{C^*(\{M_a^x\})|\psi\rangle}=\overline{C^*(\{N_b^y\})|\psi\rangle}.$$
\end{definition}

We remark that a commuting operator model is locally cyclic if its state vector is cyclic for each local measurement algebra separately. This condition is a commuting operator generalization of the full Schmidt rank condition for (quantum) tensor-product models whose local measurement algebras are full matrix algebras, and is close to the ``marginal cyclic'' condition introduced by Crann-Todorov-Turowska \cite{CTT}.

We now recall the following key definition.

\begin{definition}[\cite{PSZZ}, Definition 7.1]
    A correlation $p$ is a commuting operator self-test if there exists a unique state $f$ on $\mathcal A_{\mathrm{POVM}}^{X,A}\otimes_{max}\mathcal A_{\mathrm{POVM}}^{Y,B}$ with $f(m_{a}^x\otimes n_{b}^y)=p(a,b|x,y)$ for  all $(x,y,a,b)\in X\times Y\times A \times B$.
\end{definition}

A common method to establishing a self-test is by showing that the game functional has a unique optimal state. This determines both the optimal correlation and the unique abstract state realizing it. The following result shows that proving uniqueness among pure optimal state suffices.

\begin{proposition}\label{prop:optimal_state}
    Let $\mcS$ be the state space of $\mcA_{\mathrm{POVM}}^{(X,A)}\otimes_{\max} \mcA_{\mathrm{POVM}}^{(Y,B)}$, and let $F=F^*$ be a (linear) functional. Write,
    \begin{equation*}
        \lambda=\max_{f\in \mcS} f(F),
    \end{equation*}
   and suppose there is a state $\widehat{f}$ such that every pure state $f$, satisfying $f(F)=\lambda$, equals $\hat{f}$. Then, $\hat{f}$ is the unique optimal state of $F$, and the correlation
   \begin{equation*}
       \hat{p}(a,b|x,y)=\hat{f}(m_a^x\otimes n_b^y),
   \end{equation*}
   is the unique optimal commuting operator correlation for $F$. Moreover, $\hat{f}$ is the unique state in $\mcS$ realizing $\hat{p}$.
\end{proposition}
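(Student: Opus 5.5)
We treat this as a Krein–Milman argument on the weak-$*$ compact convex set $\mcS$, followed by a convexity argument for the correlation statement. Throughout, $F$ is regarded as a self-adjoint element of $\mcA:=\mcA_{\mathrm{POVM}}^{(X,A)}\otimes_{\max} \mcA_{\mathrm{POVM}}^{(Y,B)}$; in the application, $F=\sum\mu(x,y)V(a,b|x,y)\,m_a^x\otimes n_b^y$. The map $f\mapsto f(F)$ is then a weak-$*$ continuous real affine function on $\mcS$, so the maximum $\lambda$ is attained.

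\textbf{Step 1: the optimal face is a singleton.} Let $\mcS_\lambda=\{f\in\mcS: f(F)=\lambda\}$. Since $\mcS_\lambda$ is the set where a continuous affine function attains its maximum, it is nonempty, weak-$*$ closed (hence compact) and convex. It is also a face of $\mcS$: if $f=tg+(1-t)h$ with $0<t<1$, $g,h\in\mcS$ and $f(F)=\lambda$, then $g(F),h(F)\le\lambda$ forces $g(F)=h(F)=\lambda$. Consequently every extreme point of $\mcS_\lambda$ is an extreme point of $\mcS$, that is, a pure state of $\mcA$. By hypothesis each such point equals $\hat f$. Krein–Milman gives $\mcS_\lambda=\overline{\mathrm{conv}}\,\mathrm{ext}(\mcS_\lambda)=\{\hat f\}$. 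This shows that $\hat f$ is the unique optimal state, and in particular $\hat f(F)=\lambda$.

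\textbf{Step 2: the correlation statements.} The commuting operator correlations are exactly the $p_f(a,b|x,y)=f(m_a^x\otimes n_b^y)$ with $f\in\mcS$, and for these the game value is $f(F)$. So $\lambda$ is the optimal commuting operator value, and $\hat p=p_{\hat f}$ attains it.

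Now suppose a correlation $p$ attains $\lambda$. Choose any $f\in\mcS$ realizing $p$. Then $f(F)=\lambda$, so Step 1 gives $f=\hat f$ and hence $p=\hat p$. This proves that $\hat p$ is the unique optimal correlation.

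Likewise, if $g\in\mcS$ realizes $\hat p$, then $g(F)$ equals the score of $\hat p$, which is $\lambda$. So $g\in\mcS_\lambda=\{\hat f\}$, and $\hat f$ is the unique state realizing $\hat p$.

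\textbf{Main obstacle.} The only delicate point is the face argument in Step 1, which is what lets us pass from "pure optimal states" to "all optimal states." The rest is bookkeeping. One should also note that $\mcS_\lambda$ is nonempty, which follows from compactness together with the continuity of $f\mapsto f(F)$.
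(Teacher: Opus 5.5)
Your proposal is correct and takes the same route as the paper's proof. The optimal set is a nonempty weak-$*$ compact face of $\mcS$, its extreme points are pure and therefore equal $\hat f$, Krein--Milman collapses it to $\{\hat f\}$, and the correlation claims follow because any state realizing an optimal correlation (in particular $\hat p$) attains $\lambda$. You are also right to say explicitly that this last step needs $F$ to lie in the span of the elements $m_a^x\otimes n_b^y$; the paper relies on this assumption only implicitly.
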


\begin{proof}
    The maximizing states of $F$ form a nonempty compact face $\mcF\subseteq \mcS$ (in the weak$^*$ topology). Moreover, each extreme point of $\mcF$ is pure, and by our hypothesis, equals $\hat{f}$. By the Krein-Milman theorem, $\mcF$ is the closed convex hull of its extreme points, hence $\mcF=\{\hat{f}\}$. Since every optimal correlation is induced by a state in $\mcF$, every optimal correlation equals $\hat{p}$. Conversely, any state realizing $\hat{p}$ obtains $\lambda$, hence it belongs to $\mcF$, and therefore equals $\hat{f}$.
\end{proof}

\begin{remark}
    In \cref{prop:optimal_state}, observe that commuting operator correlations are particular subsets of the degree 2 moments of the elements of the states $f\in \mcS$. Otherwise, there is nothing special about the choice of $C^*$-algebra $\mcA$ in the claim. Indeed, the unique state on $\mcS(\mcA)$ conclusion holds if $F$ is replaced by any self-adjoint element of a $C^*$-algebra $\mcA$. In particular, if we used the PVM algebra instead, we can conclude that $\hat{f}$ is the unique projective state obtaining $F$.
\end{remark}

It follows that establishing a commuting operator self-test from a nonlocal game or Bell scenario, involves finding an \emph{ideal or optimal} model $\hat{f}$, and then proving that all states attaining the optimal value are equal to $\hat{f}$. A useful technique employed here, is to find a sum-of-squares decomposition for quantity $\lambda-F$ inside of the POVM (or PVM) algebra to derive relations on the elements in the case where $f(F)=\lambda$ for all GNS representations.

The following lemma can be seen as a certain generalization of \cite[Lemma 5.1]{PSZZ}. Not only by going from finite to infinite dimensions, but extending beyond the synchronous case to the locally cyclic case. In particular, as we do not know the full extent to which PVM self-testing suffices for POVM self-testing, see \cite{BCKLMNS} for a comparison of the definitions. 

\begin{lemma}\label{lem:projective_to_POVM}
    Let $p$ be a correlation in $C_{qc}(X,Y,A,B)$. If there is a unique state $f$ on $\mcA_{\mathrm{PVM}}^{(X,A)}\otimes_{\max} \mcA_{\mathrm{PVM}}^{(Y,B)}$ realizing $p$ with GNS representation $(\pi_f,H_f,|\Omega_f\rangle)$, and $f$ is locally cyclic
    \begin{equation*}
        \overline{\pi_f(\mcA_\mathrm{PVM}^{(X,A)}\otimes \mcI)|\Omega_f\rangle}=\overline{\pi_f(\mcI\otimes\mcA_\mathrm{PVM}^{(Y,B)})|\Omega_f\rangle}=H_f.
    \end{equation*}
    Then every cyclic commuting POVM model realizing $p$ is a PVM model. Consequently, the unique state realizing $p$ is $f\circ q$, where \begin{equation*}
        q:\mcA_{\mathrm{POVM}}^{(X,A)}\otimes_{\max} \mcA_{\mathrm{POVM}}^{(Y,B)}\onto \mcA_{\mathrm{PVM}}^{(X,A)}\otimes_{\max} \mcA_{\mathrm{PVM}}^{(Y,B)}
    \end{equation*}
    is the quotient map.
\end{lemma}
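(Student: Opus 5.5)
The plan is to dilate an arbitrary cyclic POVM model to a commuting PVM model with the same correlation, and to use the uniqueness of $f$ to identify that dilation with the GNS model of $f$. Local cyclicity should then force the dilation to be trivial on the original space, so the original POVMs are already projective. The step I am least sure of is this last one: it rests on a compression identity that I have not verified (see the third paragraph).

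\emph{Step 1: dilate Alice's side.} Let $(H,\{M_a^x\},\{N_b^y\},|\psi\rangle)$ be a cyclic commuting POVM model realizing $p$. For each $x$, $e_a\mapsto M_a^x$ is a unital positive, hence ucp, map on $\C^{A}$. Since $\mcA_{\mathrm{PVM}}^{(X,A)}$ is the unital free product of the algebras $C(A)$ over $x\in X$, Boca's theorem gives a ucp map $\phi_A:\mcA_{\mathrm{PVM}}^{(X,A)}\to \mbB(H)$ with $\phi_A(p_a^x)=M_a^x$. The map $\phi_A$ sends reduced alternating words to the corresponding products of the $M$'s. These products commute with every $N_b^y$, so each $N_b^y$ lies in $\phi_A(\mcA_{\mathrm{PVM}}^{(X,A)})'$.

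Take the minimal Stinespring dilation $(K,\rho,V)$ of $\phi_A$. By Arveson's commutant lifting there is a unital $*$-homomorphism $T\mapsto \rho'(T)$ from $\phi_A(\mcA_{\mathrm{PVM}}^{(X,A)})'$ into $\rho(\mcA_{\mathrm{PVM}}^{(X,A)})'$ with $\rho'(T)V=VT$. Then $\{\rho'(N_b^y)\}$ are POVMs commuting with the projections $\rho(p_a^x)$. Moreover
\begin{equation*}
\langle V\psi|\rho(p_a^x)\rho'(N_b^y)|V\psi\rangle=\langle\psi|V^*\rho(p_a^x)V N_b^y|\psi\rangle=p(a,b|x,y).
\end{equation*}

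\emph{Step 2: dilate Bob's side.} Repeat Step 1 with the roles swapped: dilate Bob's POVMs $\rho'(N_b^y)$ by an isometry $W$, and lift Alice's projections $\rho(p_a^x)$ through the commutant, where they remain projections. This yields a commuting PVM model with vector $\xi=WV\psi$ realizing $p$. By the hypothesis, the state it induces on $\mcA_{\mathrm{PVM}}^{(X,A)}\otimes_{\max}\mcA_{\mathrm{PVM}}^{(Y,B)}$ equals $f$. So the cyclic subspace $K_0$ of $\xi$, with the restricted projections, is unitarily the GNS model of $f$, and in particular is locally cyclic for each side.

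\emph{Step 3: show $M_a^x$ is a projection (main obstacle).} It suffices to prove $\langle\psi|(M_a^x)^2|\psi\rangle=\langle\psi|M_a^x|\psi\rangle$ for all vectors of the form $\psi'=$ (word in $N$'s)$\,\psi$, and then use cyclicity of $H$. The analogous identity for Bob follows by symmetry.

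Write $\widetilde P$ for the lifted Alice projections and $\widetilde Q$ for Bob's dilated projections on the final space. I first try the following. By local cyclicity of $f$, the vector $\widetilde P_a^x\xi$ is a norm limit of vectors $c_n\xi$ with $c_n\in C^*(\{\widetilde Q\})$. Compressing by the isometries should give
\begin{equation*}
(WV)^*\,c_n\,WV\psi\ \approx\ (\text{elements built from }N\text{'s})\,\psi ,
\end{equation*}
because $\widetilde Q$-words compress to Boca words in the $\rho'(N)$'s, and $\rho'(T)V=VT$. I have not checked that this compression argument really goes through. If it does, it places $\rho(p_a^x)V\psi$ inside $VH$, hence $V^*\rho(p_a^x)V\psi=\rho(p_a^x)V\psi$. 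That gives $\|M_a^x\psi\|^2=\langle\psi|M_a^x|\psi\rangle$, and the same argument applies with $\psi$ replaced by $N$-words applied to $\psi$, since these commute with everything relevant.

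Making this compression step rigorous, where the $\widetilde Q$-words do not obviously preserve the range of the dilation, is the main obstacle. The fallback is to apply local cyclicity on the Alice side instead: write $\widetilde Q_b^y\xi$ as a limit of $\widetilde P$-words applied to $\xi$. The $\widetilde P$ intertwine exactly with $W$ ($W\rho(p)=\widetilde P W$), so this version compresses cleanly and yields Bob's projectivity first. A second, Alice-side argument (for instance dilating Bob first, then Alice) would then give Alice's.

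\emph{Step 4: conclude.} Once every cyclic POVM model realizing $p$ is PVM, every state $g$ on $\mcA_{\mathrm{POVM}}^{(X,A)}\otimes_{\max}\mcA_{\mathrm{POVM}}^{(Y,B)}$ realizing $p$ has a GNS representation that factors through $q$. Hence $g=g'\circ q$ for a state $g'$ on the PVM algebra realizing $p$. By the hypothesis $g'=f$, so $g=f\circ q$.
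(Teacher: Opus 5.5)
Your Steps 1, 2 and 4 are sound, with one correction: Boca's map is multiplicative only on alternating products of \emph{centered} elements, not on arbitrary words in the $p_a^x$. What you actually need, and what is true, is that its range lies in $C^*(\{M_a^x\})$, so Bob's effects lie in its commutant.

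\textbf{The gap is Step 3, which is the whole content of the lemma.} You leave the compression unverified, and the reduction you set up would not finish the proof even if the compression worked. Proving $(M_a^x-(M_a^x)^2)\psi'=0$ for $\psi'$ ranging over Bob-words applied to $\psi$ gives nothing beyond $(M_a^x-(M_a^x)^2)\psi=0$. The reason is that $M_a^x-(M_a^x)^2$ commutes with Bob's algebra, and you do not know $\overline{C^*(\{N_b^y\})\psi}=H$: the POVM model is only cyclic for the joint algebra. What you need is vanishing on $\overline{C^*(\{M_a^x\})\psi}$, after which commutation with Bob yields a dense set. But Alice's own words do not commute with $M_a^x$. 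Similarly, compressing by $(WV)^*$ as in your display only gives $M_a^x\psi\in\overline{C^*(\{N_b^y\})\psi}$, which does not imply $\rho(p_a^x)V\psi\in VH$.

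\textbf{The missing idea} is an invariant-subspace statement: local cyclicity forces the cyclic subspace of the dilation into the range of the dilating isometry. In your construction this goes as follows.
\begin{enumerate}
\item Put $K_1=\overline{\rho(\mcA_{\mathrm{PVM}}^{(X,A)})V\psi}$. Since polynomials in the $\widetilde P$ intertwine with $W$, $\overline{C^*(\{\widetilde P\})\xi}=WK_1$.
\item Bob-side local cyclicity gives $WK_1\subseteq\overline{C^*(\{\widetilde Q\})\xi}$.
\item Apply $W^*$ alone. Since $W^*\sigma(\cdot)W=\phi_B(\cdot)$ takes values in $\rho'(C^*(\{N_b^y\}))$ and $\rho'(T)V=VT$, you get $K_1\subseteq V\,\overline{C^*(\{N_b^y\})\psi}\subseteq VH$.
\item Then $H_1=V^*K_1$ is closed and contains $\psi$. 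For $h\in H_1$ one has $\rho(p_a^x)Vh=VM_a^xh$.
\item Hence $H_1$ is invariant under Alice's effects, and $(M_a^x)^2=M_a^x$ on $H_1\supseteq\overline{C^*(\{M_a^x\})\psi}$. This closes Alice's side.
\end{enumerate}
Bob's side follows in the same construction. Alice-side local cyclicity gives $\overline{C^*(\{\widetilde Q\})\xi}\subseteq WK$, and the analogous pull-back along $V$ uses $\rho'(N_b^y)V=VN_b^y$. So reversing the order of the dilations is unnecessary.

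\textbf{Comparison with the paper.} The paper reaches the same conclusion in one stroke, using a single Naimark dilation on $H\otimes K_A\otimes K_B$ in which Alice's projections act trivially on $K_B$ and Bob's on $K_A$. The two local-cyclicity conditions then place the cyclic subspace inside $(H\otimes K_A\otimes\C|\xi_B\rangle)\cap(H\otimes\C|\xi_A\rangle\otimes K_B)=H\otimes\C|\xi_A\rangle\otimes\C|\xi_B\rangle$, the range of the dilation isometry. Cyclicity of the original model makes this an equality. Every dilated projection then leaves that range invariant, so its compression is a projection.
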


Before we give the proof we recall the following result.

\begin{lemma}[Naimark dilation for commuting operators]\label{lem:naimark}
Let $H$ be a Hilbert space and let $\{\{M_a^x:a\in A\},x\in X\}$, and $\{\{N_b^y:b\in B\},y\in Y\}$ be POVMs on $H$. Suppose
\begin{equation*}
    [M_a^x,N_b^y]=0\quad\text{for all $x,y,a,b\in X\times Y\times A \times B$}.
\end{equation*}
Write $\mcM_A=C^*(\{M_a^x\})$ and $\mcM_B=C^*(\{N_b^y\})$ for the local measurement algebras. There exist finite-dimensional Hilbert spaces $K_A$, and $K_B$, unit vectors $|\xi_A\rangle\in K_A$, $|\xi_B\rangle\in K_B$, and PVMs $\{\{P_a^x:a\in A\},x\in X\}$, and $\{\{Q_b^y:b\in B\},y\in Y\}$ on $H\otimes K_A\otimes K_B:=\wtd{H}$, such that
\begin{equation*}
    P_a^x\in \mcM_A\otimes \mbB(K_A)\otimes \Id_{K_B}\quad\text{and } Q_b^y\in \mcM_B\otimes \Id_{K_A}\otimes \mbB(K_B).
\end{equation*}
Moreover, for the isometry $W:H\to\wtd{H}$, with $W|v\rangle=|v\rangle\otimes |\xi_A\rangle\otimes |\xi_B\rangle$, we have that
\begin{equation*}
    W^*P_a^xW=M_a^x, W^*Q_b^yW=N_b^y,\text{ and } W^*P_a^xQ_b^yW=M_a^xN_b^y
\end{equation*}
$x,y,a,b\in X\times Y\times A \times B$. In particular, this implies $[P_a^x,Q_b^y]=0$, for all $x,y,a,b\in X\times Y\times A \times B$. Lastly, one may take $\dim(K_A)=|A|+1$, and $\dim(K_B)=|B|+1$.
\end{lemma}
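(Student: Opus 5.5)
The plan is the standard Naimark construction, done separately on each side and then combined. Because $[M_a^x,N_b^y]=0$, it is natural to dilate Alice's operators inside $\mcM_A\otimes\mbB(K_A)$ and Bob's inside $\mcM_B\otimes\mbB(K_B)$. Each dilation touches only its own tensor factor and its own algebra, so the two dilations should commute automatically.

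First, Alice's side. Put $K_A=\C^{A}\oplus\C=\C^{|A|+1}$ with basis $\{|a\rangle\}_{a\in A}\cup\{|\xi_A\rangle\}$. For each $x$ let $V_x:H\to H\otimes K_A$ be the isometry
\begin{equation*}
V_x|v\rangle=\sum_{a}(M_a^x)^{1/2}|v\rangle\otimes|a\rangle,
\end{equation*}
which satisfies $V_x^*(\Id\otimes|a\rangle\langle a|)V_x=M_a^x$. I will write $V_x(\Id_H\otimes\langle\xi_A|)$ as the first column of a unitary $U_x\in\mcM_A\otimes\mbB(K_A)$, that is, $U_x(\Id\otimes|\xi_A\rangle)=V_x$. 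A concrete route is the unitary dilation of the isometry $V_x\oplus 0$, or equivalently completing a partial isometry between $H\otimes|\xi_A\rangle$ and $\mathrm{ran}V_x$ inside $H\otimes K_A$.

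I expect this step to need the most care: the completion must have all matrix entries in $\mcM_A$ (more precisely in $C^*(\{M_a^x\}_a)+\C\Id$), not merely in $\mbB(H)$. What I would try first is the Halmos-type formula
\begin{equation*}
U_x=\begin{pmatrix} V_x\xi & \Id-V_x V_x^*\\ \ast & \ast\end{pmatrix},
\end{equation*}
or more simply the self-adjoint unitary $U_x=\Id-2R_x$, where $R_x$ is the projection onto the span of $H\otimes|\xi_A\rangle$ and $\mathrm{ran}\,V_x$. This $R_x$ can be written using square roots of $M_a^x$ and $\Id-\sqrt{M^x_a}$-type functional calculus, so it lies in the correct algebra. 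The extra dimension in $K_A$ (namely $|A|+1$) is exactly what makes room for such a completion.

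Next, set $P_a^x=U_x(\Id_H\otimes|a\rangle\langle a|)U_x^*\otimes\Id_{K_B}$ for $a\in A$, and add the leftover projection $U_x(\Id\otimes|\xi_A\rangle\langle\xi_A|)U_x^*$ to one chosen outcome $a_0$ so that the $P_a^x$ sum to the identity. The compression identity then follows from
\begin{equation*}
(\Id\otimes\langle\xi_A|)U_x^*(\Id\otimes|a\rangle\langle a|)U_x(\Id\otimes|\xi_A\rangle)=V_x^*(\Id\otimes|a\rangle\langle a|)V_x=M_a^x,
\end{equation*}
provided the leftover term compresses to $0$. That holds if I arrange $V_x$ to be orthogonal to $H\otimes|\xi_A\rangle$, which is automatic because $V_x$ maps into $H\otimes\C^A$. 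To keep the $\max$-tensor structure simple, I would swap the roles in the definition and take $P_a^x=U_x^*(\Id\otimes|a\rangle\langle a|)U_x$.

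Bob's side is handled symmetrically with $K_B$ and $Q_b^y$. Every entry of $P_a^x$ lies in $\mcM_A$ and acts as the identity on $K_B$, while every entry of $Q_b^y$ lies in $\mcM_B$ and acts as the identity on $K_A$, so $[P_a^x,Q_b^y]=0$ follows entrywise from $[\mcM_A,\mcM_B]=0$.

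Finally, for the joint compression, $W^*P_a^xQ_b^yW$ factors: the $K_A$ part reduces to $(\Id\otimes\langle\xi_A|)P_a^x(\Id\otimes|\xi_A\rangle)$, and likewise for $K_B$. The reason is that the two operators act on different auxiliary tensor factors and their $H$-entries commute, so the product of the compressions is $M_a^xN_b^y$.
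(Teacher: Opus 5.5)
Your architecture is the standard one, which the paper does not reprove but cites from the literature. You dilate each POVM inside $\mcM_A\otimes\mbB(K_A)$ (resp.\ $\mcM_B\otimes\mbB(K_B)$), obtain $[P_a^x,Q_b^y]=0$ entrywise from $[\mcM_A,\mcM_B]=0$, and compress one auxiliary factor at a time. The closing bookkeeping is fine: the projections sum to $\Id$, the leftover term compresses to $0$ because $V_x$ lands in $H\otimes\C^A$, and $W^*P_a^xQ_b^yW=M_a^xN_b^y$.

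The gap is the step you flag as the crux. You need an explicit unitary $U_x$ with entries in $C^*(\{M_a^x\}_a)$ and $U_xW_0=V_x$, where $W_0=\Id_H\otimes|\xi_A\rangle$, and you never produce one. The Halmos matrix is left with $\ast$ entries. The ``simpler'' alternative fails: if $R_x$ projects onto the span of $H\otimes|\xi_A\rangle$ and $\mathrm{ran}\,V_x$, then $\Id-2R_x$ acts as $-\Id$ on $H\otimes|\xi_A\rangle$. So $U_xW_0=-W_0$, and every compression $W_0^*U_x^*(\Id\otimes|a\rangle\langle a|)U_xW_0$ with $a\in A$ equals $0$ rather than $M_a^x$.

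The repair is to reflect across the difference rather than the span. Since $W_0$ and $V_x$ are isometries with orthogonal ranges, $(W_0-V_x)^*(W_0-V_x)=2\Id$. Hence $\tfrac12(W_0-V_x)(W_0-V_x)^*$ is a projection, and
\begin{equation*}
U_x=\Id-(W_0-V_x)(W_0-V_x)^*=\Id_H\otimes\Pi_A-V_xV_x^*+V_xW_0^*+W_0V_x^*,\qquad \Pi_A=\sum_{a\in A}|a\rangle\langle a|,
\end{equation*}
is a self-adjoint unitary with $U_xW_0=V_x$. Its matrix entries are $\delta_{ab}\Id-(M_a^x)^{1/2}(M_b^x)^{1/2}$, $(M_a^x)^{1/2}$ and $0$. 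All of these lie in $C^*(\{M_a^x\}_a)$, since $(M_a^x)^{1/2}$ is a norm limit of polynomials in $M_a^x$. Your Halmos matrix also works if completed with bottom row $(0,\,-V_x^*)$, with rows and columns ordered as in your display.

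Finally, switching from $U_x(\cdot)U_x^*$ to $U_x^*(\cdot)U_x$ is required for correctness, not for the $\max$-tensor structure. Your compression identity uses $U_xW_0=V_x$, which is exactly what $W_0^*U_x^*(\cdot)U_xW_0$ sees. With the self-adjoint $U_x$ above, the distinction disappears.
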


The proof is well known, see for instance \cite[Proof of Theorem 5.3]{paulsen2016estimating}

\begin{proof}[Proof of \cref{lem:projective_to_POVM}]
Let $S=(H,\{M_a^x\},\{N_b^y\},|\Omega\rangle)$ be a cyclic commuting POVM model realizing $p$. Define the local measurement algebras $\mcM_A=C^*(\{M_a^x\})$ and $\mcM_B=C^*(\{N_b^y\})$. Now, the Naimark dilation \cref{lem:naimark} $\wtd{S}$ of $S$ is a PVM model $(\tilde{H},\{P_a^x\},\{Q_b^y\},W|\Omega\rangle)$ realizing $p$.
Let \begin{equation*}
    K=\overline{C^*(\{P_a^x\},\{Q_b^y\})W|\Omega\rangle}.
\end{equation*}
Then $\wtd{S}|_K$ is a cyclic projective commuting model for $p$ with induced state $f$. By uniqueness of $f$ and the locally cyclic condition for $S_f$ we have
\begin{equation*}
    \overline{C^*(\{P_a^x\})W|\Omega\rangle}=
    \overline{C^*(\{Q_b^y\})W|\Omega\rangle}=K.
\end{equation*}
From \cref{lem:naimark} we have $W|\Omega\rangle=|\Omega\rangle\otimes |\xi_A\rangle \otimes |\xi_B\rangle \in H\otimes K_A\otimes K_B$. So, Alice's (resp.~Bob's) dilated operator act trivially on $K_B$ (resp.~$K_A$), and we have that
\begin{equation*}
K\subseteq (H\otimes K_A \otimes \C|\xi_B\rangle)\cap (H\otimes \C|\xi_A\rangle\otimes K_B)=WH    
\end{equation*}
Set $H'=W^*K\subseteq H$, and for $|v\rangle\in H'$, we observe that for any $a\in A$, $x\in X$, we have that $P_a^x|v\rangle$ and $Q_b^yW|v\rangle$ are both in $K$, hence
\begin{equation}
    M_a^x|v\rangle=W^*P_a^xW|v\rangle\in H', \text{ and }N_b^y|v\rangle=W^*Q_b^yW|v\rangle\in H'.
\end{equation}
Additionally, $W|\Omega\rangle\in K$, so $|\Omega\rangle\in H'$, and $H'$ is invariant under every POVM effect. Indeed, since the original model $S$ is cyclic, we have that $H'=H$, and thus $K=WH$. It follows, that $K$ is invariant under every dilated PVM measurement $\Pi$. That is, from $\Pi K\subseteq K$, we see that $W(W^*\Pi W)=\Pi W$ for all PVMs $\Pi$. In particular, letting $E=W^*\Pi W$ we see that $E^*=E$ is self-adjoint since $\Pi^*=\Pi$, and lastly the compression satisfies $E^2=(W^*\Pi W)^2=W^*\Pi^2 W=W^*\Pi W=E$ since $\Pi^2=\Pi$, and therefore every POVM effect $E$ is a projection. Since every original POVM in the model comes from such a compression, the model is projective.
\end{proof}

\subsection{Exact embezzlement of entanglement}

Let $|\varphi\rangle\in H_A\otimes H_B$. That is we consider any finite-dimensional non-product (pure) quantum state as the ``target'' of our embezzlement protocol. The setup is as follows: there are two, spatially separated parties, each with their own local Hilbert spaces $H_A$ and $H_B$, have access to a third resource space $\widehat{H}$, on which they can share a predetermined state. Their task, is starting from local registers in the state $|e_0\rangle$ in their local spaces $H_A$ and $H_B$, to create a target state on the joint register $H_A\otimes H_B$, by applying only local unitary operations $V_A$ on $H_A\otimes \widehat{H}$ and $V_B$ on $\widehat{H}\otimes H_B$ respectively.
 
 It is well-known that exact embezzlement is impossible when the spaces are all finite-dimensional \cite{van2003universal}. However, if we set $H_A=H_B$ to be finite-dimensional, and allow the resource space $\widehat{H}$ to be infinite-dimensional, and only demand that the operations $V_A\otimes \Id_{H_B}$ and $\Id_{H_A}\otimes V_B$ commute, then it was shown in \cite{cleve2017perfect} that exact embezzlement protocols exist for any state on $H_A\otimes H_B$.

 This commuting operator framework for embezzlement protocols is the one we will be using throughout the rest of the work. From now on assume $d\geq 2,$, and let $|\varphi\rangle=\sum_{i=0}^{d-1} \alpha_{i}|e_i\rangle \otimes |e_i\rangle$ such that $\sum_{i=0}^{d-1} \alpha_i^2=1$ in $\C^d\otimes \C^d$, be the (real) Schmidt decomposition of $|\varphi\rangle$

\begin{definition}\label{def:con_embezzlement}
An exact embezzlement of entanglement for $|\varphi\rangle \in \C^d\otimes \C^d$, is a pair of unitaries $V_A\in B(\C_d\otimes \widehat{H})$ and $V_B\in B(\widehat{H}\otimes \C^d)$, and a catalyst state $|\Psi\rangle \in \widehat{H}$ such that $[V_A\otimes \Id_{d},\Id_{d}\otimes V_B]=0$ and 

\begin{equation}\label{eqn:embezz_EPR}
(V_A\otimes \Id_d)(\Id_d\otimes V_B)\left(|e_0\rangle \otimes |\Psi\rangle\otimes |e_0\rangle\right)=\sum_{i=0}^{d-1}\alpha_i |e_i\rangle \otimes |\Psi\rangle\otimes |e_i\rangle.
\end{equation}
Up to rearrangement this is the following transformation:
\begin{equation}
    |e_0\rangle\otimes |e_0\rangle \otimes |\Psi\rangle \mapsto |\varphi\rangle \otimes |\Psi\rangle.
\end{equation}
\end{definition}
Notably, the catalyst remains unchanged. Nonetheless, the choice of shared resource state is vital to enabling the embezzlement transformation. Indeed, $|\Psi\rangle \in \widehat{H}$ is the state from which the entanglement in $|\varphi\rangle$ is ``embezzled''. For example when $d=2$ and $\alpha_{0}=\alpha_1=\frac{1}{\sqrt{2}}$. The work of \cite{cleve2017perfect} pioneered an operator algebraic way to view an exact embezzlement protocol using abstract states on $C^*$-algebras.

\begin{definition}\label{def:abs_embezzlement} 
    Let $\mcA$ and $\mcB$ be unital $C^*$-algebras and let $f$ be a state on $\mcA\otimes_{\max}\mcB$.  An \textbf{exact embezzlement protocol for $|\varphi\rangle$ via the state $f$} consists of elements $\mcR=(r_0,\ldots, r_{d-1})\in \mcA^d$ and $\mcT=(t_0,\cdots,t_{d-1})\in \mcB^d$ satisfying
    \begin{equation}\label{eqn:embez_1}
        \sum_{i=0}^{d-1} r_i^*r_i\leq \Id_\mcA,\quad \text{and}\quad \sum_{i=0}^{d-1} t_i^*t_i\leq \Id_\mcB,
    \end{equation}
    and 
    \begin{equation}\label{eqn:embez_2}
        f(r_i\otimes t_j)=\delta_{ij}\alpha_{i}, \quad \text{ for $0\leq i, j\leq d-1$.}
    \end{equation}
    We say that \textbf{$f$ exactly embezzles $\ket{\phi}$} if a pair $(\mcR,\mcT)$ exists for which \cref{eqn:embez_1} and \cref{eqn:embez_2} hold.
\end{definition}

\begin{proposition}\label{prop:equiv_def}
    A unit vector $|\varphi\rangle\in\C^d\otimes\C^d$ admits an exact embezzlement protocol in the operator formulation (\cref{def:abs_embezzlement}) if and only if it admits one in the abstract-state formulation (\cref{def:con_embezzlement}).
\end{proposition}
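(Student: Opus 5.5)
We prove the two directions separately. Throughout, we identify $\C^d\otimes\widehat{H}\cong \widehat{H}^{\oplus d}$, so that $V_A$ becomes a $d\times d$ block matrix $(V_A^{ij})_{i,j}$ with entries $V_A^{ij}\in\mbB(\widehat{H})$, where $V_A^{ij}=(\langle e_i|\otimes \Id)V_A(|e_j\rangle\otimes \Id)$. We define $V_B^{ij}$ in the same way.

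\emph{From \cref{def:con_embezzlement} to \cref{def:abs_embezzlement}.} Suppose $V_A,V_B,|\Psi\rangle$ is a concrete protocol. Since $V_A\otimes\Id_d$ and $\Id_d\otimes V_B$ commute on $\C^d\otimes\widehat{H}\otimes\C^d$, comparing matrix entries gives $[V_A^{ij},V_B^{kl}]=0$ for all $i,j,k,l$. Set $\mcA=C^*(\{V_A^{ij}\})$ and $\mcB=C^*(\{V_B^{kl}\})$, both inside $\mbB(\widehat{H})$. Because these two algebras commute, the universal property of $\otimes_{\max}$ gives a representation $\pi:\mcA\otimes_{\max}\mcB\to\mbB(\widehat{H})$. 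Define the state $f(\cdot)=\langle\Psi|\pi(\cdot)|\Psi\rangle$, and put
\[
r_i=(V_A^{i0})^*, \qquad t_j=(V_B^{j0})^*.
\]
The first column of the unitary $V_A$ is an isometry, so $\sum_i (V_A^{i0})^*V_A^{i0}=\Id$. We only need the inequality $\sum_i r_i^*r_i=\sum_i V_A^{i0}(V_A^{i0})^*\le\Id$. This holds because $\sum_i V_A^{i0}(V_A^{i0})^*$ is the compression of $V_AV_A^*=\Id$ to a product of column/row projections, and is therefore a contraction. For instance, $\sum_i V_A^{i0}(V_A^{i0})^* = C C^*$ where $C$ is a row contraction obtained from the first column, so its norm is at most $\|C\|^2\le1$. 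Next, we take the inner product of \cref{eqn:embezz_EPR} with $|e_i\rangle\otimes|\Psi\rangle\otimes|e_j\rangle$. The left-hand side gives $\langle\Psi|V_A^{i0}V_B^{j0}|\Psi\rangle$ and the right-hand side gives $\delta_{ij}\alpha_i$. Hence
\[
f(r_i\otimes t_j)=\overline{\langle\Psi|V_A^{i0}V_B^{j0}|\Psi\rangle}=\delta_{ij}\alpha_i,
\]
where we use that the $\alpha_i$ are real and that $V_A^{i0}$ commutes with $V_B^{j0}$.

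\emph{From \cref{def:abs_embezzlement} to \cref{def:con_embezzlement}.} Take the GNS triple $(\pi,H,|\Omega\rangle)$ of $f$, set $\widehat{H}=H$ and $|\Psi\rangle=|\Omega\rangle$, and write $R_i=\pi(r_i\otimes 1)$ and $T_j=\pi(1\otimes t_j)$. These satisfy $\sum_i R_i^*R_i\le \Id$, $\sum_j T_j^*T_j\le\Id$, and $[R_i,T_j]=0$. The column operator $(R_0^*,\dots,R_{d-1}^*)^T$ is a contraction, not necessarily an isometry, so we must complete it to a unitary. We use the standard Halmos-type unitary dilation, but we have to make sure it still commutes with Bob's side. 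To this end, we add the defect $D_A=(\Id-\sum_i R_i^*R_i)^{1/2}$, which lies in the $C^*$-algebra generated by $\pi(\mcA\otimes1)$ and so commutes with every $T_j$. This gives an isometric column $(R_0^*,\dots,R_{d-1}^*,D_A)$.

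We expect the main obstacle to be obtaining a genuine unitary on $\C^d\otimes\widehat{H}$ without enlarging the local dimension $d$. We would try the following. First enlarge the catalyst to $\widehat{H}'=\widehat{H}\otimes\ell^2(\mathbb N)$, keeping $|\Psi\rangle\otimes|0\rangle$ as the catalyst. Then, using Alice's half of the auxiliary tensor factor, absorb the defect into a unilateral-shift construction in the style of Cleve–Liu–Paulsen. Concretely, map $|e_0\rangle\otimes|v\rangle$ to $\sum_i|e_i\rangle\otimes R_i^*|v\rangle$ plus a defect term carried into orthogonal auxiliary levels, and extend the resulting isometry to a unitary on a space where both the isometry and its complement have infinite multiplicity. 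Because Alice's and Bob's auxiliary factors are separate tensor legs, all of Alice's operators lie in $\pi(\mcA)\otimes\mbB(\ell^2)\otimes\Id$ and all of Bob's lie in $\pi(\mcB)\otimes\Id\otimes\mbB(\ell^2)$, so they commute.

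It remains to check \cref{eqn:embezz_EPR}. Its component along $|e_i\rangle\otimes|e_j\rangle$ equals $R_i^*T_j^*|\Omega\rangle$ plus defect terms. The norm of the total output is $1$, and the inner product of the output with the target $\sum_i\alpha_i|e_i\rangle\otimes|\Omega\rangle\otimes|e_i\rangle$ equals $\sum_i\alpha_i\overline{f(r_i\otimes t_i)}=\sum_i\alpha_i^2=1$. Two unit vectors with inner product $1$ are equal, so the output is exactly the target. In particular, all defect components vanish on this vector.
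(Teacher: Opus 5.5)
Your proof has the adjoints in the wrong place in both directions, and as written two key claims are false. \cref{def:abs_embezzlement} requires $\sum_i r_i^*r_i\le\Id$, i.e.\ that the \emph{column} $(r_0,\dots,r_{d-1})^T$ be a contraction. With $r_i=(V_A^{i0})^*$ you need $\sum_i V_A^{i0}(V_A^{i0})^*\le\Id$. This is not a compression of $V_AV_A^*$: the $(0,0)$ block of $V_AV_A^*$ is the first-\emph{row} sum $\sum_j V_A^{0j}(V_A^{0j})^*$. The isometric first column only gives the bound $d\,\Id$. For instance, the column $(|0\rangle\langle 0|,\,|0\rangle\langle 1|)^T$ on $\C^2$ is isometric, hence the first column of a unitary, yet $\sum_i V^{i0}(V^{i0})^*=2|0\rangle\langle 0|$. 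The converse fails for the same reason, already on the paper's own example. In \cref{cor:embez_protocol}, $r_i=(V^A_{i+1})^*|_K$ with Cuntz isometries $V^A_{i+1}$, so $\sum_i R_iR_i^*=2\,\Id$. Your column $(R_0^*,\dots,R_{d-1}^*)^T$ then has norm $\sqrt2$. Consequently $(R_0^*,\dots,R_{d-1}^*,D_A)^T$ with $D_A=(\Id-\sum_iR_i^*R_i)^{1/2}$ is not an isometry, and no unitary can have it as its first column. The fix is to remove every adjoint, as the paper does: take $r_i=V_A^{i0}$ and $t_j=V_B^{j0}$. Then $\sum_i r_i^*r_i=\Id$ and $f(r_i\otimes t_j)=\langle\Psi|V_A^{i0}V_B^{j0}|\Psi\rangle=\delta_{ij}\alpha_i$ directly. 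In the converse, send $|e_0\rangle\otimes|v\rangle\mapsto\sum_i|e_i\rangle\otimes R_i|v\rangle$, using the isometric column $(R_0,\dots,R_{d-1},D_A)^T$.

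Even after this fix, the unitary completion is only a plan, and the step that matters is unjustified. Extending an isometry to a unitary on a space of infinite multiplicity is a Hilbert-space statement, and an arbitrary such extension will not commute with Bob's operators. Your claim that Alice's operators lie in $\pi(\mcA\otimes1)\otimes\mbB(\ell^2)\otimes\Id$ is precisely what needs proof. It can be repaired via Murray--von Neumann equivalence in $\pi(\mcA\otimes1)''$ tensored with $\mbB(\ell^2)$, but no shift or infinite ancilla is needed. The missing tool is the Halmos dilation the paper invokes through \cite[Corollary 2.15]{harris2026self}. Let $R\in M_d(\pi(\mcA\otimes1))$ be the block contraction with first column $(R_0,\dots,R_{d-1})^T$ and zeros elsewhere. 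Then
\[
U_A=\begin{pmatrix} R & (\Id-RR^*)^{1/2}\\ (\Id-R^*R)^{1/2} & -R^*\end{pmatrix}
\]
is a unitary on $\C^d\otimes H\otimes\C^2$ whose entries all lie in the $C^*$-algebra $\pi(\mcA\otimes1)$, hence commute with the entries of Bob's analogous $U_B$. Placing both $\C^2$ ancillas in the catalyst, $|\Psi'\rangle=|\Omega\rangle\otimes|0\rangle\otimes|0\rangle$, leaves the local dimension $d$ untouched, so your worry on that point is moot. With these repairs your closing step is correct, and it is a cleaner route than the paper's. The paper first proves $\pi_f(r_i\otimes t_j)|\Omega_f\rangle=\delta_{ij}\alpha_i|\Omega_f\rangle$ by Cauchy--Schwarz, and then implicitly needs the dilation's defect terms to vanish on the relevant vectors. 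Your observation that output and target are unit vectors with overlap $\sum_i\alpha_if(r_i\otimes t_i)=1$ disposes of all defect components at once.
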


\begin{proof}
         To see this, let $(H_f,\pi_f,|\Omega_f\rangle)$ be the GNS representation of $f$. By Cauchy-Schwarz we have
         \begin{equation*}
             1=\sum_{i=0}^{d-1} \alpha_{i}^2\leq \sum_{i,j=0}^{d-1}\|\pi_f(r_i\otimes t_j)|\Omega_f\rangle\|^2=f\left((\sum_{i=0}^{d-1} r_i^*r_i)\otimes (\sum_{i=0}^{d-1} t_i^*t_i)\right)\leq 1,
         \end{equation*}
         from which it follows that $\pi_f(r_i\otimes t_j)|\Omega_f\rangle=\delta_{ij}\alpha_{i}|\Omega_f\rangle$ for $0\leq i,j\leq d-1$.
         
         Now, define $R=\sum_{i,k}E_{ik}\otimes R_{ik}$, and $T=\sum_{j,\ell}T_{j\ell}\otimes E_{j\ell}$, with $R_{i0}=\pi_f(r_i)$ and $T_{j0}=\pi_f(t_j)$ for $0\leq i,j\leq d-1$, setting the other columns to zero. It follows that
\begin{align*}
        (R\otimes \Id_d)(\Id_d\otimes T)\left(|e_0\rangle \otimes |\Omega_f\rangle\otimes |e_0\rangle\right)&=\sum_{i,j=0}^{d-1} |e_i\rangle \otimes \pi_f(r_i\otimes t_j)|\Omega_f\rangle\otimes |e_j\rangle\\
        &=\sum_{i,j=0}^{d-1}\alpha_i |e_i\rangle \otimes |\Omega_f\rangle\otimes \delta_{ij}|e_j\rangle\\
        &=\sum_{i=0}^{d-1}\alpha_i |e_i\rangle \otimes |\Omega_f\rangle\otimes |e_i\rangle.
\end{align*}
Which after rearrangement gives us the embezzlement transformation
\begin{equation}
    |e_0\rangle\otimes |e_0\rangle \otimes |\Omega_f\rangle \mapsto|\varphi\rangle \otimes |\Omega_f\rangle,
\end{equation}
as desired. Finally, one can check that the operators $(R\otimes \Id_d)$ and $(\Id_d\otimes T)$ commute.

The observant reader will note that $R$ and $T$ are only required to be contractions. To obtain unitaries, one can start with the contractions and employ the Halmos dilation trick to obtain commuting \cite[Corollary 2.15]{harris2026self}, fully recovering the conventional triple $(V_A,V_B,|\Psi\rangle)$ required for an exact embezzlement of entanglement protocol.

For the other direction, one can check that taking $\mcR=(R_0,\ldots, R_{d-1})$ (resp. $\mcT=(T_0,\ldots, T_{d-1})$) to be the first column of the matrix $V_A\otimes \Id_d$ (resp. $\Id_\otimes V_B$) along with the state $|\Psi\rangle$ gives an abstract state embezzlement protocol. Indeed, $\langle \Psi|R_iT_j|\Psi\rangle=\delta_{ij}\alpha_{i}$ for $0\leq i,j\leq d-1$, is a state on $C^*(\mcR)\otimes_{\max}C^*(\mcT)$ and since $V_A$ (resp.~$V_B$) are unitary we have $\sum_i R_i^*R_i=\Id$ (resp.~$\sum_j T_j^*T_j=\Id$), completing the proof.
\end{proof}

 Several works have built upon exact embezzlement of entanglement in the commuting operator framework, such as \cite{liu2025embezzlement,liu2025embezzlement,van2025multipartite}, and recently \cite{harris2026self}. The work of Harris made a key observation regarding the properties of the abstract state $f$. In particular, concerning the structure of the $C^*$-algebras $\mcA$ and $\mcB$ in the protocol. Recall the Cuntz algebra from the preliminaries. 

\begin{theorem}[\cite{harris2026self}, Theorem 3.4]\label{thm:unique_state_embezzlement}
    There is a unique state $f:\mcO_d\otimes_{\max} \mcO_d\to \C$ such that $f(v_i\otimes w_j)=\delta_{ij}\alpha_i$ for each $i,j=0,\ldots,d-1$.
 \end{theorem}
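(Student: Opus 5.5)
The plan is to work in the GNS representation of an arbitrary state $f$ on $\mcO_d\otimes_{\max}\mcO_d$ satisfying $f(v_i\otimes w_j)=\delta_{ij}\alpha_i$. We use the embezzlement condition to derive \emph{state-dependent} relations that let us move Cuntz generators from one side of the tensor to the other. We then show that these relations, together with the Cuntz relations, determine $f$ on a dense spanning set of monomials. Throughout we assume $|\varphi\rangle$ has Schmidt rank $d\geq 2$, so $0<\alpha_i<1$ for all $i$. Existence of such a state is taken from \cite{cleve2017perfect} (equivalently, \cref{prop:equiv_def}), so only uniqueness needs proof.

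\textbf{Step 1 (state-dependent relations).} Let $(H,\pi,|\Omega\rangle)$ be the GNS triple and write $V_i=\pi(v_i\otimes 1)$ and $W_j=\pi(1\otimes w_j)$; these are commuting Cuntz families. Since $\sum_{i,j}V_iW_jW_j^*V_i^*=\Id$, the vectors $\xi_{ij}:=W_j^*V_i^*|\Omega\rangle$ satisfy $\sum_{i,j}\|\xi_{ij}\|^2=1$. On the other hand, $\langle\xi_{ij},\Omega\rangle=\overline{f(v_i\otimes w_j)}=\delta_{ij}\alpha_i$, and $\sum_i\alpha_i^2=1$. Equality in Cauchy--Schwarz, exactly as in the proof of \cref{prop:equiv_def}, then forces $W_j^*V_i^*|\Omega\rangle=\delta_{ij}\alpha_i|\Omega\rangle$. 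Inserting $\sum_jW_jW_j^*=\Id$ gives
\begin{equation*}
V_i^*|\Omega\rangle=\alpha_iW_i|\Omega\rangle,\qquad W_i^*|\Omega\rangle=\alpha_iV_i|\Omega\rangle .
\end{equation*}
For a word $\mu=\mu_1\cdots\mu_k$ we write $V_\mu=V_{\mu_1}\cdots V_{\mu_k}$ and $\alpha_\mu=\prod_l\alpha_{\mu_l}$. Because the $V$'s and $W$'s commute, induction on the length of $\mu$ gives $V_\mu^*|\Omega\rangle=\alpha_\mu W_{\tilde\mu}|\Omega\rangle$, where $\tilde\mu$ is a suitable reordering of $\mu$. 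The symmetric statement holds with the roles of $V$ and $W$ exchanged.

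\textbf{Step 2 (reduction of moments).} The span of the elements $v_\mu v_\nu^*\otimes w_\sigma w_\tau^*$ is dense in $\mcO_d\otimes_{\max}\mcO_d$, so it suffices to compute $f$ on these. We have
\begin{equation*}
f(v_\mu v_\nu^*\otimes w_\sigma w_\tau^*)=\langle V_\mu^*\Omega, W_\sigma W_\tau^*V_\nu^*\Omega\rangle=\alpha_\mu\alpha_\nu\langle \Omega, W_{\tilde\mu}^*W_\sigma W_\tau^*W_{\tilde\nu}\Omega\rangle .
\end{equation*}
Using the Cuntz relations $W_i^*W_j=\delta_{ij}$, the product $W_{\tilde\mu}^*W_\sigma W_\tau^*W_{\tilde\nu}$ is either $0$ or a single monomial $W_aW_b^*$. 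Then
\begin{equation*}
\langle\Omega,W_aW_b^*\Omega\rangle=\langle W_a^*\Omega,W_b^*\Omega\rangle=\alpha_a\alpha_b\langle\Omega,V_{a}^*V_{b}\Omega\rangle,
\end{equation*}
up to reordering of $a$ and $b$. Again, $V_a^*V_b$ is either $0$, $\Id$, $V_c$ or $V_c^*$ for some nonempty word $c$. So everything reduces to the quantities $x_c=\langle\Omega,V_c\Omega\rangle$ and their conjugates.

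\textbf{Step 3 (the potential loop).} The main obstacle is that the reduction in Step 2 can cycle between $V$-words and $W$-words, so one needs a contraction argument rather than a naive induction. Set $y_c=\langle\Omega,W_{c'}\Omega\rangle$ for the appropriate reordering $c'$. Step 1 gives $x_c=\langle V_c^*\Omega,\Omega\rangle=\alpha_c\overline{y_c}$, and symmetrically $y_c=\alpha_c\overline{x_c}$. Combining these yields $x_c=\alpha_c^2x_c$. Since $c$ is nonempty and every $\alpha_i<1$, we have $\alpha_c^2<1$, and hence $x_c=0$.

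\textbf{Conclusion.} Every moment of $f$ is therefore an explicit expression in the $\alpha_i$ that is independent of the choice of $f$. By continuity, $f$ is unique. Care is needed only in tracking the reorderings of words when commuting the $V$'s past the $W$'s; these affect the bookkeeping but not the argument.
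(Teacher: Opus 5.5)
Your argument is correct, and it takes a genuinely different route from the paper, because the paper does not prove this statement at all: it imports it as a black box from \cite[Theorem 3.4]{harris2026self}. You instead give a self-contained moment computation, in three steps:
\begin{enumerate}
\item The Cauchy--Schwarz saturation already used in \cref{prop:equiv_def} gives the transfer relations $V_i^*|\Omega\rangle=\alpha_iW_i|\Omega\rangle$ and $W_i^*|\Omega\rangle=\alpha_iV_i|\Omega\rangle$.
\item The Cuntz relations reduce every moment on the dense span of the $v_\mu v_\nu^*\otimes w_\sigma w_\tau^*$ to the one-point quantities $x_c=\langle\Omega|V_c|\Omega\rangle$.
\item Chaining the two transfer relations gives $x_c=\alpha_c^2x_c$, which forces $x_c=0$ for every nonempty word $c$.
\end{enumerate}
Each reduction checks out.

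What your route buys:
\begin{itemize}
\item An elementary proof of the uniqueness half, which is the only part that \cref{lem:uniquestate} actually uses.
\item Closed-form moments. For instance $f(v_\mu v_\nu^*\otimes 1)=\delta_{\mu\nu}\alpha_\mu^2$, so the marginal is the standard gauge-invariant state on $\mcO_d$.
\item A clear view of where the hypothesis enters. You only need $\alpha_i<1$ for every $i$, i.e.\ $|\varphi\rangle$ is not a product state; full Schmidt rank is not needed. This is sharp: for $\alpha=(1,0)$, take a permutative representation of $\mcO_2$ in which $v_0$ fixes a vector; products of gauge-twisted vector states then give a one-parameter family of distinct solutions.
\end{itemize}
What the citation buys is existence, together with Harris's broader structural statement that every exact embezzlement protocol for $|\varphi\rangle$ comes from this state.

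Two small corrections. First, no reordering of words ever occurs. Apply $V_{\mu_1}^*$ first and commute the resulting $W_{\mu_1}$ past the remaining adjoints; this gives exactly $V_\mu^*|\Omega\rangle=\alpha_\mu W_\mu|\Omega\rangle$, and symmetrically for $W$. So $\tilde\mu=\mu$, and the bookkeeping caveat in your conclusion can be dropped. Second, \cref{prop:equiv_def} only translates between the two formulations of embezzlement with general contractions. It does not by itself produce a state on $\mcO_d\otimes_{\max}\mcO_d$, so for the existence half you should point to a construction on the Cuntz algebra specifically, such as Harris's theorem, rather than to that proposition.
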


\section{The game that self-tests embezzlement}
The game we use for the commuting operator self-test is structurally the same as in \cite{coladangelo2020two} (and the follow-up \cite{beigi2021separation}), but we change one of the subgames involved. For self-testing the $3$-dimensional maximally entangled state, the $3$-CHSH game and the SATWAP inequality are used in \cite{coladangelo2020two} and \cite{beigi2021separation}, respectively. Our replacement will be Feige's game \cite{Feige1991} which has the advantage that it is a genuine nonlocal game that is easy to understand while keeping the number of questions and answers two questions and three answers per player. 

\subsection{Feige's game}
We now describe Feige's game \cite{Feige1991}. In this game, Alice and Bob both receive a bit as question (with uniform probability), and they are each allowed to choose an answer from the set $\{0,1,\bot\}$. To win the game, exactly one of the players has to answer with $\bot$, while the second player has to guess the input of the first player, i.e.~the answer is a bit that is equal to the question of the first player. 

\begin{definition}
    Feige's game $G_F$  is defined via
$$
    X_F= Y_F=\{0,1 \}, \quad A_F=B_F=\{0,1,\bot \}, \quad \pi(x,y)=\tfrac{1}{4} \quad  \forall (x,y)
$$
and has the predicate
$$
    V_F(a,b,x,y)=\begin{cases}
        1 \quad (a,b)=(\bot,x) \text{ or } (a,b)=(y,\bot),
        \\
        0 \quad \text{otherwise.}
    \end{cases}
$$
\end{definition}

\begin{proposition}[\cite{schmidt2025guess}]\label{prop:feige_value}
Feige's game has quantum advantage with $\omega_{qc}(G_F)=\frac{9}{16}>\frac{1}{2}=\omega_c(G_F)$.
\end{proposition}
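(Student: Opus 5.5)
The plan is to prove the two values separately. The classical value is a finite check. The commuting operator value needs an explicit strategy for the lower bound and an operator inequality in the universal algebra for the upper bound.

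\emph{Classical value.} Since the winning probability is affine in the correlation, it suffices to maximize over deterministic strategies $a=\alpha(x)$, $b=\beta(y)$. The game is won on $(x,y)$ exactly when one of $\alpha(x),\beta(y)$ is $\bot$ and the other is a bit that guesses the partner's input. Consider a player who answers $\bot$ on both inputs. Then the other player must answer the correct bit on all $(x,y)$, which is impossible because their single answer cannot depend on the partner's input; the best is $1/2$. Consider instead a player who answers $\bot$ on exactly one input. A short case count over the partner's behaviour bounds the number of winning question pairs by $2$ out of $4$. Both cases give at most $1/2$, and the value $1/2$ is attained, for instance by Alice always answering $\bot$ and Bob always answering $0$.

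\emph{Quantum lower bound.} I would exhibit an explicit finite-dimensional strategy, presumably the one from \cite{schmidt2025guess}, which I expect to use the maximally entangled state on $\C^3\otimes\C^3$, consistent with Feige's game replacing a qutrit self-test. Each player measures in one of two orthonormal bases of $\C^3$, chosen according to their input, with one basis vector labelled $\bot$. With the correlation written as $\frac{1}{3}\operatorname{tr}(P_a^x (Q_b^y)^T)$, the winning probability is an explicit sum of squared overlaps between basis vectors. Choosing the overlaps suitably, for example unbiased-type overlaps of squared modulus $1/2$ or $1/4$ between the relevant vectors, should give exactly $9/16$. This step is a routine computation once the bases are fixed.

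\emph{Quantum upper bound.} This is the main step. By \cref{lem:naimark} and the discussion after \cref{prop:optimal_state}, it is enough to work with PVMs. Consider the game operator
\begin{equation*}
F=\frac14\sum_{x,y\in\{0,1\}}\left(p^x_\bot\otimes q^y_x+p^x_y\otimes q^y_\bot\right)
\end{equation*}
in $\mcA_{\mathrm{PVM}}^{(2,3)}\otimes_{\max}\mcA_{\mathrm{PVM}}^{(2,3)}$. It can be regrouped as
\begin{equation*}
4F=\sum_x p^x_\bot\otimes(q^0_x+q^1_x)+\sum_y (p^0_y+p^1_y)\otimes q^y_\bot .
\end{equation*}
The goal is a sum-of-squares certificate
\begin{equation*}
\tfrac{9}{16}-F=\sum_i s_i^*s_i ,
\end{equation*}
with the $s_i$ linear combinations of $1$, $p_a^x\otimes 1$, $1\otimes q_b^y$ and $p_a^x\otimes q_b^y$, possibly plus terms vanishing by the PVM relations. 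I would obtain the certificate from the dual of the first (or $1{+}AB$) level of the NPA hierarchy. In practice one solves the small SDP numerically, guesses rational or closed-form coefficients, and verifies the identity symbolically using $p_a^x=(p_a^x)^2$, $\sum_a p_a^x=1$ and commutation. Evaluating any state on the certificate then gives $f(F)\le 9/16$.

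Finding this certificate in closed form is where I expect the difficulty to lie. If it resists, my first fallback is an operator-norm argument. Bound the block $B=\sum_x p^x_\bot\otimes(q^0_x+q^1_x)$ together with its mirror using $\|P_1+P_2\|\le 1+\|P_1P_2\|$ for projections $P_1,P_2$. Then exploit the tension that large $p^x_\bot$ on Alice's side forces Bob's non-$\bot$ projections to be large, which in turn shrinks $q^y_\bot$. Optimizing the resulting scalar trade-off should yield $9/16$.
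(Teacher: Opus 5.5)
Your route matches the paper's. The paper gives no argument of its own here: it imports the sum-of-squares decomposition of $\tfrac{9}{16}-F$ from \cite{schmidt2025guess}. Your classical case count is fine (add the trivial case where nobody ever answers $\bot$), and so is your reduction to PVMs via \cref{lem:naimark}.

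The genuine gap is the quantum upper bound, which is the whole content of the statement. You only describe how to search for it. No certificate is exhibited, and the operator-norm fallback is not developed far enough to give the tight constant $9/16$.

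The lower bound is also not carried out, and the overlaps you guess are not the right ones. One optimal strategy uses $\ket{\Phi_3}$ and real rank-one PVMs:
\begin{itemize}
\item Alice's $\bot$-vectors are $a_x=((-1)^x\tfrac{\sqrt3}{2},0,\tfrac12)$, and her bit vectors are $u^x_z=\tfrac{1}{\sqrt2}((-1)^{x+1}\tfrac12,(-1)^z,\tfrac{\sqrt3}{2})$.
\item Bob's $\bot$-vectors are $b_y=(0,(-1)^y\tfrac{\sqrt3}{2},\tfrac12)$, and his bit vectors are $w^y_z=\tfrac{1}{\sqrt2}((-1)^z,(-1)^{y+1}\tfrac12,\tfrac{\sqrt3}{2})$.
\end{itemize}
Each of the eight winning squared overlaps $|\langle a_x,w^y_x\rangle|^2$ and $|\langle u^x_y,b_y\rangle|^2$ equals $\tfrac{27}{32}$. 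The value is therefore $\tfrac14\cdot 8\cdot\tfrac13\cdot\tfrac{27}{32}=\tfrac{9}{16}$. The number $\tfrac14$ appears only as $|\langle a_0,a_1\rangle|^2$, not among the winning overlaps.

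This strategy also shows that the purely first-level option you mention cannot succeed, so the choice of level matters. If $\tfrac{9}{16}-F=\sum_i s_i^*s_i$, then every $s_i$ must annihilate the optimal vector. Write $D_x=p^x_0-p^x_1$ and $E_y=q^y_0-q^y_1$, suppressing $\otimes 1$ and $1\otimes$. For the strategy above, the degree-one elements killing $\ket{\Phi_3}$ form exactly the span of
\[
R_1=(D_1-D_0)-(E_1-E_0),\qquad R_2=(p^0_\bot-p^1_\bot)-\tfrac12(E_0+E_1),\qquad R_3=(q^0_\bot-q^1_\bot)-\tfrac12(D_0+D_1).
\]
So a degree-one certificate would have to be $\sum_{j,k}M_{jk}R_j^*R_k$ with $M\geq 0$. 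The cross-input words $p^0_ap^1_{a'}$ and $q^0_bq^1_{b'}$, with $a,a',b,b'\in\{0,1\}$, are independent in the free product and do not occur in $F$, so their coefficients must vanish. This forces $M_{22}=M_{33}=0$ and $4M_{11}=M_{33}$, hence $M=0$.

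The products $p^x_a\otimes q^y_b$ in your ansatz are therefore indispensable. It remains unverified that level $1{+}AB$ is tight. To close the argument, either write down and check an explicit identity at that level (or higher), or cite the certificate of \cite{schmidt2025guess} as the paper does.
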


To prove \cref{prop:feige_value} the authors of \cite{schmidt2025guess} find an SOS decomposition. From the SOS decomposition one can determine much more than quantum advantage.

\begin{theorem}[\cite{schmidt2025guess}]\label{thm:Feige}
Consider the nonlocal game $G_F$ defined above: 
\begin{enumerate}
    \item The optimal commuting operator value $\omega_{qc}(G_F)=9/16$ is achieved by a unique optimal correlation $p$,
    \item $p$ is a commuting operator self-test,
    \item in the cyclic restriction of any optimal strategy, Alice and Bob's measurement operators generate local copies of $M_3(\C)$, and
    \item in the optimal GNS representation:
    \begin{align*}
      E_{ij}^A\ket{\Omega}=E_{ji}^B\ket{\Omega}, \quad \bra{\Omega} E_{ij}^A\ket{\Omega}=\bra{\Omega} E_{ij}^B\ket{\Omega}=\delta_{ij}\frac{1}{3},\; \text{ and }\;  \pi(p_0^{(0)})\ket{\Omega}=E_{00}^A\ket{\Omega},
    \end{align*}
    where $E_{ij}$ $i,j\in \{0,1,2\}$ are the $3\times 3$ matrix units.
\end{enumerate}
\end{theorem}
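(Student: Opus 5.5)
We would prove \cref{thm:Feige} by the sum-of-squares route suggested by \cref{prop:optimal_state}. We would work first in the PVM algebra $\mcA_{\mathrm{PVM}}^{(X_F,A_F)}\otimes_{\max}\mcA_{\mathrm{PVM}}^{(Y_F,B_F)}$ and transfer to POVMs at the end with \cref{lem:projective_to_POVM}.

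\textbf{Step 1 (SOS certificate).} Write the game functional as
\begin{equation*}
F=\tfrac14\sum_{x,y}\bigl(p_\bot^x\otimes q_x^y+p_y^x\otimes q_\bot^y\bigr).
\end{equation*}
We would look for elements $s_k$ that are linear in the generators $p_a^x\otimes 1$ and $1\otimes q_b^y$ (with constant term), and weights $\lambda_k>0$, such that
\begin{equation*}
\tfrac{9}{16}-F=\sum_k\lambda_k s_k^*s_k
\end{equation*}
holds modulo the PVM relations. The ansatz would be guided by the explicit optimal strategy: the maximally entangled state $\tfrac1{\sqrt3}\sum_i|ii\rangle$ on $\C^3\otimes\C^3$, with each player's three outcomes given by rank-one projections onto suitably chosen (non-orthogonal across questions) bases. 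This strategy gives the lower bound $\omega_{qc}(G_F)\ge 9/16$, and the SOS gives the matching upper bound. Solving for the coefficients is a finite semidefinite feasibility problem and would be checked symbolically.

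\textbf{Step 2 (relations on the state).} Let $f$ be any pure optimal state with GNS triple $(H,\pi,|\Omega\rangle)$. Then $f(s_k^*s_k)=0$, hence $\pi(s_k)|\Omega\rangle=0$ for every $k$. These identities express Bob's projections applied to $|\Omega\rangle$ as combinations of Alice's projections applied to $|\Omega\rangle$, and conversely. From them we would build, inside Alice's local algebra, elements $E^A_{ij}$ from products of $P_0^{0},P_1^{1},P_\bot^{x}$ (and similarly $E^B_{ij}$ for Bob) such that
\begin{equation*}
E^A_{ij}|\Omega\rangle=E^B_{ji}|\Omega\rangle\quad\text{and}\quad \pi(p_0^{(0)})|\Omega\rangle=E^A_{00}|\Omega\rangle .
\end{equation*}

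\textbf{Step 3 (upgrade to operator identities).} This is the step we expect to be the main obstacle. The matrix-unit relations $E^A_{ij}E^A_{kl}=\delta_{jk}E^A_{il}$ and $\sum_iE^A_{ii}=\Id$, and the expression of each $P_a^x$ in terms of the $E^A_{ij}$, are so far known only after applying to $|\Omega\rangle$. The standard trick is: if $a$ lies in Alice's algebra and $a|\Omega\rangle=0$, then for every $b$ in Bob's algebra $a\,b|\Omega\rangle=b\,a|\Omega\rangle=0$, so $a$ vanishes on $\overline{\pi(\mcB)|\Omega\rangle}$.
\begin{itemize}
\item We would first show that $|\Omega\rangle$ is cyclic for each local algebra. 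Using the Step 2 relations, Bob's operators applied to $|\Omega\rangle$ can be traded for Alice's, so $\overline{\pi(\mcA)|\Omega\rangle}=\overline{\pi(\mcA)\pi(\mcB)|\Omega\rangle}=H$, and symmetrically for Bob.
\item With local cyclicity, every relation holding on $|\Omega\rangle$ becomes an operator identity on $H$.
\end{itemize}
Hence Alice's PVMs lie in, and generate, a copy of $M_3(\C)$, and likewise for Bob; this is item (3).

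\textbf{Step 4 (uniqueness and conclusion).} The commuting algebras are copies of $M_3(\C)$. A vector satisfying $E^A_{ij}|\Omega\rangle=E^B_{ji}|\Omega\rangle$ in $H\cong\C^3\otimes\C^3\otimes K$ must be a maximally entangled vector on $\C^3\otimes\C^3$ tensored with some vector in $K$. Therefore $\langle\Omega|E^A_{ij}|\Omega\rangle=\delta_{ij}/3$, and every mixed moment of the generators is determined. So all pure optimal PVM states coincide, and \cref{prop:optimal_state} gives a unique optimal state $\hat f$ and a unique optimal correlation $p$; this gives items (1) and (4) in the PVM setting. Since $\hat f$ is locally cyclic by Step 3, \cref{lem:projective_to_POVM} then shows that $\hat f\circ q$ is the unique state on the POVM algebra realizing $p$, which is item (2). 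The same lemma shows every optimal cyclic POVM model is projective, which extends items (1), (3) and (4) to POVM strategies.
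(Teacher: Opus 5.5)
Your outline follows the paper's route. An SOS certificate for $\tfrac{9}{16}-F$ in the PVM algebra gives the projective self-test and the local $M_3(\C)$ structure, hence local cyclicity; \cref{lem:projective_to_POVM} then transfers uniqueness to POVM models, and item (4) is read off from the maximally entangled ideal model. The difference is that the paper does not build the certificate or the matrix units: it imports them from \cite[Theorem 4.6, Lemma 4.5]{schmidt2025guess}. As written, your Steps 1--3 (in particular the degree-one SOS ansatz, which you set up but do not solve) are only a plan, so you should either cite that result or actually carry out the computation.
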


\begin{proof}
    \cite[Theorem 4.6]{schmidt2025guess} establishes that the optimal correlation for $G_F$ is a commuting operator self-test among projective models, since its SOS and determining relations extend to commuting representations. However, since \cite[Lemma 4.5]{schmidt2025guess} proves that the local algebras in any optimal GNS representation are isomorphic to $M_3(\C)$ we can conclude by \cref{lem:projective_to_POVM} that it is a commuting operator self-test among all POVM models. Moreover, the proof of \cite[Lemma 4.5]{schmidt2025guess} allows us to concretely establish \emph{(4)}, by considering the properties of the ideal (reduced) model. From which we observe that $\pi(p_0^{(0,F)})\ket{\Omega}=E_{00}^A\ket{\Omega}$, establishing the final part of \emph{(4)}.
\end{proof}

\subsection{Tilted CHSH game}
Now, we look at the tilted CHSH game and define the variant introduced in \cite{coladangelo2020two}. This variant is equivalent to the original tilted CHSH game and is obtained as follows: take the original predicate of the tilted CHSH game, flip the answer labels of both players and then swap the roles of Alice and Bob.
\begin{definition}
Let $\alpha\in (0,1]$, then the (variant of the) tilted CHSH game $G_{tCHSH(\alpha)}$ is defined via
$$
    X_{t_{CHSH}}= Y_{t_{CHSH}}=A_{t_{CHSH}}=B_{t_{CHSH}}=\{0,1 \}, \quad \pi(x,y)=\tfrac{1}{4} \quad  \forall (x,y)
$$
and the predicate
$$
    V_{tCHSH(\alpha)}(a,b|x,y)=(-1)^{a\oplus b- x y}-\delta_{x=y=0}\beta(-1)^b.
    $$
Here $\alpha$ and $\beta\in [0,2)$ are related as follows: There is an angle $\phi\in (0,\frac{\pi}{4}]$ such that $\sin(2\phi)=\frac{\sqrt{4-\beta^2}}{\sqrt{4+\beta^2}}$ and $\alpha=\tan(\phi)$.
\end{definition}

\begin{proposition}[\cite{bamps2015sum}]\label{prop:tilted-score}
The tilted CHSH game $G_{tCHSH(\alpha)}$ has quantum advantage and it holds $\omega_{qc}(G_{tCHSH(\alpha)})=\frac{1}{4}\sqrt{8+2\beta^2}$.
\end{proposition}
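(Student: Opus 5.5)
The plan is to prove matching lower and upper bounds on the game functional: the upper bound as an operator inequality in the commuting algebra, the lower bound by an explicit qubit strategy. First rewrite the score in correlator form. For a commuting operator strategy, set $A_x=M_0^x-M_1^x$ and $B_y=N_0^y-N_1^y$. These are self-adjoint contractions with $[A_x,B_y]=0$. Summing $V_{tCHSH(\alpha)}$ against $p$ gives
\begin{equation*}
\omega(G_{tCHSH(\alpha)},S)=\tfrac14\langle\psi|\,A_0B_0+A_0B_1+A_1B_0-A_1B_1-\beta B_0\,|\psi\rangle .
\end{equation*}
This is the tilted CHSH (Acín--Massar--Pironio) expression. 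Compared with the usual form, the tilt sits on Bob's marginal and the sign is flipped, which is exactly the label flip and role swap described before the definition. So it suffices to show that
\begin{equation*}
I_\beta:=A_0B_0+A_0B_1+A_1B_0-A_1B_1-\beta B_0
\end{equation*}
has supremum $\sqrt{8+2\beta^2}$ over all commuting operator models.

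For the upper bound, I would first reduce to $\pm1$ observables. The functional is affine in each POVM, and by \cref{lem:naimark} every commuting POVM model dilates to a commuting PVM model with the same correlation. So we may assume $A_x^2=B_y^2=\Id$. Next I would import the sum-of-squares certificate of Bamps and Pironio \cite{bamps2015sum}, after relabelling $b\mapsto 1-b$ (so $B_y\mapsto -B_y$) and swapping the parties, so that the tilt is attached to $B_0$. That certificate writes
\begin{equation*}
\sqrt{8+2\beta^2}\,\Id-I_\beta=\sum_k c_k P_k^*P_k,\qquad c_k\ge 0,
\end{equation*}
where the $P_k$ are linear combinations of the $A_x$, the $B_y$ and the products $A_xB_y$. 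The identity is verified using only $A_x^2=B_y^2=\Id$ and $[A_x,B_y]=0$, never a tensor-product structure. It therefore holds in $\mcA_{\mathrm{PVM}}^{(2,2)}\otimes_{\max}\mcA_{\mathrm{PVM}}^{(2,2)}$, and applying any state gives $f(I_\beta)\le\sqrt{8+2\beta^2}$.

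The step needing care is checking that this SOS survives the relabelling and role swap with the correct signs, and that $\beta$ and $\phi$ are related exactly as in the definition. If the transcription becomes awkward, I would instead derive the bound directly. Group the expression by Alice's observables as $A_0(B_0+B_1)+A_1(B_0-B_1)-\beta B_0$. Then use the two-term Bamps--Pironio squares, which have the shape $\bigl(A_0-\tfrac{B_0+B_1}{2\cos\mu}\bigr)^2$ and $\bigl(A_1-\tfrac{B_0-B_1}{2\sin\mu}\bigr)^2$ with $\tan\mu=\sin 2\phi$, together with a correction term absorbing $\beta B_0$, and match coefficients.

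For the lower bound, take the two-qubit state $\cos\phi|00\rangle+\sin\phi|11\rangle$. Bob measures $B_0=\cos\mu\,Z+\sin\mu\,X$ and $B_1=\cos\mu\,Z-\sin\mu\,X$. Alice measures $A_0=Z$ and $A_1=X$, with signs adjusted for the flipped labels and the minus sign on the tilt. A short computation with $\sin 2\phi=\sqrt{4-\beta^2}/\sqrt{4+\beta^2}$ gives $\langle I_\beta\rangle=\sqrt{8+2\beta^2}$. Hence $\omega_{qc}=\omega_q=\tfrac14\sqrt{8+2\beta^2}$.

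Quantum advantage follows by comparing with the classical value. Checking the deterministic strategies gives $\max(2+\beta,2)=2+\beta$, and $2+\beta<\sqrt{8+2\beta^2}$ since $(\beta-2)^2>0$ for $\beta<2$. The main obstacle is the correct adaptation of the SOS to this relabelled variant; the rest is routine.
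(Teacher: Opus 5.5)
\textbf{Genuine gap: the lower-bound strategy.} The error is in the lower bound, the part you call routine. In $I_\beta$ the marginal term $-\beta B_0$ sits on Bob. So Bob must be the party measuring the anticommuting pair, and Alice the rotated pair; your assignment reverses the roles. Take $A_0=Z$, $A_1=X$, $B_{0,1}=\cos\mu\,Z\pm\sin\mu\,X$ and the state $\cos\phi|00\rangle+\sin\phi|11\rangle$. Then the tilt only contributes $\pm\beta\cos\mu\cos2\phi$. With $\tan\mu=\sin2\phi$ and the better sign you get
\begin{equation*}
2\sqrt{1+\sin^2 2\phi}+\beta\cos\mu\cos 2\phi=\frac{4\sqrt2}{\sqrt{4+\beta^2}}+\frac{\beta^2}{2}<\sqrt{8+2\beta^2}\qquad(0<\beta<2).
\end{equation*}
For example, at $\beta=2/\sqrt{17}$ this is about $2.866$ versus $12/\sqrt{17}\approx 2.910$. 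Adjusting signs or re-optimizing $\mu$ does not help. The tilt always carries the factor $\cos\mu<1$, and the best over $\mu$ is $\sqrt{(2+\beta\cos2\phi)^2+4\sin^22\phi}$, still strictly smaller. This is consistent with the optimal model in \cref{thm:tCHSH}: there it is Bob's pair that anticommutes, while $\wtd A_0\wtd A_1+\wtd A_1\wtd A_0=\tfrac{2}{17}\Id$.

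The repair is to swap roles. Take $B_0=Z$, $B_1=X$, $A_{0,1}=\cos\mu\,Z\pm\sin\mu\,X$ with $\tan\mu=\sin2\phi$, and the state $\sin\phi|00\rangle+\cos\phi|11\rangle$, so that $\langle B_0\rangle=-\cos2\phi$. Then
\begin{equation*}
\langle I_\beta\rangle=\beta\cos2\phi+2\cos\mu+2\sin\mu\sin2\phi=\beta\cos2\phi+2\sqrt{1+\sin^22\phi}=\sqrt{8+2\beta^2},
\end{equation*}
using $\cos2\phi=\sqrt2\,\beta/\sqrt{4+\beta^2}$ and $1+\sin^22\phi=8/(4+\beta^2)$. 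For $\beta=2/\sqrt{17}$ this is exactly the paper's ideal model: state $(|00\rangle+\sqrt2|11\rangle)/\sqrt3$, $\wtd B_0=\sigma_Z$, $\wtd B_1=\sigma_X$, $\wtd A_{0,1}=(3\sigma_Z\pm2\sqrt2\sigma_X)/\sqrt{17}$.

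\textbf{Upper bound.} Here you follow the same route as the paper. The paper does not reprove this proposition, but it writes the Bamps--Pironio certificate in adapted form in the proof of \cref{thm:tCHSH}. With $\lambda=\sqrt{8+2\beta^2}$, it reads $2\lambda(\lambda-I_\beta)=r_1^2+r_2^2$, where $r_1=\lambda-I_\beta$ and $r_2=-\beta B_1-B_0(A_0-A_1)-B_1(A_0+A_1)$. The adaptation you single out as the main obstacle is a short check. The identity is equivalent to $I_\beta^2+r_2^2=\lambda^2\Id$, which uses only $B_y^2=\Id$, $[A_x,B_y]=0$, $(A_0+A_1)^2+(A_0-A_1)^2=4\Id$, and the anticommutation of $A_0+A_1$ with $A_0-A_1$. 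It therefore holds in the max tensor product.

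\textbf{Two smaller points.} First, to get $I_\beta$ from the usual form $\beta A_0+A_0B_0+A_0B_1+A_1B_0-A_1B_1$, you must flip the labels of both players before swapping them. Flipping only $b$ also negates the CHSH part. Second, your fallback squares $\bigl(A_0-\tfrac{B_0+B_1}{2\cos\mu}\bigr)^2$ and $\bigl(A_1-\tfrac{B_0-B_1}{2\sin\mu}\bigr)^2$ cannot appear in an exact certificate when $\beta>0$. Each square in such a certificate must annihilate the optimal vector, and these do not (check the $|00\rangle$ coefficient in the corrected model above). The correct squares are grouped by Bob's observables, as $r_3,r_4$ in the paper. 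The Naimark reduction and the classical value $2+\beta$ are fine.
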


For the purposes of this work, we are interested in the case $\alpha=\frac{1}{\sqrt{2}}$, which yields $\beta=\frac{2}{\sqrt{17}}$.

\begin{theorem}\label{thm:tCHSH}
Consider the generalized nonlocal game $G_{tCHSH(\frac{1}{\sqrt{2}})}$ from above: 
    \begin{enumerate}
        \item Its optimal (quantum) commuting operator value of $\tfrac{3}{\sqrt{17}}$ is achieved by a unique optimal correlation $p$, 
        \item $p$ is a commuting operator self-test,
        \item
        in the cyclic restriction of any optimal strategy, Alice and Bob's measurement operators generate local copies of $M_2(\C)$, and
        \item in any optimal GNS representation we have $$E_{00}^A|\Omega\rangle=E_{00}^B|\Omega\rangle,\; E_{10}^B|\Omega\rangle=\tfrac{1}{\sqrt{2}}E_{01}^A|\Omega\rangle,\; E_{10}^A|\Omega\rangle=\tfrac{1}{\sqrt{2}}E_{01}^B|\Omega\rangle\; \text{ and } \; \pi(q_0^{(0)})\ket{\Omega}=E_{00}^B\ket{\Omega},$$
        where $E_{ij}$ $i,j\in \{0,1\}$ are the $2\times 2$ matrix units.
    \end{enumerate}
\end{theorem}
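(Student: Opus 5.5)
We prove \cref{thm:tCHSH} by following the standard sum-of-squares route of \cite{bamps2015sum}, working inside the PVM algebra and then passing to POVMs with \cref{lem:projective_to_POVM}.

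\textbf{Step 1: rewrite the game in observables.} Write Alice's and Bob's binary observables as $A_x=p_0^x-p_1^x$ and $B_y=q_0^y-q_1^y$. These are self-adjoint unitaries in $\mcA_{\mathrm{PVM}}^{(X,A)}\otimes_{\max}\mcA_{\mathrm{PVM}}^{(Y,B)}$. Up to the factor $\tfrac14$, the game functional becomes the tilted CHSH operator with the tilt placed on Bob's side:
\begin{equation*}
F=A_0B_0+A_0B_1+A_1B_0-A_1B_1-\beta B_0 .
\end{equation*}
This matches the label flip and role swap described before the definition.

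\textbf{Step 2: certify the value with an SOS.} Set $\lambda=\sqrt{8+2\beta^2}$, which equals $12/\sqrt{17}$ when $\beta=2/\sqrt{17}$. We exhibit a sum-of-squares decomposition $\lambda-F=\sum_k P_k^*P_k$, with $P_k$ in the span of the words of length at most $2$. We adapt the Bamps–Pironio decomposition \cite{bamps2015sum}, with the roles of $A$ and $B$ exchanged. Since this is a purely algebraic identity using only $A_x^2=B_y^2=1$ and $[A_x,B_y]=0$, it is valid in the max tensor product. Hence $f(F)\le\lambda$ for every state $f$, and the ideal qubit strategy attains equality. This gives the value $\tfrac{3}{\sqrt{17}}$.

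\textbf{Step 3: show every optimal pure state is the same.} For an optimal pure state with GNS triple $(H,\pi,\ket\Omega)$, each $P_k$ annihilates $\ket\Omega$. The resulting linear relations have the usual form: $B_0\ket\Omega$ and $B_1\ket\Omega$ are expressed through $A_0\ket\Omega$ and $A_1\ket\Omega$ with coefficients $\cos\mu$ and $\sin\mu$, together with the symmetric relations. From these we derive the anticommutation relation $\{A_0,A_1\}\ket\Omega=0$ on the cyclic vector, and similarly for Bob. This vector-level anticommutation is the main obstacle. In the finite-dimensional proofs it is upgraded via Schmidt decomposition, which we do not have here. We instead use commutation with the other party: $\{A_0,A_1\}$ commutes with Bob's algebra, so it kills $\overline{\pi(\mcB)\ket\Omega}$. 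We then need this subspace to be all of $H$. We argue this by showing that Alice's words applied to $\ket\Omega$ can be rewritten as Bob's words applied to $\ket\Omega$, using the relations repeatedly, so that $\overline{\pi(\mcA)\ket\Omega}=\overline{\pi(\mcB)\ket\Omega}=H$. With $\{A_0,A_1\}=0$ and $\{B_0,B_1\}=0$ holding as operator identities, the operators $A_0$ and $(A_0A_1)$ generate a representation of $M_2(\C)$, and likewise for Bob. This gives (3).

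We then define matrix units $E_{00}=\pi(p_0^0)$ and $E_{10}=E_{11}\cdot(\text{suitable unitary combination})$. The relations from $P_k\ket\Omega=0$ translate into the vector identities in (4), using that $\alpha=\tan\phi=1/\sqrt2$ is the ratio of the Schmidt coefficients. They also give $\pi(q_0^{(0)})\ket\Omega=E_{00}^B\ket\Omega$, since Bob's $B_0$ is diagonal in the Schmidt basis. These identities determine $f$ on every monomial: we push Bob's matrix units onto Alice's side via (4), so every moment reduces to $\bra\Omega E^A_{ij}\ket\Omega$, and these values are fixed. Hence all optimal pure states coincide. \cref{prop:optimal_state} then gives a unique optimal state on the PVM algebra, and thus parts (1)–(2) among projective models.

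\textbf{Step 4: pass to POVMs.} We showed local cyclicity in Step 3, so \cref{lem:projective_to_POVM} extends the self-test to all POVM models.
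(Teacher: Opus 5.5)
Your architecture is sound, but the relation you upgrade on Alice's side is false. The architecture is: Bamps--Pironio SOS, vector relations, upgrade to operator identities by commuting past the other party's algebra and using local cyclicity, then \cref{lem:projective_to_POVM}. The problem is that in this variant the tilt $-\beta B_0$ sits on Bob, so it is Bob's pair that anticommutes. Alice's optimal observables are $A_0,A_1=\cos\mu\,\sigma_Z\pm\sin\mu\,\sigma_X$ with $\cos\mu=3/\sqrt{17}$ and $\sin\mu=2\sqrt{2}/\sqrt{17}$. By uniqueness this ideal model is the only one, so in every optimal model $A_0A_1+A_1A_0=2\cos(2\mu)\Id=\tfrac{2}{17}\Id\neq 0$. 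This has three consequences. First, $\{A_0,A_1\}\ket{\Omega}=0$ cannot be derived from the SOS relations. Second, $A_0$ and $A_0A_1$ do not furnish the $M_2(\C)$ structure. Third, taking $E^A_{00}=\pi(p_0^0)$ would violate $E_{00}^A\ket{\Omega}=E_{00}^B\ket{\Omega}$, because the $+1$ eigenprojection of $A_0$ is tilted away from the Schmidt basis. Relatedly, the optimal state $(\ket{00}+\sqrt{2}\ket{11})/\sqrt{3}$ is not maximally entangled, so the $B_1$ relation is not the symmetric one you describe.

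What the SOS actually yields, and what the paper uses, concerns the rescaled pair $C_0=\tfrac{\sqrt{17}}{6}(A_0+A_1)$ and $C_1=\tfrac{\sqrt{17}}{4\sqrt{2}}(A_0-A_1)$. One gets $C_0\ket{\Omega}=B_0\ket{\Omega}$ and $C_1\ket{\Omega}=\tfrac{1}{2\sqrt{2}}B_1(3\Id+B_0)\ket{\Omega}$. The relation $C_0C_1+C_1C_0=0$ holds identically in the algebra. The nontrivial relation is $C_0^2\ket{\Omega}=\ket{\Omega}$, i.e.\ $\{A_0,A_1\}\ket{\Omega}=\tfrac{2}{17}\ket{\Omega}$.

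Your upgrade mechanism is fine and can replace the paper's appeal to purity plus Schur's lemma, with the following steps.
\begin{enumerate}
\item The two relations above already give local cyclicity. Alice's generators are linear combinations of $C_0$ and $C_1$, and conversely $B_1\ket{\Omega}=\tfrac{\sqrt{2}}{4}(3\Id-C_0)C_1\ket{\Omega}$.
\item Since $C_0^2-\Id$ commutes with Bob's algebra and kills $\ket{\Omega}$, local cyclicity gives $C_0^2=\Id$ as an operator identity.
\item Then $C_1^2=\Id$ follows from $(A_0+A_1)^2+(A_0-A_1)^2=4\Id$.
\item Finally $\{B_0,B_1\}=0$ follows from $(B_0B_1+B_1B_0)(3\Id+B_0)\ket{\Omega}=0$, since $3\Id+B_0$ is invertible and commutes with that anticommutator.
\end{enumerate}
Alice's copy of $M_2(\C)$ and her matrix units must then be built from $C_0$ and $C_1$, for instance $E^A_{00}=\tfrac{1}{2}(\Id+C_0)$ and $E^A_{10}=C_1E^A_{00}$. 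Her measurements are then the fixed polynomials $A_x=\tfrac{3}{\sqrt{17}}C_0\pm\tfrac{2\sqrt{2}}{\sqrt{17}}C_1$. With this correction, your moment-pushing argument and Step 4 go through.
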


It was stated in \cite[Proposition 7.8]{PSZZ} that the tilted CHSH correlations for all $\beta\in [0,2)$ are commuting operator self-tests for POVM quantum (commuting operator) models. However, since the publication the authors have noted an error in part of the proof. Although we believe the statement of \cite[Proposition 7.8]{PSZZ} is still correct, we have decided to provide an alternate proof for our specific case. Both approaches are based on the SOS decompositions of Bamps and Pironio  \cite{bamps2015sum} to determine uniqueness of the state.

Before we give the proof we recall the following well-known result:

\begin{lemma}\label{lem:anti_pair}
    If $X_i,Z_i$ for $i=1,2$ are two anticommuting pairs of self-adjoint unitaries that commute with each other, i.e. $X_i=X_i^*$, $Z_i=Z_i^*$, $X_i^2=Z_i^2=\Id$ for $i=1,2$, $[X_1,X_2]=[Z_1,Z_2]=[X_1,Z_2]=[X_2,Z_1]=0$, and $X_iZ_i+Z_iX_i=0$ for $i=1,2$, that act irreducibly on a Hilbert space $H$, then there is a unitary $U:H\to \C_2\otimes \C_2$, so that $X_1\cong \sigma_x\otimes \Id$, $Z_1\cong \sigma_Z\otimes \Id$, and $X_2\cong \Id \otimes \sigma_X$, $Z_2\cong\Id \otimes \sigma_Z$.
\end{lemma}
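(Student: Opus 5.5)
The statement to prove is \cref{lem:anti_pair}: two commuting Pauli-type pairs acting irreducibly on $H$ give $H\cong\C^2\otimes\C^2$ with the tensor form. The plan is to construct matrix units from each pair, so that each pair generates a copy of $M_2(\C)$ commuting with the other. Irreducibility then forces $H\cong\C^2\otimes\C^2$.

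\emph{Step 1: matrix units from one pair.} For $i=1,2$ put
\begin{equation*}
P_i^{\pm}=\tfrac12(\Id\pm Z_i).
\end{equation*}
These are complementary projections, since $Z_i$ is a self-adjoint unitary. Anticommutation gives $X_iP_i^{+}=P_i^{-}X_i$, so $X_i$ restricts to a unitary from $P_i^+H$ onto $P_i^-H$. We then define
\begin{equation*}
e^{(i)}_{00}=P_i^+,\quad e^{(i)}_{11}=P_i^-,\quad e^{(i)}_{10}=P_i^-X_iP_i^+=X_iP_i^+,\quad e^{(i)}_{01}=(e^{(i)}_{10})^*=P_i^+X_i.
\end{equation*}
A routine check with $X_i^2=\Id$ shows these satisfy the $2\times 2$ matrix unit relations. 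We also have $Z_i=e^{(i)}_{00}-e^{(i)}_{11}$ and $X_i=e^{(i)}_{10}+e^{(i)}_{01}$. Hence $\mcA_i:=C^*(X_i,Z_i)\cong M_2(\C)$ as a unital subalgebra; it is nonzero because $H\neq 0$, and $M_2(\C)$ is simple.

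\emph{Step 2: tensor decomposition.} By the commutation hypotheses, $\mcA_1$ and $\mcA_2$ commute. A unital copy of $M_2(\C)$ inside $\mbB(H)$ gives a unitary $H\cong\C^2\otimes K$, where $K=e^{(1)}_{00}H$. The map sends $e^{(1)}_{k\ell}$ to $E_{k\ell}\otimes\Id_K$, and sends $v\in K$ placed in slot $k$ to $e^{(1)}_{k0}v$. Under this identification the commutant of $\mcA_1$ is $\Id\otimes\mbB(K)$. So $\mcA_2=\Id\otimes\mcB$ for some unital copy $\mcB\cong M_2(\C)$ in $\mbB(K)$. Applying the same argument inside $K$ gives $K\cong\C^2\otimes L$ with $\mcB=M_2(\C)\otimes\Id_L$. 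Altogether $H\cong\C^2\otimes\C^2\otimes L$, with $X_1,Z_1$ acting as $\sigma_X\otimes\Id\otimes\Id_L$ and $\sigma_Z\otimes\Id\otimes\Id_L$, and $X_2,Z_2$ acting as $\Id\otimes\sigma_X\otimes\Id_L$ and $\Id\otimes\sigma_Z\otimes\Id_L$.

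\emph{Step 3: irreducibility.} The operators $X_1,Z_1,X_2,Z_2$ generate $M_2\otimes M_2\otimes\Id_L$, whose commutant is $\Id\otimes\Id\otimes\mbB(L)$. Irreducibility means this commutant is $\C\Id$, so $\dim L=1$. This yields the unitary $U:H\to\C^2\otimes\C^2$ with the stated forms.

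The main point needing care is Step 2, namely identifying the commutant of $\mcA_1$ with $\Id\otimes\mbB(K)$. This is the standard fact that an operator commuting with all the matrix units $e_{k\ell}$ is determined by its compression to $e_{00}H$. Everything else is direct algebra.
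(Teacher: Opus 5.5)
Your proof is correct. The paper gives no proof of this lemma to compare against: it quotes it as a well-known result and uses it directly in the proof of \cref{thm:tCHSH}. What you give is the standard matrix-unit argument, and each step checks out.

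\begin{itemize}
\item From $X_iP_i^+=P_i^-X_i$ you get $e^{(i)}_{10}e^{(i)}_{01}=X_iP_i^+X_i=P_i^-$ and $e^{(i)}_{01}e^{(i)}_{10}=P_i^+$.
\item The map $e_k\otimes v\mapsto e^{(1)}_{k0}v$ is unitary because $e^{(1)}_{0k}e^{(1)}_{\ell 0}=\delta_{k\ell}e^{(1)}_{00}$.
\item Irreducibility forces $\dim L=1$, since any nontrivial projection $q$ on $L$ would put $\Id\otimes\Id\otimes q$ in the commutant.
\end{itemize}

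Two points are used only implicitly and are worth stating.
\begin{itemize}
\item You need $H\neq 0$, so that $K$ and $L$ are nonzero.
\item In the second pass, build the matrix units from $\tilde X_2,\tilde Z_2\in\mbB(K)$, where $X_2=\Id\otimes\tilde X_2$ and $Z_2=\Id\otimes\tilde Z_2$. These inherit all the relations because $b\mapsto \Id\otimes b$ is an injective unital $*$-homomorphism. This is what gives exactly $\sigma_X,\sigma_Z$ rather than just some copy of $M_2(\C)$.
\end{itemize}

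If you want to avoid the nested argument, there is a slightly shorter version. Because the two families commute, the sixteen products $e^{(1)}_{ij}e^{(2)}_{k\ell}$ already form a full system of $4\times 4$ matrix units. This gives $H\cong \C^2\otimes\C^2\otimes L$ with $L=e^{(1)}_{00}e^{(2)}_{00}H$ in one step. It is the same idea, just not iterated.
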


\begin{proof}[Proof of \cref{thm:tCHSH}]
    Let $a_x=p_0^x-p_1^x$ and $b_y=q_0^y-q_1^y$ be the canonical self-adjoint unitaries in $\mcA_{\mathrm{PVM}}^{(2,2)}\otimes_{\max} \mcA_{\mathrm{PVM}}^{(2,2)}$, with $[a_x,b_y]=0$ for all $x,y\in \{0,1\}$. Fix $\alpha=\frac{1}{\sqrt{2}}$ so that $\beta=\frac{2}{\sqrt{17}}$ as above, and define the elements $T=(a_0+a_1)$ and $S=(a_0-a_1)$ for convenience.
    Then, the unnormalized Bell element can be written as
    \begin{equation*}
        \eta=-\beta b_0+b_0T+b_1S,    
    \end{equation*}
     and by \cref{prop:tilted-score} the optimal (unnormalized) commuting operator score is $\lambda:=\sqrt{8+2\beta^2}=\frac{12}{\sqrt{17}}$.
    
    In \cite{bamps2015sum} a pair of noncommutative SOSs were derived with respect to the abstract PVM elements, establishing that
    \begin{equation}
        2\lambda(\lambda-\eta)=r_1^2+r_2^2=r_3^2+r_4^2,
    \end{equation}
    where
    \begin{align*}
        r_1&=\lambda-\eta,\\
        r_2&=-\beta b_1 -b_0 S- b_1 T,\\
        r_3&=2b_0-\tfrac{\lambda}{2}T-\tfrac{\beta}{2}(b_0T-b_1S),\text{ and}\\
        r_4&=2b_1-\tfrac{\lambda}{2}S-\tfrac{\beta}{2}(b_0S-b_1T).
    \end{align*}
Suppose that $f$ is a pure state on $\mcA_{\mathrm{PVM}}^{(2,2)}\otimes_{\max} \mcA_{\mathrm{PVM}}^{(2,2)}$ such that $f(\eta)=\lambda$, then in the GNS we have $\pi_f(r_i)|\Omega_f\rangle=0$ for all $i=1,2,3,4$. Letting $c_0=\tfrac{\sqrt{17}}{6}(a_0+a_1)$, and $c_1=\tfrac{\sqrt{17}}{4\sqrt{2}}(a_0-a_1)$, one can determine that
\begin{equation*}
    b_0-c_0=\frac{-\sqrt{17}r_1+17r_3-\frac{\sqrt{17}}{3}b_1(r_2-\sqrt{17}r_4)}{32},
\end{equation*}
and
\begin{equation*}
    c_1-\frac{1}{2\sqrt{2}}b_1(3\cdot 1_\mcM+c_0)=\frac{\sqrt{17}}{24\sqrt{2}}(r_2-\sqrt{17}r_4).
\end{equation*}
Hence,
\begin{equation}\label{eqn:transfer_0}
    \pi_f(c_0)|\Omega_f\rangle=\pi_f(b_0)|\Omega_f\rangle,
\end{equation}
and
\begin{equation}\label{eqn:transfer_1}
    \pi_f(c_1)|\Omega_f\rangle=\frac{1}{2\sqrt{2}}\pi_f(b_1)(3\cdot \Id+\pi_f(b_0))|\Omega_f\rangle.
\end{equation}
In the (cyclic) Hilbert space $H_f$ let $\pi_f(a_x)=A_x$, for $x=0,1$, $\pi_f(b_y)=B_y$ and $\pi_f(c_y)=C_y$, for $y=0,1$.

From the above equations, we can deduce that

\begin{equation*}
C_0^2|\Omega_f\rangle=C_0B_0|\Omega_f\rangle=B_0C_0|\Omega_f\rangle=B_0^2|\Omega_f\rangle=|\Omega_f\rangle.
\end{equation*}
Additionally, we observe that
\begin{equation*}
    C_0^2=\frac{17}{36}(2\Id+A_0A_1+A_1A_0).
\end{equation*}
In particular, $C_0^2$ commutes with not only Bob's operators, but also Alice's operators. Then, because $f$ is pure, $\pi_f$ is irreducible, and Schur's lemma implies that any operator commuting with the whole representation is scalar, hence $C_0^2=\Id$.

On the other hand, from the observation that $$(A_0+A_1)^2+(A_0-A_1)^2=4\Id,$$ we obtain $${36}C_0^2+{32}C_1^2=68\Id.$$
Substituting in $C_0^2=\Id$, we deduce that $$C_1^2=\frac{(68-36)}{32}\Id=\Id,$$
as desired.

Finally, we see that
\begin{equation*}
C_0C_1+C_1C_0=\frac{17}{24\sqrt{2}}\left((A_0+A_1)(A_0-A_1)+(A_0-A_1)(A_0+A_1)\right)=0.
\end{equation*}
From the above equations we see that
\begin{equation*}
    \frac{1}{2\sqrt{2}}(B_0B_1+B_1B_0)(3\Id+C_0)|\Omega_f\rangle=0.
\end{equation*}
Since, $(3\Id+C_0)$ is invertible and commutes with $B_0$ and $B_1$, we see that
\begin{equation*}
    (B_0B_1+B_1B_0)|\Omega_f\rangle=0.
\end{equation*}
Again, because the representation is irreducible, we can use Schur's lemma to deduce that 
\begin{equation}\label{eqn:bob_anticommutator}
    B_0B_1+B_1B_0=0,
\end{equation}
since the anticommutator in \cref{eqn:bob_anticommutator} commutes with the generators of both local algebras. Hence, by \cref{lem:anti_pair}, up to the unitary identification $U:H_f\to \C_2\otimes \C_2$, we can write $UC_0U^*= \sigma_Z\otimes \Id_2:=\wtd{C}_0$, $UC_1U^*= \sigma_X\otimes \Id_2:=\wtd{C}_1$, and $UB_0U^*= \Id_2 \otimes \sigma_Z:=\wtd{B}_0$, $UB_1U^*= \Id_2 \otimes \sigma_X:=\wtd{B}_1$. Solving, for Alice's observables we see that
\begin{equation*}
    UA_0U^*=\frac{3\sigma_Z+2\sqrt{2}\sigma_X}{\sqrt{17}}\otimes \Id_2:=\wtd{A}_0,\quad UA_1U^*=\frac{3\sigma_Z-2\sqrt{2}\sigma_X}{\sqrt{17}}\otimes \Id_2:=\wtd{A}_1.
\end{equation*}
For the state, let $|\wtd{\psi}\rangle=U|\Omega_f\rangle$. Plugging these operators into \cref{eqn:transfer_0} and \cref{eqn:transfer_1} above, we obtain
\begin{equation*}
    (\sigma_Z\otimes \Id)|\wtd{\psi}\rangle=(\Id\otimes \sigma_Z)|\wtd{\psi}\rangle,
\end{equation*}
and
\begin{equation*}
    (\sigma_X\otimes \Id)|\wtd{\psi}\rangle=\tfrac{1}{2\sqrt{2}}(\Id\otimes \sigma_X(3\Id+\sigma_Z))|\wtd{\psi}\rangle,
\end{equation*}
The first equation implies that $|\wtd{\psi}\rangle=u|00\rangle+v|11\rangle$, while the second implies that $v=\sqrt{2}u$. Since, $|u|^2+|v|^2=1$ we see that (up to a complex phase) $u=\frac{1}{\sqrt{3}}$ and $v=\frac{\sqrt{2}}{\sqrt{3}}$, so that

\begin{equation*}
    |\wtd{\psi}\rangle=\frac{|00\rangle+\sqrt{2}|11\rangle}{\sqrt{3}}
\end{equation*}

One can verify that this commuting operator model consisting of $$(\C^2\otimes\C^2,\wtd{A_0},\wtd{A_1},\wtd{B_0},\wtd{B_1},|\wtd{\psi}\rangle)$$ attains the value $\frac{3}{\sqrt{17}}$ in $G_{tCHSH(\frac{1}{\sqrt{2}})}$. Hence, by \cref{prop:optimal_state} we conclude that $f$ is the unique optimal projective state obtaining $\frac{3}{\sqrt{17}}$ for the game $G_{tCHSH(\frac{1}{\sqrt{2}})}$. To see that $f$ is the unique optimal state, we appeal to \cref{lem:projective_to_POVM} since the model $S_f$ is locally cyclic, with $\pi_f(\mcA_\mathrm{PVM}^{(2,2)}\otimes \mcI)\cong M_2(\C)\otimes I_2$ and $\pi_f(\mcI\otimes \mcA_\mathrm{PVM}^{(2,2)})\cong I_2\otimes M_2(\C)$, every optimal state is projective.

Now, we can take

\begin{align*}
    E_{00}^A&=\frac{\Id+C_0}{2}, & E_{11}^A=&\frac{\Id-C_0}{2}, & E_{01}^A&=E_{00}^AC_1, & E_{10}^A&=C_1E_{00}^A\\
    E_{00}^B&=\frac{\Id+B_0}{2}, & E_{11}^B=&\frac{\Id-B_0}{2}, & E_{01}^B&=E_{00}^BB_1, & E_{10}^B&=B_1E_{00}^B.
\end{align*}
Then the unitary identification $UE_{ij}^AU^*= e_{ij}\otimes \Id_2$, and $UE_{ij}^BU^*= \Id_2\otimes e_{ij}$, gives

$$UE_{00}^A|\Omega\rangle=(|0\rangle\langle0|\otimes \Id_2)|\wtd{\psi}\rangle=\frac{1}{\sqrt{3}}|0\rangle|0\rangle=(\Id_2\otimes |0\rangle\langle0|)|\wtd{\psi}\rangle=UE_{00}^B|\Omega\rangle,$$

and $$UE_{01}^A|\Omega\rangle=(|0\rangle\langle1|\otimes \Id_2)|\wtd{\psi}\rangle=\sqrt{\frac{2}{3}}|0\rangle|1\rangle=\sqrt{2}(\Id_2\otimes |1\rangle\langle0|)|\wtd{\psi}\rangle=\sqrt{2} U E_{10}^B|\Omega\rangle.$$

To complete the proof we can take $\wtd{B}_0=\Id_2\otimes \sigma_Z$, and therefore $\widetilde{Q}_0^0=\frac{\tilde{B}_0+\Id}{2}=\Id_2\otimes e_{00}$, as desired.
\end{proof}

\subsection{Coladangelo's embezzlement game}
Our version of Coladangelo's game consists of the two games presented in the previous section together with a third test that also sends questions from Feige's game to Alice and the tilted CHSH game to Bob. Here is the precise description. 

\begin{definition}\label{def:embgame}
    The game $G_{emb}$ has question and answer sets 
    \begin{align*}
       X_{emb}&=Y_{emb}=\{(0,F),(1,F),(0,t_{CHSH}), (1,t_{CHSH})\},\\ A_{emb}&=\{0,1,\bot\},B_{emb}=\{0,1,\bot\}.   
    \end{align*}
   It consists of three parts, each occurring with equal probability.
    \begin{itemize}
        \item[(a)] The verifier sends both players questions $(x,F), (y,F)$ from Feige's game,
and they win according to the predicate $V_F$.
        \item[(b)] The verifier sends both players questions $(x,t_{CHSH}), (y,t_{CHSH})$ from the tilted
CHSH game. They obtain a score according to the function 
\begin{align*}
    \tilde V_{tCHSH(\frac{1}{\sqrt{2}})}(a,b|x,y)=\begin{cases}
        V_{tCHSH(\frac{1}{\sqrt{2}})}(a,b|x,y) \text{ if } a,b\in\{0,1\}\\
        -2 \text{ if }a=\perp \text{ or }b=\perp.
    \end{cases}
\end{align*}
 Note that since $-2 < -1-\frac{2}{\sqrt{17}}$, the players will increase their score if they replace each time they answer $\perp$ by $1$. Therefore, no player answers $\perp$ on tilted-CHSH questions in an optimal strategy.
        \item[(c)] Alice receives the question $(0,F)$ from Feige's game, and Bob receives $(0,t_{CHSH})$ from the game $G_{tCHSH(\frac{1}{\sqrt{2}})}$. They win if either both players answer $0$ or neither player. Otherwise they lose.\label{def:test3}
    \end{itemize}
\end{definition}

\begin{proposition}\label{prop:value}
    The game $G_{emb}$ has quantum advantage with 
    \begin{align*}
        \omega_{qc}(G_{emb})=\frac{1}{3}(\omega_{qc}(G_F)+\omega_{qc}(G_{tCHSH(\frac{1}{\sqrt{2}})})+1)=\frac{3}{16}+\frac{1}{\sqrt{17}}+\frac{1}{3}=\frac{25}{48}+\frac{1}{\sqrt{17}}.
    \end{align*}
\end{proposition}

\begin{proof}
The argument for the upper bound is the same as in [\cite{coladangelo2020two}, Proposition 2]. 
To show that the upper bound is achieved, consider the embezzling family 
\begin{align*}
    \ket{\Gamma_d}=\frac{1}{\sqrt{N_d}}\sum_{k=1}^{2d} \frac{\ket{k}\otimes \ket{k}}{\sqrt{k}}\in \C^{2d}\otimes \C^{2d}, \quad N_d=\sum_{k=1}^{2d} \frac{1}{k}
\end{align*}
of \cite{van2003universal}. Define $U_d:\C^3\otimes \C^{2d}\to \C^3\otimes \C^{2d}$ such that 
\begin{align*}
    U_d(\ket{0}\otimes \ket{k})&=\ket{0}\otimes \ket{k}, \\ U_d(\ket{1}\otimes \ket{2j-1})&=\ket{1}\otimes \ket{j},\quad &&U_d(\ket{1}\otimes \ket{2j})=\ket{2}\otimes \ket{j},\\
    U_d(\ket{2}\otimes \ket{2j-1})&=\ket{1}\otimes \ket{d+j},\quad &&U_d(\ket{2}\otimes \ket{2j})=\ket{2}\otimes \ket{d+j},
\end{align*}
for $1\leq k\leq 2d$, $1\leq j\leq d$, which is a unitary since it send basis vectors bijectively. 
Now consider $\ket{\Psi_2}=\frac{1}{\sqrt{3}}(\ket{00}+\sqrt{2}\ket{11})\in \C^3\otimes \C^3$ and $\ket{\Phi_3}=\frac{1}{\sqrt{3}}(\ket{00}+\ket{11}+\ket{22})\in \C^3\otimes \C^3$, the states in the optimal quantum strategies for the tilted CHSH game and Feige's game, respectively. Then it holds 
\begin{align}
    \|(U_d\otimes U_d)(\ket{\Psi_2}\otimes \ket{\Gamma_d})-\ket{\Phi_3}\otimes \ket{\Gamma_d}\|= O((\log{d})^{-\frac{1}{2}}). \label{eq:emb}
\end{align}
Therefore, we obtain a family of quantum strategies converging to the upper bound as follows. Alice and Bob share the entangled state $\ket{\psi_d}=\ket{\Psi_2}_{AB}\otimes \ket{\Gamma_d}_{A'B'}$. They use the following operators.
\begin{itemize}
    \item For the tilted CHSH consider the ideal qubit measurements $\tilde E$ on $\mathrm{span} \{\ket{0}, \ket{1}\}$ and put $\tilde P^{(x,t_{CHSH})}_{\perp}=\tilde Q^{(y,t_{CHSH})}_{\perp}=\ket{2}\bra{2}$. Note that $\tilde Q^{(0,t_{CHSH})}_0=\ket{0}\bra{0}$. In the strategy, the  measurements $E=\tilde E\otimes I_{2d}$ are used.
    \item For Feige's game, choose the optimal qutrit measurements $\tilde F$ with Alice’s projection equal to $\tilde P_0^{(0,F)}=\ket{0}\bra{0}$. This is possible by a basis change, with the conjugate change on Bob that preserves $\ket{\Phi_3}$. In the strategy, the players use $F=U_{d}^{*}(\tilde F\otimes I_{2d})U_{d}$.
\end{itemize}
This strategy achieves score exactly $\frac{3}{\sqrt{17}}$ for tilted CHSH, the value for Feige's game converges to $\frac{9}{16}$ by \Cref{eq:emb} and the consistency test \emph{(c)} in \Cref{def:embgame} has exactly value $1$ because $(P_0^{(0,F)}\otimes I_{BB'})\ket{\psi_d}=(I_{AA'}\otimes Q^{(0,t_{CHSH})}_0)\ket{\psi_d}=\frac{1}{\sqrt{3}}\ket{00}_{AB}\otimes \ket{\Gamma_d}_{A'B'}$, since $U_d$ fixes $\ket{0}\otimes \ket{k}$.
\end{proof}

\section{Uniqueness of the state}

The goal of this section is to give a proof of our main result: The game $G_{emb}$ introduced in the previous section is a commuting operator self-test with no tensor-product model. We start with a lemma that explains the impact of subgame \emph{(c)} in \Cref{def:embgame}.

\begin{lemma}\label{lem:overlap}
Let $S=(\ket{\psi}\in H, M_a^{\tilde x}, N_b^{\tilde y})$ be an optimal commuting operator strategy for the game $G_{emb}$. Then
\begin{align*}
    M_0^{(0,F)}\ket{\psi}= N_0^{(0,t_{CHSH})}\ket{\psi}.
\end{align*}
\end{lemma}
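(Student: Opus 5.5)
The plan is to show first that an optimal strategy must be optimal for each of the three parts separately, and then to extract the identity from part (c) with a Cauchy--Schwarz argument.

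\emph{Step 1: optimality splits across the parts.} The value of $G_{emb}$ is one third of the sum of the three sub-scores. For any commuting operator strategy, part (a) scores at most $\omega_{qc}(G_F)$ and part (c) scores at most $1$. For part (b), as remarked in \cref{def:embgame}, replacing every $\perp$ answer by $1$ can only increase the score, and the modified strategy is a tilted CHSH strategy. So part (b) scores at most $\omega_{qc}(G_{tCHSH(\frac{1}{\sqrt{2}})})$. By \cref{prop:value}, an optimal strategy attains the sum of these three maxima, so it must attain each one. In particular, part (c) is won with probability $1$.

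\emph{Step 2: reduce part (c) to two vanishing norms.} Write $P=M_0^{(0,F)}$ and $Q=N_0^{(0,t_{CHSH})}$. These are commuting positive contractions. Winning part (c) with probability $1$ means
\[
\langle\psi|P(\Id-Q)|\psi\rangle=0 \quad\text{and}\quad \langle\psi|(\Id-P)Q|\psi\rangle=0 .
\]
Because $P$ and $Q$ commute and $0\le P,Q\le \Id$, the operator $P(\Id-Q)$ is positive and equals $\bigl(P^{1/2}(\Id-Q)^{1/2}\bigr)^2$. Hence $P^{1/2}(\Id-Q)^{1/2}|\psi\rangle=0$, and multiplying by $P^{1/2}(\Id-Q)^{1/2}$ gives $P(\Id-Q)|\psi\rangle=0$. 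The same argument applied to the second identity gives $(\Id-P)Q|\psi\rangle=0$.

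\emph{Step 3: conclude.} From the two identities we get $P|\psi\rangle=PQ|\psi\rangle$ and $Q|\psi\rangle=PQ|\psi\rangle$. Subtracting yields $P|\psi\rangle=Q|\psi\rangle$, which is the claim of the lemma.

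The only point needing care is Step 1, namely that a strategy for part (b) which may output $\perp$ is still bounded by the tilted CHSH value. This holds because the remapping $\perp\mapsto 1$ merges the effects $M_1+M_\perp$, which preserves commutativity and yields a valid two-outcome tilted CHSH strategy of at least the same score. Notably, no projectivity assumption is needed anywhere: the argument in Step 2 works directly with POVM effects.
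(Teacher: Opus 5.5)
Your proof is correct and follows the paper's route: optimality forces test (c) to be won with probability one, and the vanishing of $\langle\psi|P(\Id-Q)|\psi\rangle$ and $\langle\psi|(\Id-P)Q|\psi\rangle$ then gives $P|\psi\rangle=Q|\psi\rangle$. The differences are cosmetic. The paper expands $\|P|\psi\rangle-Q|\psi\rangle\|^2$ and uses $P^2\le P$, $Q^2\le Q$, whereas you write $P(\Id-Q)$ as the square of the self-adjoint operator $P^{1/2}(\Id-Q)^{1/2}$; you also spell out the per-test splitting, including the $\perp\mapsto 1$ merging, which the paper only asserts.
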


\begin{proof}
Since the strategy is optimal, it achieves value $1$ for test $(c)$ in \Cref{def:test3}. This means that the probability of exactly one of them answering $0$ given the questions for test $(c)$ is zero, i.e. 
\begin{align*}
  \bra{\psi}M^{(0,F)}_0N^{(0,t_{CHSH})}_t\ket{\psi}=\bra{\psi}M^{(0,F)}_sN^{(0,t_{CHSH})}_0\ket{\psi}=0  
\end{align*} 
for $s,t \neq 0$. Therefore, using $\sum_aM^{(0,F)}_a=1=\sum_b N^{(0,t_{CHSH})}_b$, we obtain
\begin{align*}
  \bra{\psi}M^{(0,F)}_0\ket{\psi} =\bra{\psi}M^{(0,F)}_0N^{(0,t_{CHSH})}_0\ket{\psi} =\bra{\psi}N^{(0,t_{CHSH})}_0\ket{\psi}.
\end{align*} 
We deduce 
\begin{align*}
    \|&M_0^{(0,F)}\ket{\psi}-N_0^{(0,t_{CHSH})}\ket{\psi}\|^2\\
    &=\bra{\psi}(M^{(0,F)}_0)^2\ket{\psi} -2\bra{\psi}M^{(0,F)}_0N^{(0,t_{CHSH})}_0\ket{\psi} +\bra{\psi}(N^{(0,t_{CHSH})}_0)^2\ket{\psi}\\
    &\leq\bra{\psi}M^{(0,F)}_0\ket{\psi} -2\bra{\psi}M^{(0,F)}_0N^{(0,t_{CHSH})}_0\ket{\psi} +\bra{\psi}N^{(0,t_{CHSH})}_0\ket{\psi}\\
    &=0,
\end{align*}
where we used the equality above and the fact that $M^{(0,F)}_0$ and $N^{(0,t_{CHSH})}_0$ are contractions. This finishes the proof.
\end{proof}

Now, we want to show that there is a unique state for the optimal correlation of $G_{emb}$. We will do this by inspecting \Cref{thm:Feige} and \Cref{thm:tCHSH} as well as the previous lemma, which tells us that the operators of an optimal strategy generate an algebra of the following form. Let $\mathcal{A}\subseteq \mbB(H)$ for $H\neq \{0\}$ be a unital algebra 
\begin{align}
    \mathcal{A}=C^*(e_{ij}, f_{ab}, i,j \in \{0,1,2\}, a,b\in&\{0,1\}|e_{ij}^* = e_{ji}, e_{ij} e_{kl} = \delta_{jk}e_{il}, \sum_ie_{ii}=1,\nonumber\\f_{ab}^* = &f_{ba}, f_{ab} f_{cd} = \delta_{bc}f_{ad},f_{00}=e_{00},f_{11}=e_{11}+e_{22}).\label{eq:gamealg}
\end{align}
Note that the generators $e_{ij}$ and $f_{ab}$ can be thought of as matrix units of $M_3(\C)$ and $M_2(\C)$, respectively, where we have the additional relations $f_{00}=e_{00}, f_{11}=e_{11}+e_{22}$. This algebra contains the Cuntz algebra, as we see in the next lemma. 

\begin{lemma}\label{lem:isoCuntz}
    Consider the algebra $\mathcal{A}$ as in \Cref{eq:gamealg} and let $P=e_{00}=f_{00}$. Then there is a unital $*$-isomorphism $\theta:\mathcal O_2\mapsto P\mathcal{A}P$ with $V_r\mapsto f_{01}e_{r0}$.
\end{lemma}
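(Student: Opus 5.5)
The proof has two parts: check the Cuntz relations inside the corner $P\mathcal{A}P$, then show the resulting homomorphism is injective (via simplicity of $\mathcal O_2$) and surjective (via a word-reduction argument).

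\textbf{Indexing.} The index $r$ must run over $\{1,2\}$. For $r=0$ we would get $f_{01}e_{00}=f_{01}f_{11}e_{00}=f_{01}(e_{11}+e_{22})e_{00}=0$. So I read the statement as $V_0\mapsto f_{01}e_{10}$ and $V_1\mapsto f_{01}e_{20}$, and write $W_r:=f_{01}e_{r0}$ for $r=1,2$.

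\textbf{Cuntz relations in the corner.} First, $W_r\in P\mathcal A P$, because $Pf_{01}=f_{00}f_{01}=f_{01}$ and $e_{r0}P=e_{r0}e_{00}=e_{r0}$. Next I compute the two relations, using $f_{10}f_{01}=f_{11}=e_{11}+e_{22}$ and $e_{0r}(e_{11}+e_{22})e_{s0}=\delta_{rs}e_{00}$:
\begin{equation*}
W_r^*W_s=e_{0r}f_{10}f_{01}e_{s0}=\delta_{rs}P,\qquad \sum_{r=1}^2 W_rW_r^*=f_{01}(e_{11}+e_{22})f_{10}=f_{01}f_{11}f_{10}=f_{00}=P.
\end{equation*}
Thus the $W_r$ are isometries in the unital $C^*$-algebra $P\mathcal AP$ (with unit $P$) satisfying the Cuntz relation. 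The universal property of $\mathcal O_2$ then gives a unital $*$-homomorphism $\theta$ with $V_r\mapsto W_r$.

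\textbf{Injectivity.} I would first check that $P\neq 0$. If $P=e_{00}=0$, then $e_{ij}=e_{i0}e_{0j}=e_{i0}e_{00}e_{0j}=0$ for all $i,j$, so $\sum_i e_{ii}=0\neq 1$, which is impossible since $H\neq\{0\}$. Since $\mathcal O_2$ is simple and $\theta$ is unital into a nonzero algebra, $\ker\theta=0$.

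\textbf{Surjectivity (the main obstacle).} I need to show that $P wP\in C^*(W_1,W_2)$ for every word $w$ in the generators $e_{ij},f_{ab}$; density then finishes the argument. The approach is a column-reduction argument. Let $\mathcal B:=C^*(W_1,W_2)$ and consider the "columns" $\mathcal C=\{P,\,e_{10},\,e_{20},\,f_{10}\}$. The claim is that for every generator $g$ and every $c\in\mathcal C$, the product $gc$ lies in $\mathrm{span}\{c'x: c'\in\mathcal C,\ x\in\mathcal B\}$. The key identities are
\begin{itemize}
\item $f_{01}e_{r0}=PW_r$, \quad $e_{0r}f_{10}=PW_r^*$, \quad $f_{01}f_{10}=P$;
\item $f_{11}e_{r0}=e_{r0}$, \quad $f_{00}e_{r0}=f_{10}e_{r0}=0$;
\item $e_{ij}f_{10}=e_{ij}(e_{11}+e_{22})f_{10}=\sum_{s}e_{ij}e_{s0}W_s^*$, where $e_{ij}e_{s0}=\delta_{js}e_{i0}$ is again a column;
\item $e_{ij}e_{k0}=\delta_{jk}e_{i0}$, together with the analogous products of $f_{ab}$ with $f_{10}$ and with $P$.
\end{itemize}
Each case is routine but must be checked exhaustively; doing so is the main work of the lemma.

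By induction on word length, $wP\in\mathrm{span}\,\mathcal C\,\mathcal B$. Finally, $Pc$ is either $P$ (for $c=P$) or $0$ (since $Pe_{r0}=0$ and $Pf_{10}=f_{00}f_{10}=0$). Hence $PwP\in\mathcal B$, so $\theta$ is onto and therefore a $*$-isomorphism.
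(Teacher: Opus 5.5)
Your proof is correct. The Cuntz-relation check and the injectivity step via simplicity of $\mathcal O_2$ coincide with the paper's. You also make explicit two points the paper leaves implicit:
\begin{itemize}
\item that $P\neq 0$, which is what lets simplicity give injectivity and is where the standing assumption $H\neq\{0\}$ enters;
\item that the generators are really indexed by $r\in\{1,2\}$ rather than $\{0,1\}$.
\end{itemize}

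The real difference is in surjectivity. The paper argues structurally. It sets $\mathcal B=\overline{\mathrm{span}}\{e_{i0}xe_{0j}: x\in C^*(\tilde V_1,\tilde V_2)\}$, which is a $C^*$-subalgebra because the $e_{ij}$ are matrix units and $x$ lies in the corner. It then notes that $\mathcal B$ contains every $e_{ij}$ and also $f_{01}=\tilde V_1e_{01}+\tilde V_2e_{02}$, so $\mathcal B=\mathcal A$. Compressing gives $P\mathcal AP=P\mathcal BP=C^*(\tilde V_1,\tilde V_2)$. In effect this identifies $\mathcal A$ with $3\times 3$ matrices over the corner.

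Your word-by-word induction proves the left-ideal version of the same fact, $\mathcal AP\subseteq\overline{\mathrm{span}}\,\mathcal C\,C^*(W_1,W_2)$. The identities you list do cover every product of a generator with a column (I checked the full table), so the induction goes through. Your fourth column is redundant, since $f_{10}=f_{11}f_{10}=\sum_{r}e_{rr}f_{10}=\sum_r e_{r0}W_r^*$. Dropping it leaves exactly the paper's columns $e_{00},e_{10},e_{20}$.

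The paper's route is shorter and makes $\mathcal A\cong M_3(P\mathcal AP)$ visible, but it leaves unstated the easy fact that its span is closed under products. Yours is longer but entirely mechanical: it needs only a finite table of products and no closure check beyond the induction.
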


\begin{proof}
First note that $P$ is the identity in the algebra $P\mathcal{A}P$. Now, define $\tilde V_r=f_{01}e_{r0}=f_{00}f_{01}e_{r0}e_{00}=Pf_{01}e_{r0}P\in P\mathcal{A}P$, $r\in \{1,2\}$. We compute
\begin{align*}
    \tilde V_r^*\tilde V_s=e_{0r}f_{10}f_{01}e_{s0}=e_{0r}f_{11}e_{s0}=e_{0r}(1-e_{00})e_{s0}=\delta_{rs}P
\end{align*}
and
\begin{align*}
    \tilde V_1 \tilde V_1^*+\tilde V_2 \tilde V_2^*&=f_{01}e_{10}e_{01}f_{10}+f_{01}e_{20}e_{02}f_{10}\\
    &=f_{01}(e_{11}+e_{22})f_{10}=f_{01}f_{11}f_{10}=P.
\end{align*}
By the universal property there exists a unital $*$-homomorphism $ \theta:\mathcal O_2\to P\mathcal AP$ such that $V_r \mapsto \tilde V_r=f_{01}e_{r0}$, $1\mapsto P$. 

The map $\theta$ is surjective because of the following. Consider 
\begin{align*}
    \mathcal B=\overline{\mathrm{span}}\{e_{i0}xe_{0j}| x \in C^*_{P\mcA P}(\tilde V_1, \tilde V_2), i,j \in \{0,1,2\}\}.
\end{align*}
This $C^*$-subalgebra of $\mathcal A$ contains all $e_{ij}$ as $P\in C^*_{P\mcA P}(\tilde V_1, \tilde V_2)$ and it contains $f_{01}$ because $f_{01}=\tilde V_1 e_{01}+\tilde V_2e_{02}$. Therefore $\mathcal B=\mathcal A$, which shows $P\mathcal AP=P\mathcal BP=C^*(\tilde V_1, \tilde V_2)$. 

Finally, the map $\theta$ is injective since the Cuntz algebra $\mathcal O_2$ is simple.
\end{proof}

Now, the idea is to use \Cref{thm:unique_state_embezzlement} for a unique state on the Cuntz algebra and then show that its extension is also unique. 

\begin{lemma}\label{lem:uniquestate}
    Fix non-commutative $*$-polynomials $P^x_a$ and $Q^y_b$ in symbols $\tilde e_{ij}, \tilde f_{uv}$, $i,j\in \{0,1, 2\}, u,v\in \{0,1\}$. Let $\phi:\mathcal A_{POVM}^{X,A}\otimes_{max}\mathcal A_{POVM}^{Y,B}\to \C$ be a state with GNS representation $(\pi_\phi,H_\phi,\ket{v_\phi})$ such that 
      $  \mathcal M_A=\pi_{\phi}(A_{POVM}^{X,A}\otimes 1), \mathcal M_B=\pi_{\phi}(1\otimes A_{POVM}^{Y,B})$
    are generated by elements $\tilde e_{ij}^A, \tilde f_{uv}^A$ and $\tilde e_{ij}^B, \tilde f_{uv}^B$, respectively, that each satisfy the relations in \Cref{eq:gamealg}. Assume that 
    \begin{align}
        \pi_{\phi}(m^x_a\otimes 1)=P^x_a(\tilde e^A_{ij},\tilde f^A_{uv}) \text{ and } \pi_{\phi}(1\otimes n^y_b)=Q^y_b(\tilde e^B_{ij},\tilde f^B_{uv}),\label{eq:localgebra}
    \end{align}
        \begin{align}
    \tilde e_{ij}^A\ket{v_\phi}=\tilde e_{ji}^B\ket{v_\phi},\label{eq:flip}
\end{align}
    \begin{align}
    \bra{v_{\phi}}\tilde f_{01}^A\tilde e_{r0}^A\tilde f_{01}^B\tilde e_{s0}^B\ket{v_{\phi}}=\frac{1}{3\sqrt{2}}\delta_{rs}, \quad r,s\in\{1,2\}.\label{eq:valuegen}
\end{align}
Then the state $\phi$ is unique.
\end{lemma}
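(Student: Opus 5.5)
Since $\phi$ is determined by its moments $\phi(w_A\otimes w_B)=\langle v_\phi|\pi_\phi(w_A\otimes 1)\pi_\phi(1\otimes w_B)|v_\phi\rangle$ for words $w_A$ in the $m_a^x$ and $w_B$ in the $n_b^y$, \cref{eq:localgebra} lets us rewrite every such moment as a value $\langle v_\phi|X Y|v_\phi\rangle$ with $X\in\mcM_A$ a fixed noncommutative polynomial in $\tilde e^A_{ij},\tilde f^A_{uv}$ and $Y\in\mcM_B$ a fixed polynomial in $\tilde e^B_{ij},\tilde f^B_{uv}$. So it suffices to show that the functional $\omega(X\otimes Y):=\langle v_\phi|XY|v_\phi\rangle$ on (the images of) $\mcA\otimes_{\max}\mcA$ is uniquely determined by the hypotheses. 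The plan is to reduce this to a state on $P\mcA P\otimes_{\max} P\mcA P\cong\mcO_2\otimes_{\max}\mcO_2$ and invoke \cref{thm:unique_state_embezzlement}.

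\emph{Step 1 (reduction to corners).} Using the matrix units, every $X\in\mcA^A$ decomposes as $X=\sum_{i,j}\tilde e^A_{i0}\,(\tilde e^A_{0i}X\tilde e^A_{j0})\,\tilde e^A_{0j}$ with $\tilde e^A_{0i}X\tilde e^A_{j0}\in P_A\mcA^AP_A$, and similarly for $Y$. Using commutation of Alice's and Bob's operators and \cref{eq:flip} (together with its adjoint-type consequence $\langle v_\phi|\tilde e^A_{ij}=\langle v_\phi|\tilde e^B_{ji}$), I will move the outer matrix units $\tilde e^A_{i0},\tilde e^A_{0j}$ onto Bob's side as $\tilde e^B_{0i},\tilde e^B_{j0}$, and then multiply them into Bob's element. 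The net effect is that each term $\langle v_\phi| \tilde e^A_{i0}X'\tilde e^A_{0j}\,Y|v_\phi\rangle$ becomes $\langle v_\phi|X'\,Y''|v_\phi\rangle$ with $X'\in P_A\mcA P_A$ and $Y''$ some element of Bob's algebra; applying the same trick once more for Bob (now pushing Bob's outer units back onto Alice's $P_A$, which is harmless since $\tilde e^A_{00}X'=X'$ is absorbed) yields terms $\langle v_\phi|X' Y'|v_\phi\rangle$ with $X'\in P_A\mcA P_A$, $Y'\in P_B\mcA P_B$, with coefficients depending only on the fixed polynomials. A useful sanity identity is $\langle v_\phi|P_AP_B|v_\phi\rangle=\langle v_\phi|\tilde e^A_{00}|v_\phi\rangle=1/3$ (from \cref{eq:flip} and $\sum_i\tilde e^A_{ii}=1$ with symmetry of the three diagonal weights, which itself follows from \cref{eq:flip} by $\langle \tilde e^A_{ii}\rangle=\|\tilde e^A_{0i}v_\phi\|^2$ computed both ways).

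\emph{Step 2 (the corner state).} Define $g(X'\otimes Y')=3\langle v_\phi|X'Y'|v_\phi\rangle$ on $P_A\mcA P_A\otimes_{\max}P_B\mcA P_B$. It is positive, and unital because $P_AP_Bv_\phi$ has norm squared $1/3$. Via \cref{lem:isoCuntz}, $g$ is a state on $\mcO_2\otimes_{\max}\mcO_2$ (the max tensor product accommodates the commuting representation), and \cref{eq:valuegen} says exactly $g(v_r\otimes w_s)=\delta_{rs}/\sqrt2$, i.e. $\alpha_1=\alpha_2=1/\sqrt2$ after relabelling indices $\{1,2\}$ as $\{0,1\}$. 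By \cref{thm:unique_state_embezzlement}, $g$ is the unique such state, so all corner moments are fixed, hence by Step 1 all moments of $\phi$ are fixed, and $\phi$ is unique.

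The main obstacle is Step 1: the bookkeeping of transporting matrix units across via \cref{eq:flip} must be done carefully, in particular checking that after transport the resulting element on each side genuinely lies in the corner (one needs $\tilde e^B_{0i}$ to land adjacent to $P_B$ on both sides of Bob's compression, which uses that the transported units appear on the correct sides of the vector). I would first handle the rank-one case $X=\tilde e^A_{i0}X'\tilde e^A_{0j}$, $Y=\tilde e^B_{k0}Y'\tilde e^B_{0l}$, show the moment equals $\delta$-type factors times $\langle v_\phi|X'\,(\text{corner element})|v_\phi\rangle$, and then extend by linearity and continuity.
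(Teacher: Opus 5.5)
Your proposal is correct and follows essentially the same route as the paper. The paper likewise gets the diagonal weights $1/3$ from \cref{eq:flip}, proves $\bra{v_\phi}\tilde e_{i0}^A x^A\tilde e_{0j}^A\tilde e_{k0}^B y^B\tilde e_{0l}^B\ket{v_\phi}=\delta_{ik}\delta_{jl}\bra{v_\phi}x^Ay^B\ket{v_\phi}$ for corner elements (your rank-one case), and applies \cref{thm:unique_state_embezzlement} to the corner state $3\bra{v_\phi}\theta_A(\cdot)\theta_B(\cdot)\ket{v_\phi}$ on $\mcO_2\otimes_{\max}\mcO_2$. One point deserves to be stated explicitly, though the paper also handles it in a single sentence: each compressed polynomial $\tilde e^A_{0i}X\tilde e^A_{j0}$ must equal $\theta_A$ of a fixed $*$-polynomial in $V_1,V_2$ that does not depend on $\phi$. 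This follows from $\tilde f^A_{01}=\theta_A(V_1)\tilde e^A_{01}+\theta_A(V_2)\tilde e^A_{02}$ together with $\tilde e^A_{ij}=\tilde e^A_{i0}\tilde e^A_{00}\tilde e^A_{0j}$.
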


\begin{proof}
We compute 
\begin{align}
    \bra{{v_\phi}}\tilde e_{ii}^A\ket{v_\phi}
    =\bra{{v_\phi}}\tilde e_{ij}^A\tilde e_{ji}^A\ket{v_\phi}=\bra{{v_\phi}}\tilde e_{ji}^B\tilde e_{ij}^B\ket{v_\phi}=\bra{{v_\phi}}\tilde e_{jj}^B\ket{v_\phi}.\label{eq:normalization}
\end{align}
Therefore $\bra{{v_\phi}}\tilde e_{ii}^A\ket{v_\phi}=\bra{{v_\phi}}\tilde e_{kk}^A\ket{v_\phi}$ for $i,k\in\{0,1,2\}$ and since we have $\phi(1)=1$ and $\sum_{a=0}^2\tilde e_{aa}=1$, we deduce $\bra{{v_\phi}}\tilde e_{ii}^A\ket{v_\phi}=\bra{{v_\phi}}\tilde e_{jj}^B\ket{v_\phi}=\frac{1}{3}$ for all $i,j\in\{0,1,2\}$. Let $P^A=\tilde e_{00}^A$ and $P^B=\tilde e_{00}^B$. For $x^A\in P^A\mathcal M_AP^A, y^B\in P^B \mathcal M_BP^B$, it holds
\begin{align}
    \bra{{v_\phi}}\tilde e_{i0}^Ax^A\tilde e_{0j}^A\tilde e_{k0}^B y^B\tilde e_{0l}^B\ket{v_\phi}&=\bra{{v_\phi}}\tilde e_{0i}^Bx^A\tilde e_{0j}^A\tilde e_{k0}^B y^B\tilde e_{l0}^A\ket{v_\phi}\nonumber\\
    &=\bra{{v_\phi}}x^A\tilde e_{0j}^A\tilde e_{l0}^A\tilde e_{0i}^B\tilde e_{k0}^B y^B\ket{v_\phi}\nonumber\\
&=\delta_{jl}\delta_{ik}\bra{{v_\phi}}x^A y^B\ket{v_\phi}\label{eq:statecomp}
\end{align}
where we used \Cref{eq:flip} several times. 

Define $\psi:\mathcal O_2\otimes_{\max} \mathcal O_2\to \C$ by
\begin{align*}
    \psi(x\otimes y)=3\bra{{v_\phi}}\theta_A(x)\theta_B(y)\ket{v_\phi},
\end{align*}
where $\theta_C:\mathcal O_2\to P^C\mathcal M_CP^C$ is the $*$-isomorphism from \Cref{lem:isoCuntz} for $C=A,B$. It holds
\begin{align*}
    \psi(1\otimes 1)=3\bra{{v_\phi}}\tilde e_{00}^A \tilde e_{00}^B\ket{v_\phi}=3\bra{{v_\phi}}\tilde e_{00}^A\ket{v_\phi}=1
\end{align*}
by \Cref{eq:normalization}  and $\psi$ is positive since it is a vector state for the vector $\sqrt{3}P^AP^B\ket{v_\phi}$. Therefore $\psi$ is a state. Furthermore
\begin{align*}
   \psi(V_r^A\otimes V_s^B)=3\bra{v_{\phi}}\tilde f_{01}^A\tilde e_{r0}^A\tilde f_{01}^B\tilde e_{s0}^B\ket{v_{\phi}}=\frac{1}{\sqrt{2}}\delta_{rs}
\end{align*}
by \Cref{eq:valuegen}. By \Cref{thm:unique_state_embezzlement} the state $\psi$ is unique.

Using \Cref{eq:statecomp}, we get
\begin{align*}
    \bra{{v_\phi}}\tilde e_{i0}^A\theta_A(x)\tilde e_{0j}^A\tilde e_{k0}^B \theta_B(y)\tilde e_{0l}^B\ket{v_\phi}=\delta_{jl}\delta_{ik}\bra{{v_\phi}}\theta_A(x) \theta_B(y)\ket{v_\phi}
    =\frac{\delta_{ik}\delta_{jl}}{3}\psi(x\otimes y)
\end{align*}
By \Cref{eq:localgebra} the measurement generators are fixed expressions in the matrix units. The elements $\tilde e_{i0}^A\theta_A(x)\tilde e_{0j}^A\tilde e_{k0}^B \theta_B(y)\tilde e_{0l}^B$ span a dense $*$-subalgebra of $\pi_{\phi}(\mathcal A_{POVM}^{X,A}\otimes_{max}\mathcal A_{POVM}^{Y,B})$ and since the polynomials in \Cref{eq:localgebra} are fixed, the matrix-unit and Cuntz relations express each word in the measurement generators as a linear combination of the displayed elements, with coefficients and words independent of the representation. Therefore \Cref{eq:statecomp} uniquely determines $\phi$ from values of $\psi$. Since $\psi$ is unique, there is at most one state $\phi$ satisfying these conditions.
\end{proof}

We are now ready to prove the main theorem.

\begin{theorem}\label{thm:comopselftest}
    The optimal correlation for $G_{emb}$ is a commuting operator self-test. 
\end{theorem}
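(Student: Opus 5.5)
By \cref{prop:optimal_state}, it is enough to show that every \emph{pure} optimal state $\phi$ on $\mcA_{\mathrm{POVM}}^{(X,A)}\otimes_{\max}\mcA_{\mathrm{POVM}}^{(Y,B)}$ for $G_{emb}$ is one fixed state. Fix such a $\phi$ and let $(\pi_\phi,H_\phi,\ket{v_\phi})$ be its GNS triple.

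\textbf{Step 1: each subgame is played optimally.} By \cref{prop:value}, the value $\omega_{qc}(G_{emb})$ equals the sum of the three subgame values. Each subgame value is an upper bound for its part of the score. So an optimal $\phi$ must be optimal in each of tests (a), (b) and (c) separately. Restricting $\phi$ to the Feige questions gives an optimal state for $G_F$. Restricting it to the tilted CHSH questions gives an optimal state for $G_{tCHSH(\frac{1}{\sqrt 2})}$, after noting that $\perp$ is never answered there, since answering it strictly lowers the score.

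\textbf{Step 2: matrix units on the full space.} \cref{thm:Feige}(3),(4) apply on the cyclic subspace generated by the Feige operators. They give matrix units $\tilde e^A_{ij},\tilde e^B_{ij}$ of $M_3(\C)$ satisfying $\tilde e^A_{ij}\ket{v_\phi}=\tilde e^B_{ji}\ket{v_\phi}$ and $\pi(m_0^{(0,F)})\ket{v_\phi}=\tilde e^A_{00}\ket{v_\phi}$. \cref{thm:tCHSH}(3),(4) give $2\times2$ matrix units $\tilde f^A_{uv},\tilde f^B_{uv}$, with the stated relations and $\pi(n_0^{(0,t_{CHSH})})\ket{v_\phi}=\tilde f^B_{00}\ket{v_\phi}$.

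These theorems only control the operators on cyclic subspaces, so the first thing to establish is that the matrix-unit relations hold on all of $H_\phi$. I would argue as follows. Each relation, for example $\sum_i\tilde e_{ii}=1$ or $\tilde e^A_{ij}\tilde e^A_{kl}=\delta_{jk}\tilde e^A_{il}$, is an element of one local algebra that annihilates $\ket{v_\phi}$ through the Feige SOS. Hence it annihilates the vectors obtained by applying elements of the other local algebra to $\ket{v_\phi}$. Using the flip identities $\tilde e^A_{ij}\ket{v_\phi}=\tilde e^B_{ji}\ket{v_\phi}$ (and the analogue for the $f$'s), Alice's Feige algebra applied to $\ket{v_\phi}$ can be rewritten in terms of Bob's. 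This gives local cyclicity of $\ket{v_\phi}$ for each player's full algebra, and so the relations hold on all of $H_\phi$. The measurement operators are then fixed polynomials in these units (projective, by \cref{lem:projective_to_POVM} applied to each subgame), which is \cref{eq:localgebra}.

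\textbf{Step 3: gluing the two subgames.} Combine \cref{lem:overlap} with the identities above:
\[
\tilde e^A_{00}\ket{v_\phi}=\tilde f^B_{00}\ket{v_\phi},
\]
and hence, by the flip relations, $\tilde e^B_{00}\ket{v_\phi}=\tilde f^A_{00}\ket{v_\phi}$ as well. Now $\tilde e^A_{00}-\tilde f^A_{00}$ lies in Alice's algebra and kills $\ket{v_\phi}$. By the local cyclicity of Step 2 for Bob's algebra, it vanishes identically. This gives $\tilde f_{00}=\tilde e_{00}$ and $\tilde f_{11}=\tilde e_{11}+\tilde e_{22}$ on each side, which are exactly the relations of \cref{eq:gamealg}.

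\textbf{Step 4: the value condition.} It remains to check \cref{eq:valuegen}. Pass the $\tilde f$ operators across to Bob using $\tilde f^A_{01}\ket{v_\phi}=\sqrt2\,\tilde f^B_{10}\ket{v_\phi}$, and the $\tilde e$ operators using the Feige flip. The word $\tilde f^A_{01}\tilde e^A_{r0}\tilde f^B_{01}\tilde e^B_{s0}$ then reduces to a multiple of $\delta_{rs}\bra{v_\phi}\tilde e_{00}\ket{v_\phi}$, and the constant should come out as $\frac{1}{3\sqrt2}$. This constant must be checked carefully against the ideal embezzlement model, since the $\sqrt2$ factor depends on the direction of the tilted flip.

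Given Steps 2–4, \cref{lem:uniquestate} shows that $\phi$ is unique. \cref{prop:optimal_state} then makes the optimal correlation unique and a commuting operator self-test. I expect Step 2 — the locally cyclic / global-relation argument, which is needed before \cref{lem:uniquestate} applies — to be the main obstacle, with the constant in Step 4 the second place where care is needed.
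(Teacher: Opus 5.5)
Your proposal is correct and is essentially the paper's proof. It follows the same steps: subgame-wise optimality; lifting the Feige and tilted-CHSH matrix units from the cyclic subspaces to all of $H_\phi$ by showing $\ket{v_\phi}$ is cyclic, hence separating, for both full local algebras (the paper's \cref{lem:separating}, which also forces the $\perp$ effects on tilted-CHSH questions to vanish identically); gluing $\tilde e_{00}=\tilde f_{00}$ via \cref{lem:overlap}; and concluding with \cref{lem:uniquestate}. Your reduction to pure states is harmless but purity is never used, so the argument applies directly to every optimal state, as in the paper.

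The constant in Step 4 does come out as $\tfrac{1}{3\sqrt2}$. Move $\tilde f^A_{01}$ off the bra using the adjoint of the other tilted identity, $\bra{v_\phi}\tilde f^A_{01}=\tfrac{1}{\sqrt2}\bra{v_\phi}\tilde f^B_{10}$, then use $\tilde f^B_{10}\tilde f^B_{01}=1-\tilde e^B_{00}$ and the Feige flip; this gives $\tfrac{1}{\sqrt2}\bra{v_\phi}\tilde e^B_{0r}(1-\tilde e^B_{00})\tilde e^B_{s0}\ket{v_\phi}=\tfrac{\delta_{rs}}{3\sqrt2}$.
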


Before we prove the theorem we recall a well known lemma from operator algebras, with the argument in the proof being similar to [\cite{murphy}, p. 134].

\begin{lemma}\label{lem:separating}
Consider unital $C^*$-algebras $\mathcal A_1, \mathcal A_2$. Let $(\pi, H, \ket{v})$ be a cyclic representation of $\mathcal A_1 \otimes_{max} \mathcal A_2$. Define
\begin{align*}
    \mathcal M_{1}=\pi(\mathcal A_1\otimes 1), \quad \mathcal M_2=\pi(1\otimes \mathcal A_2)
\end{align*}
and $M_i=C^*(S_i)$, where every element $a\in S_i$ is self-adjoint. Assume that for any generator $a \in S_1$ there exists $b \in \mathcal M_2$ such that $a\ket{v}=b\ket{v}$ and vice versa. Then $\ket{v}$ is a cyclic and separating vector for both $\mathcal M_1$ and $\mathcal M_2$.
\end{lemma}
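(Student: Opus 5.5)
Cyclicity is immediate from the generator hypothesis. Separating-ness will come from the standard fact that a vector is separating for an algebra exactly when it is cyclic for its commutant.

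\textbf{Step 1: $\ket{v}$ is cyclic for $\mathcal M_1$.} Let $K=\overline{\mathcal M_1\ket{v}}$. Since $\ket{v}$ is cyclic for $\pi(\mathcal A_1\otimes_{\max}\mathcal A_2)$, and the products $\pi(a\otimes 1)\pi(1\otimes b)$ span a dense subalgebra, it suffices to show $K$ is invariant under $\mathcal M_2$; then $K\supseteq \overline{\mathcal M_1\mathcal M_2\ket{v}}=H$.

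Because $\mathcal M_2=C^*(S_2)$ and its elements are norm limits of polynomials in the self-adjoint generators, it is enough to check invariance of $K$ under each $b\in S_2$. Take $x\in\mathcal M_1$ and $b\in S_2$. By hypothesis there is $a\in\mathcal M_1$ with $b\ket{v}=a\ket{v}$. Since $b$ commutes with $\mathcal M_1$,
\[
b\,x\ket{v}=x\,b\ket{v}=x\,a\ket{v}\in \mathcal M_1\ket{v}.
\]
So $b\,\mathcal M_1\ket{v}\subseteq \mathcal M_1\ket{v}$, and by continuity $bK\subseteq K$. Self-adjointness of $b$ is what makes this enough for the $*$-algebra generated: $K$ is then invariant under $b^*=b$ as well, hence under every polynomial in $S_2$ and, by norm limits, under all of $\mathcal M_2$. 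Therefore $K=H$.

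The symmetric argument, using the ``vice versa'' hypothesis for $S_1$, shows that $\ket{v}$ is cyclic for $\mathcal M_2$.

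\textbf{Step 2: $\ket{v}$ is separating.} Since $\mathcal M_2\subseteq \mathcal M_1'$ and $\ket{v}$ is cyclic for $\mathcal M_2$, it is cyclic for $\mathcal M_1'$. Now suppose $x\in\mathcal M_1$ and $x\ket{v}=0$. For every $y\in\mathcal M_1'$,
\[
x\,y\ket{v}=y\,x\ket{v}=0.
\]
Thus $x$ vanishes on the dense set $\mathcal M_1'\ket{v}$, so $x=0$, and $\ket{v}$ is separating for $\mathcal M_1$. Interchanging the roles of the two algebras shows $\ket{v}$ is separating for $\mathcal M_2$.

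The only delicate point is the invariance argument in Step 1: it must pass from the generators to all of $\mathcal M_2$, which uses that the generators are self-adjoint and that $K$ is closed. The rest is routine.
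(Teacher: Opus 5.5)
Your proof is correct and follows essentially the same route as the paper: cyclicity comes from transferring the action of the self-adjoint generators to the other algebra via commutation, and separation from the standard argument that $x\ket{v}=0$ forces $x$ to vanish on the dense set $\mathcal M_2\ket{v}$. Your invariance formulation (showing $K=\overline{\mathcal M_1\ket{v}}$ is invariant under $\mathcal M_2$) is a slightly cleaner packaging of the paper's word-reversal identity $a_1\cdots a_n\ket{v}=b_n\cdots b_1\ket{v}$, since it makes explicit why the common closure $\overline{\mathcal M_1\ket{v}}=\overline{\mathcal M_2\ket{v}}$ is all of $H$, a step the paper leaves implicit.
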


\begin{proof}
For any word $a_1\dots a_n \in \mathcal M_1$ in the generators there exists  $b_i \in \mathcal M_2, 1\leq i\leq n$ such that $a_1\dots a_n\ket{v}=b_n\dots b_1\ket{v}$. Therefore, we see $\overline{\mathcal M_1\ket{v}}\subseteq\overline{\mathcal M_2\ket{v}}$ and the reverse inclusion follows in the same way. As the representation is cyclic, this implies  $\overline{M_1\ket{v}}=\overline{\mathcal M_2\ket{v}}=H$ so $\ket{v}$ is cyclic for both algebras. 

Assume $a\ket{v}=0$ for some $a\in \mathcal M_1$. Since we have
\begin{align*}
    ab\ket{v}=ba\ket{v}=0
\end{align*}
for all $b \in \mathcal{M}_2$ and we know $\overline{\mathcal M_2\ket{v}}=H$, we deduce $a=0$. Therefore, $\ket{v}$ is separating for $\mathcal M_1$ and similarly for $\mathcal M_2$.
\end{proof}

\begin{proof}[Proof of \cref{thm:comopselftest}]
    Let $\rho$ be an optimal state for $G_{\mathrm{emb}}$, with GNS representation $(\pi_\rho,H_{\rho},\ket{v_{\rho}})$. Write
\[ A_a^{x,s}=\pi_\rho(m_a^{(x,s)}\otimes1), \qquad B_b^{y,s}=\pi_\rho(1\otimes n_b^{(y,s)}), \]
where $s\in\{F,t_{\mathrm{CHSH}}\}$, and let $\mathcal M_A,\mathcal M_B$ be the full local measurement algebras.
By \Cref{prop:value}, each of the three tests attains its individual optimal value. In particular, the $\perp$ outcomes on tilted CHSH questions have zero probability: replacing them by a bit would otherwise strictly increase that subgame’s score. Therefore, we have
\[ A_\perp^{x,t}\ket{v_\rho}=B_\perp^{y,t}\ket{v_\rho}=0. \]
For this subgame, form binary measurements
\[ \widehat A_0^{x,t}=A_0^{x,t}, \qquad \widehat A_1^{x,t}=A_1^{x,t}+A_\perp^{x,t}, \]
and similarly for Bob. For Feige’s game, put $\widehat A_a^{x,F}=A_a^{x,F}$ and likewise for Bob.
For $s\in\{F,t_{\mathrm{CHSH}}\}$, define
\[ \mathcal M_{A,s}=C^*(\{\widehat A_a^{x,s}\}), \qquad \mathcal M_{B,s}=C^*(\{\widehat B_b^{y,s}\}), \qquad K_s=\overline{C^*(\mathcal M_{A,s},\mathcal M_{B,s})\ket{v_{\rho}}}. \]
The restrictions to the Hilbert space $K_s$ give an optimal cyclic model for each subgame. By \Cref{thm:Feige} and \Cref{thm:tCHSH}, these models are unitarily equivalent to the fixed ideal models with vectors
\[ \ket{\Phi_3}=\frac{\ket{00}+\ket{11}+\ket{22}}{\sqrt3}, \qquad \ket{\Psi_2}=\frac{\ket{00}+\sqrt2\,\ket{11}}{\sqrt3}, \]
respectively.
Both ideal vectors have full Schmidt rank, so both models are locally cyclic. Since their local algebras are finite-dimensional and generated by their measurements, for every Alice subgame measurement there is a noncommutative $*$-polynomial $p$ in Bob’s measurements such that
\[ \widehat A_a^{x,s}\ket{v_\rho} =p(\{\widehat B_b^{y,s}\})\ket{v_\rho}, \]
and vice versa. These vector identities hold in $K_s\subseteq H_{\rho}$. The same property holds for the original measurements because the additional $\perp$ measurements annihilate $\ket{v_\rho}$. Therefore \Cref{lem:separating} implies
\[ \overline{\mathcal M_A\ket{v_\rho}} =\overline{\mathcal M_B\ket{v_\rho}}=H_{\rho}, \]
and $\ket{v_\rho}$ is separating for both $\mathcal M_A$ and $\mathcal M_B$. Consequently,
\[ A_\perp^{x,t_{\mathrm{CHSH}}}=B_\perp^{y,t_{\mathrm{CHSH}}}=0 \]
as operators on $H_{\rho}$.
We can now lift the subgame operators. For $C\in\{A,B\}$, consider the map
\[ r_{C,s}:\mathcal M_{C,s}\to \mbB(K_s), \qquad r_{C,s}(X)=X|_{K_s}. \]
Since $K_s$ is invariant under every operator in $\mathcal M_{C,s}$ and its adjoint, the restriction map $r_{C,s}$ is a unital $*$-homomorphism.. It is injective, since
$r_{C,s}(X)=0$ implies $ X\ket{v_{\rho}}=0$ and we get $X=0$ since $\ket{v_{\rho}}$ is separating.
Thus $r_{C,F}$ and $r_{C,t_{\mathrm{CHSH}}}$ are $*$-isomorphisms onto the local copies of $M_3(\C)$ and $M_2(\C)$, respectively.
We obtain $e^C_{ij}\in \mbB(K_F)$ and $f^C_{uv}\in \mbB(K_{t_{\mathrm{CHSH}}})$ through the unitary equivalences above from the standard local matrix units of the fixed ideal models to \(K_F\) and \(K_{t_{\mathrm{CHSH}}}\). Define their full-space lifts by
\[ \widetilde e^C_{ij}=r_{C,F}^{-1}(e^C_{ij}), \qquad \widetilde f^C_{uv}=r_{C,t}^{-1}(f^C_{uv}). \]
Concretely, if a subgame operator is expressed as
\[ T=p(\{\widehat C_a^{x,s}|_{K_s}\}), \]
then its lift is
\[ \widetilde T=p(\{\widehat C_a^{x,s}\})\in\mathcal M_{C,s}. \]
Injectivity makes this independent of the chosen polynomial expression.
    
   Since the inverse restriction maps are unital $*$-homomorphisms, the lifted operators satisfy the matrix-unit relations on all of $H_{\rho}$.
   Alice’s lifted operators commute with Bob’s, and
\[
\mathcal M_C=C^*(\{\widetilde e^C_{ij}\},\{\widetilde f^C_{uv}\}).
\]
   
   Since $\ket{v_{\rho}}\in K_F \cap K_{t_{CHSH}}$the ideal-model vector identities give
    \begin{align*}
      \tilde e_{ij}^A\ket{v_\rho}=\tilde e_{ji}^B\ket{v_\rho} \text{ and } \bra{v_{\rho}}\tilde e_{ij}^A\ket{v_\rho}=\bra{v_{\rho}}\tilde e_{ij}^B\ket{v_\rho}=\delta_{ij}\frac 1 3,
    \end{align*}
    and $A_0^{0,F}\ket{v_\rho}=\tilde e_{00}^A\ket{v_\rho}$ for Feige's game. For the tilted CHSH game, we have
  
    \begin{align*}
      \tilde f_{00}^A\ket{v_\rho}=\tilde f_{00}^B\ket{v_\rho}, \tilde f_{10}^B\ket{v_\rho}=\frac{1}{\sqrt{2}} \tilde f_{01}^A\ket{v_\rho}, \tilde f_{10}^A\ket{v_\rho}=\frac{1}{\sqrt{2}} \tilde f_{01}^B\ket{v_\rho} \text{ and } B_0^{0,t_{CHSH}}\ket{v_\rho}=\tilde f_{00}^B\ket{v_\rho}.
    \end{align*}

    \Cref{lem:overlap} now yields $\tilde f_{00}^B\ket{v_\rho}=B_0^{0,t_{CHSH}}\ket{v_\rho}=A_0^{0,F}\ket{v_\rho}=\tilde e_{00}^A\ket{v_\rho}$, from which we deduce $\tilde e_{00}^A\ket{v_\rho}=\tilde f_{00}^{A}\ket{v_\rho}$ and $\tilde e_{00}^B\ket{v_\rho}=\tilde f_{00}^{B}\ket{v_\rho}$. As $\ket{v_{\rho}}$ is  separating by \Cref{lem:separating}, we deduce $\tilde e_{00}^A=\tilde f_{00}^{A}$ and $\tilde e_{00}^B=\tilde f_{00}^{B}$. Therefore
    \begin{align*}
          \mathcal M_A=\pi_{\rho}(A_{POVM}^{X_{emb},A_{emb}}\otimes 1), \mathcal M_B=\pi_{\rho}(1\otimes A_{POVM}^{Y_{emb},B_{emb}})
    \end{align*}
   are generated by elements fulfilling the relations of \Cref{eq:gamealg}, respectively and \Cref{eq:flip} holds. We calculate
   \begin{align*}
       \bra{v_{\rho}}\tilde f_{01}^A\tilde e_{r0}^A \tilde f_{01}^B\tilde e_{s0}^B\ket{v_{\rho}}&=\frac{1}{\sqrt{2}} \bra{v_{\rho}} \tilde e_{r0}^A\tilde f_{10}^B \tilde f_{01}^B\tilde e_{s0}^B\ket{v_{\rho}}\\
       &=\frac{1}{\sqrt{2}} \bra{v_{\rho}} \tilde e_{r0}^A(1-\tilde e_{00}^B)\tilde e_{s0}^B\ket{v_{\rho}}\\
         &=\frac{1}{\sqrt{2}} \bra{v_{\rho}} \tilde e_{0r}^B(1-\tilde e_{00}^B)\tilde e_{s0}^B\ket{v_{\rho}}\\
         &=\frac{1}{\sqrt{2}} \delta_{rs}\bra{v_{\rho}} \tilde e_{00}^B\ket{v_{\rho}}\\
         &=\frac{1}{3\sqrt{2}} \delta_{rs}.
   \end{align*}
   Therefore, \Cref{eq:valuegen} holds. Furthermore, the $*$-polynomials in \Cref{eq:localgebra} exist and are independent of $\rho$ as the ideal models and their standard matrix units are fixed independently $\rho$ .  We conclude by \Cref{lem:uniquestate} that the state $\rho$ is unique. 
   
   Since we proved uniqueness from $\rho$ achieving the optimal value in $G_{emb}$, it follows that the optimal correlation is unique. 
\end{proof}

\begin{corollary}\label{cor:main}
The optimal correlation for $G_{emb}$ is a commuting operator self-test with no tensor-product model. There exists a sequence of finite-dimensional tensor-product models for which the correlations converge to it.
\end{corollary}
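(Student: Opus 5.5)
The corollary has three parts. The self-test part is exactly \cref{thm:comopselftest}: the optimal state $\rho$ of $G_{emb}$ is unique, hence by \cref{prop:optimal_state} so is the optimal correlation $\hat p$, and $\rho$ is the only state on $\mcA_{\mathrm{POVM}}^{(X,A)}\otimes_{\max}\mcA_{\mathrm{POVM}}^{(Y,B)}$ realizing $\hat p$. The approximation part comes from the proof of \cref{prop:value}. There, the finite-dimensional tensor-product strategies built from $\ket{\Psi_2}\otimes\ket{\Gamma_d}$ and the unitaries $U_d$ have correlations $p_d$ whose scores tend to $\omega_{qc}(G_{emb})$. By compactness of $C_q$'s closure, every limit point of $(p_d)$ lies in $C_{qa}\subseteq C_{qc}$ and is optimal, so it equals $\hat p$ by uniqueness. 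Hence $p_d\to\hat p$, and in particular $\hat p\in C_{qa}(4,4,3,3)$.

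The remaining claim is that $\hat p\notin C_{qs}$. Suppose $\hat p$ had a tensor-product model $(H_A\otimes H_B,\{M_a^x\otimes\Id\},\{\Id\otimes N_b^y\},\ket{\psi})$, with $H_A,H_B$ possibly infinite-dimensional. This model induces a state on the max tensor product realizing $\hat p$, so by uniqueness its cyclic restriction is the GNS model of $\rho$. From the proof of \cref{thm:comopselftest}, Alice's local algebra then contains the matrix units $\tilde e^A_{ij}$ and $\tilde f^A_{uv}$, obtained as fixed polynomials in her measurement operators. The key subtlety is that the cyclic subspace is not obviously a tensor product. To handle it, I would avoid passing to the subspace and work directly on $H_A\otimes H_B$.

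First, the optimality relations are vector identities at $\ket\psi$. The Feige subgame gives $\hat{p}$-determined statistics, and the Feige self-test (\cref{thm:Feige}) yields operators of the form $X\otimes\Id$ acting on $\ket\psi$. The cleanest route is to invoke the known Coladangelo-style impossibility through embezzlement. Fix the unique optimal state. The vector $\ket\psi$ then gives elements $r_i=P_A\tilde f^A_{01}\tilde e^A_{i0}$ on Alice's side, with $\tilde f^A_{01}\tilde e^A_{i0}$ and similar elements acting on $H_A\otimes\ket{\cdot}$. I would take the polynomials $P$ defining $\tilde f^A_{01}\tilde e^A_{r0}$ in Alice's measurements, apply them to $M_a^x$ on $H_A$ to get contractions $R_r\in\mbB(H_A)$, and similarly $T_s\in\mbB(H_B)$. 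These satisfy $\langle\psi|R_r\otimes T_s|\psi\rangle=\frac{1}{3\sqrt2}\delta_{rs}$, along with the analogous relations involving $E_{00}$. Compressing by the projections $\tilde e_{00}$ and renormalizing the vector $\sqrt3(\tilde e^A_{00}\otimes\tilde e^B_{00})\ket\psi\in H_A\otimes H_B$ gives a tensor-product exact embezzlement protocol for $(\ket{00}+\ket{11})/\sqrt2$, in the sense of \cref{def:abs_embezzlement} realized spatially. One caveat is that the matrix-unit relations hold only on the cyclic subspace. However, the embezzlement conclusion only uses the inner products and the contraction bounds $\sum R_r^*R_r\le\Id$, which I would arrange by choosing polynomial expressions that are contractive on all of $H_A$. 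The natural choice is to use the PVM structure and the fact that the cyclic subspace is invariant.

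Second, I would derive a contradiction from the known impossibility of exact embezzlement in tensor-product form. By the Cauchy--Schwarz step of \cref{prop:equiv_def}, $(R_r\otimes T_s)\ket{\Psi}=\delta_{rs}\tfrac{1}{\sqrt2}\ket\Psi$. Looking at reduced density matrices on $H_A$, the map $\rho_A\mapsto\sum_r R_r\rho_A R_r^*$ would have to return the Schmidt data of $\ket\Psi$ tensored with a maximally entangled qubit. Comparing Schmidt coefficients gives the standard contradiction: the largest Schmidt coefficient would satisfy $\lambda_{\max}\le\lambda_{\max}/2$, which is impossible since $\ket\Psi\neq0$ has a nonzero largest coefficient.

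The main obstacle is making the contraction property and the Schmidt-coefficient argument rigorous on the full tensor-product space rather than on the cyclic subspace. This requires care in choosing representative polynomials so that they remain contractions. Lifting to isometries $V_A\otimes\Id$ and $\Id\otimes V_B$ via Halmos dilation, as in \cref{prop:equiv_def}, should resolve this.
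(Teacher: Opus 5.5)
Your self-test part is the paper's argument (\cref{thm:comopselftest}). Your approximation argument is also fine, and more careful than the paper's one-line claim: \cref{prop:value} only controls scores, and it is compactness of $C_{qa}$ plus uniqueness of $\hat p$ that forces $p_d\to\hat p$ on the question pairs that are not directly scored. For $\hat p\notin C_{qs}$ you take a different route from the paper. The paper reruns Beigi's argument \cite{beigi2021separation}, a Schmidt-coefficient comparison between the Feige and tilted-CHSH self-tests glued by $p(0,0|(0,F),(0,t_{CHSH}))=p_A(0|(0,F))=p_B(0|(0,t_{CHSH}))=\frac13$, which needs neither Harris's theorem nor uniqueness of the global state. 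You instead pull the Cuntz structure back to a spatial exact embezzlement of $\ket{\Phi_2}$ and use its impossibility. That route is viable, but as written the step that carries it is missing.

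The gap is the one you flag yourself. The polynomials in $\{M_a^x\}$ give matrix units and Cuntz isometries only on the cyclic subspace $K$, which is not a product. So on $H_A$ you know neither that $\tilde e^A_{00}$ is a projection nor that $\sum_r R_r^*R_r\le\Id$. Your remedies do not supply this: a Halmos dilation takes contractions as input and cannot manufacture them, and nothing you state lets you choose contractive polynomial representatives.

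The missing idea is that $\ket{v_\rho}$ is separating for $\mcM_A$ (\cref{lem:separating}, as applied in the proof of \cref{thm:comopselftest}). Let $S_A$ be the support of the reduced density operator of $\ket\psi$ on $H_A$. If $X\in C^*(\{M_a^x\})$ and $(X\otimes\Id)|_K=0$, then $(XY\otimes\Id)\ket\psi=0$ for all $Y\in C^*(\{M_a^x\})$. Viewing $\ket\psi$ as a Hilbert--Schmidt operator, $XY$ vanishes on $S_A$, so $X$ vanishes on $H_A'':=\overline{C^*(\{M_a^x\})S_A}$. Since $K\subseteq H_A''\otimes H_B''$, the map $X|_{H_A''}\mapsto(X\otimes\Id)|_K$ is a $*$-isomorphism onto $\mcM_A$. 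Hence the relations \cref{eq:gamealg} and the Cuntz relations of \cref{lem:isoCuntz} hold exactly on $H_A''$, and likewise on $H_B''$.

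Now $\ket\xi=\sqrt3(P^A\otimes P^B)\ket\psi$ is a unit vector in the genuine product $P^AH_A''\otimes P^BH_B''$. The map $U_Ah=\sum_{r=1}^2\ket{r-1}\otimes(V_r^A)^*h$ is a unitary $P^AH_A''\to\C^2\otimes P^AH_A''$. The Cauchy--Schwarz step of \cref{prop:equiv_def} gives $(U_A\otimes U_B)\ket\xi=\ket{\Phi_2}\otimes\ket\xi$ up to reordering of factors. Comparing largest Schmidt coefficients gives $s_{\max}=s_{\max}/\sqrt2$, a contradiction.

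Alternatively, take the column isometry $x_i=\sum_k\tilde e^A_{k0}(V^A_{i+1})^*\tilde e^A_{0k}\in\mcM_A$, which satisfies $\langle v_\rho|x_iy_j|v_\rho\rangle=\delta_{ij}/\sqrt2$ with $y_j$ Bob's analogue. Lift it through the surjection $C^*(\{M_a^x\})\to\mcM_A$ to a column contraction on $H_A$; this is always possible along a surjective $*$-homomorphism. Only then apply the Halmos dilation.
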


\begin{proof}
From \Cref{thm:comopselftest}, we see that the optimal correlation is a commuting operator self-test. The argument for Theorem 7 in \cite{beigi2021separation},  using Feige's game in condition $(i)$ and condition $(iii)$ being $p(0,0|(0,F), (0,t_{CHSH}))=p_A(0|(0,F))=p_B(0|(0,t_{CHSH})=\frac{1}{3}$, can be applied in the same way to obtain that there is no tensor-product model for this correlation. The sequence of quantum strategies in the proof of \Cref{prop:value} is finite-dimensional and their correlation converges to the optimal one.
\end{proof}

\begin{corollary}\label{cor:embez_protocol}
Every optimal cyclic model for $G_{\mathrm{emb}}$ determines an exact embezzlement protocol for
\[|\Phi_2\rangle=\frac{|00\rangle+|11\rangle}{\sqrt2}.\]
The extracted catalyst and local operators are unique up to unitary equivalence.
\end{corollary}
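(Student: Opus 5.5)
Let $(H_\rho,\pi_\rho,\ket{v_\rho})$ be the GNS triple of an optimal state $\rho$ for $G_{emb}$; by \Cref{thm:comopselftest} any optimal cyclic model is unitarily equivalent to this one. The plan is to take the lifted matrix units $\tilde e^C_{ij},\tilde f^C_{uv}$ constructed in that proof, together with the corner isomorphisms $\theta_A,\theta_B:\mcO_2\to P^C\mathcal M_CP^C$ from \Cref{lem:isoCuntz}, and read off an abstract embezzlement protocol in the sense of \Cref{def:abs_embezzlement}. Concretely, set $\mathcal A=\mathcal B=\mcO_2$, $r_i=V_{i+1}$, $t_j=W_{j+1}$ for $i,j\in\{0,1\}$ (the Cuntz isometries, relabelled by $\{0,1\}$). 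Take the state $\psi$ on $\mcO_2\otimes_{\max}\mcO_2$ defined in the proof of \Cref{lem:uniquestate}, i.e.\ $\psi(x\otimes y)=3\bra{v_\rho}\theta_A(x)\theta_B(y)\ket{v_\rho}$. This is a vector state for $\sqrt3\,P^AP^B\ket{v_\rho}$, and it is well defined on the max tensor product because Alice's and Bob's corners commute.

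Next I check the two conditions of \Cref{def:abs_embezzlement}. \Cref{eqn:embez_1} holds with equality since $V_1^*V_1+V_2^*V_2=2\cdot 1$. This is too large, so I instead rescale: $r_i=V_{i+1}/\sqrt2$ would give $\sum r_i^*r_i=1$ but break the values. The better choice is $r_i=V_{i+1}^*$: then $\sum_i r_i^*r_i=\sum V_iV_i^*=1$. Correspondingly I compute $\psi(V_r^*\otimes W_s^*)=\overline{\psi(V_r\otimes W_s)}=\delta_{rs}/\sqrt2$, using the value computed at the end of the proof of \Cref{thm:comopselftest} (\Cref{eq:valuegen} times $3$). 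Hence $\psi(r_i\otimes t_j)=\delta_{ij}\tfrac1{\sqrt2}$, which are exactly the Schmidt coefficients of $\ket{\Phi_2}$. By \Cref{prop:equiv_def} (the GNS of $\psi$ plus Halmos dilation), this yields unitaries $V_A,V_B$ and a catalyst $\ket{\Psi}$ realizing \Cref{eqn:embezz_EPR}. Concretely, the catalyst can be taken to be the normalized vector $\sqrt3 P^AP^B\ket{v_\rho}$ in $\overline{P^AP^B H_\rho}$, and the local contractions can be built from the column operators $\theta_A(V_r^*)$ and $\theta_B(W_s^*)$.

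For uniqueness, I argue as follows. By \Cref{thm:unique_state_embezzlement}, $\psi$ is the unique state with these values, so its GNS triple, and hence the catalyst and the operators $\pi_\psi(V_r),\pi_\psi(W_s)$, is unique up to unitary equivalence. In addition, $\rho$ itself is unique by \Cref{thm:comopselftest}, so the extraction is canonical.

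The main obstacle I expect is the normalization and adjoint bookkeeping. I need to choose $r_i,t_j$ so that the contraction inequality \Cref{eqn:embez_1} and the exact values $\delta_{ij}\alpha_i$ hold simultaneously; isometries versus co-isometries matter here. I also need to check that the extracted catalyst coincides with the GNS vector of $\psi$. The latter requires cyclicity of $P^AP^B\ket{v_\rho}$ for $\theta_A(\mcO_2)\theta_B(\mcO_2)$ on $P^AP^BH_\rho$, which I would derive from cyclicity of $\ket{v_\rho}$ together with \Cref{eq:statecomp}.
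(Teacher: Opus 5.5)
Your proposal is correct and follows essentially the paper's route. You use the same co-isometries $(\tilde f_{01}^C\tilde e_{r0}^C)^*$, the same catalyst $\sqrt3\,P^AP^B\ket{v_\rho}$, the same values read off from \Cref{eq:valuegen}, and the same appeal to \Cref{thm:comopselftest} for uniqueness. The differences are cosmetic: you phrase the protocol abstractly through $\psi$ on $\mcO_2\otimes_{\max}\mcO_2$ and add Harris's theorem as a second uniqueness argument, while the paper restricts directly to $K=P^AP^BH_\rho$; the cyclicity of $\sqrt3\,P^AP^B\ket{v_\rho}$ that you flag does follow from the dense span used in \Cref{lem:uniquestate} together with \Cref{eq:flip}.
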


\begin{proof} Let $(\pi_\rho,H_\rho,\ket{v_\rho})$ be an optimal GNS model, with the matrix units from \Cref{thm:comopselftest}. Set
\[
P^C=\widetilde e_{00}^C=\widetilde f_{00}^C,\qquad E=P^AP^B,\qquad K=EH_\rho,\qquad \ket{\xi}=\sqrt3E\ket{v_\rho}.
\]
Since $P^A\ket{v_\rho}=P^B\ket{v_\rho}$ and $\bra{v_\rho}P^A\ket{v_\rho}=\frac{1}{3}$, we have $\|\xi\|=1$.
For $C\in\{A,B\}$ and $r\in\{1,2\}$, define
\[
V_r^C=\widetilde f_{01}^C\widetilde e_{r0}^C.\]
The subspace $K$ is invariant under every $V_r^C$ and its adjoint. By \Cref{lem:isoCuntz}, it holds
\[
(V_r^C)^*V_s^C=\delta_{rs}P^C \text{ and }\sum_{r=1}^2V_r^C(V_r^C)^*=P^C.
\]
Consequently, the operators
\[r_i=(V_{i+1}^A)^*|_K,\qquad t_i=(V_{i+1}^B)^*|_K,\qquad i\in\{0,1\},
\]
generate commuting unital $C^*$-algebras $\mathcal A=C^*(r_0,r_1)$ and $\mathcal B=C^*(t_0,t_1)$, and satisfy
\[
\sum_{i=0}^1 r_i^*r_i=\sum_{i=0}^1 t_i^*t_i=I_K.
\]
Let $\omega$ be the state on $\mathcal A \otimes_{max} \mathcal B$ defined by $\omega(a\otimes b)=\bra{\xi}ab\ket{\xi}$. Since it holds $V_r^AV_s^B=EV_r^AV_s^BE$, \Cref{eq:valuegen} gives
\[
\omega(r_i\otimes t_j)=3\bra{v_{\rho}}(V_{i+1}^A)^*(V_{j+1}^B)^*\ket{v_{\rho}}=\frac{\delta_{ij}}{\sqrt{2}}.
\]
Thus $\omega$, together with $(r_0,r_1)$ and $(t_0,t_1)$ is an exact embezzlement protocol for $\ket{\Phi_2}$ in the sense of \Cref{def:abs_embezzlement}.

Indeed, it holds
\[
\sum_{i,j=0}^1\|r_it_j\ket{\xi}-\frac{\delta_{ij}}{\sqrt{2}}\ket{\xi}\|^2=\sum_{i,j=0}^1\|r_it_j\ket{\xi}\|^2-1=0,
\]
and hence $r_it_j\ket{\xi}=\frac{\delta_{ij}}{\sqrt{2}}\ket{\xi}$. By \Cref{prop:equiv_def}, adjoining local ancillas yields commuting unitaries implementing
\[
\ket{0}\otimes \ket{\hat\xi}\otimes \ket{0} \mapsto \frac{1}{\sqrt{2}}(\ket{0}\otimes \ket{\hat\xi}\otimes \ket{0}+\ket{1}\otimes \ket{\hat\xi}\otimes \ket{1}),
\]
where $\ket{\hat\xi}$ is $\ket{\xi}$ tensored with the initial ancillas. 

Finally, \Cref{thm:comopselftest} gives a unitary equivalence between any two optimal cyclic models. Since the matrix units are fixed polynomials in the measurement operators, this equivalence also intertwines $E$, $\ket{\xi}$, and the operators $r_i,t_j$. This proves the uniqueness asserted in the statement.
\end{proof}

\bibliography{references}
\bibliographystyle{alpha}

\end{document}

%% file: ams_header.tex
\usepackage{amsmath,amsthm,amsfonts,amssymb,latexsym,verbatim,bbm,hyperref,mathrsfs}
\usepackage{mathtools}
\usepackage[shortlabels]{enumitem}
\usepackage{subcaption}
\usepackage{thmtools}
\usepackage[capitalize]{cleveref}
\usepackage{afterpage}
\usepackage{xcolor}
\usepackage{braket}

\DeclareRobustCommand{\SkipTocEntry}[5]{}

\allowdisplaybreaks

\setlist{itemsep=.5\baselineskip,topsep=.5\baselineskip}

\numberwithin{equation}{section}
\theoremstyle{plain}
\newtheorem{theorem}{Theorem}[section]

\newtheorem{lemma}[theorem]{Lemma}
\newtheorem{proposition}[theorem]{Proposition}

\newtheorem{definition}[theorem]{Definition}

\newtheorem{remark}[theorem]{Remark}
\newtheorem{corollary}[theorem]{Corollary}

\newcommand{\ang}[1]{\langle #1 \rangle}
\newcommand{\R}{\mathbb{R}}
\newcommand{\C}{\mathbb{C}}

\newcommand{\Id}{\mathbbm{1}}

\newcommand{\mcA}{\mathcal{A}}
\newcommand{\mcB}{\mathcal{B}}

\newcommand{\mcF}{\mathcal{F}}
\newcommand{\mcG}{\mathcal{G}}

\newcommand{\mcI}{\mathcal{I}}

\newcommand{\mcM}{\mathcal{M}}

\newcommand{\mcO}{\mathcal{O}}

\newcommand{\mcR}{\mathcal{R}}
\newcommand{\mcS}{\mathcal{S}}
\newcommand{\mcT}{\mathcal{T}}

\newcommand{\mbB}{\mathbb{B}}

\newcommand{\onto}{\twoheadrightarrow}

\newcommand{\tr}{\text{tr}}
\newcommand{\wtd}{\widetilde}

%% file: references.bib
@article{coladangelo2020two,
  title={A two-player dimension witness based on embezzlement, and an elementary proof of the non-closure of the set of quantum correlations},
  author={Coladangelo, Andrea},
  journal={Quantum},
  volume={4},
  pages={282},
  year={2020},
  publisher={Verein zur F{\"o}rderung des Open Access Publizierens in den Quantenwissenschaften}
}

@article{schmidt2025guess,
  title={Guess your neighbor's input: Quantum advantage in {F}eige's game},
  author={Schmidt, Simon and Storgaard, Sigurd AL and Walter, Michael and Zhao, Yuming},
  journal={arXiv preprint arXiv:2510.08484},
  year={2025}
}

@article{PSZZ,
  title={An operator-algebraic formulation of self-testing},
  author={Paddock, Connor and Slofstra, William and Zhao, Yuming and Zhou, Yangchen},
  journal={Annales Henri Poincar{\'e}},
  volume={25},
  number={10},
  pages={4283--4319},
  year={2024},
  organization={Springer}
}

@article{CTT,
  title={An operator system approach to self-testing},
  author={Crann, Jason and Todorov, Ivan G and Turowska, Lyudmila},
  journal={Advances in Mathematics},
  volume={492},
  pages={110884},
  year={2026},
  publisher={Elsevier}
}

@article{harris2026self,
  title={Self-testing of exact entanglement embezzlement},
  author={Harris, Samuel J},
  journal={arXiv preprint arXiv:2605.22713},
  year={2026}
}

@article{van2025multipartite,
  title={Multipartite embezzlement of entanglement},
  author={van Luijk, Lauritz and Stottmeister, Alexander and Wilming, Henrik},
  journal={Quantum},
  volume={9},
  pages={1818},
  year={2025},
  publisher={Verein zur F{\"o}rderung des Open Access Publizierens in den Quantenwissenschaften}
}

@inproceedings{BCKLMNS,
  title={A mathematical foundation for self-testing: Lifting common assumptions},
  author={Baptista, Pedro and Chen, Ranyiliu and Kaniewski, Jedrzej and Lolck, David Rasmussen and Man{\v{c}}inska, Laura and Nielsen, Thor Gabelgaard and Schmidt, Simon},
  booktitle={Annales Henri Poincar{\'e}},
  pages={1--48},
  year={2025},
  organization={Springer}
}

@article{liu2025embezzlement,
  title={Embezzlement as a" Self-Test" for Infinite Copies of Entangled States},
  author={Liu, Li},
  journal={arXiv preprint arXiv:2509.05036},
  year={2025}
}

@article{cleve2017perfect,
  title={Perfect embezzlement of entanglement},
  author={Cleve, Richard and Liu, Li and Paulsen, Vern I},
  journal={Journal of Mathematical Physics},
  volume={58},
  number={1},
  year={2017},
  publisher={AIP Publishing}
}

@article{van2003universal,
  title={Universal entanglement transformations without communication},
  author={van Dam, Wim and Hayden, Patrick},
  journal={Physical Review A},
  volume={67},
  number={6},
  pages={060302},
  year={2003},
  publisher={APS}
}

@inproceedings{slofstra2019set,
  title={The set of quantum correlations is not closed},
  author={Slofstra, William},
  booktitle={Forum of Mathematics, Pi},
  volume={7},
  pages={e1},
  year={2019},
  organization={Cambridge University Press}
}

@article{bamps2015sum,
  title={Sum-of-squares decompositions for a family of {C}lauser-{H}orne-{S}himony-{H}olt-like inequalities and their application to self-testing},
  author={Bamps, C{\'e}dric and Pironio, Stefano},
  journal={Physical Review A},
  volume={91},
  number={5},
  pages={052111},
  year={2015},
  publisher={APS}
}

@INPROCEEDINGS{Feige1991,
  author={Feige, U.},
  booktitle={Proceedings of the Sixth Annual Structure in Complexity Theory Conference}, 
  title={On the success probability of the two provers in one-round proof systems}, 
  year={1991},
  volume={},
  number={},
  pages={116-123},
  doi={10.1109/SCT.1991.160251}}

@book {murphy,
    AUTHOR = {Murphy, Gerard J.},
     TITLE = {{$C^*$}-algebras and operator theory},
 PUBLISHER = {Academic Press, Inc., Boston, MA},
      YEAR = {1990},
     PAGES = {x+286},
      ISBN = {0-12-511360-9},
   MRCLASS = {46Lxx (46-01)},
  MRNUMBER = {1074574},
MRREVIEWER = {E.\ Gerlach},
}

@article{beigi2021separation,
  title={Separation of quantum, spatial quantum, and approximate quantum correlations},
  author={Beigi, Salman},
  journal={Quantum},
  volume={5},
  pages={389},
  year={2021},
  publisher={Verein zur F{\"o}rderung des Open Access Publizierens in den Quantenwissenschaften}
}

@article{ji2021mip,
  title={Mip*= re},
  author={Ji, Zhengfeng and Natarajan, Anand and Vidick, Thomas and Wright, John and Yuen, Henry},
  journal={Communications of the ACM},
  volume={64},
  number={11},
  pages={131--138},
  year={2021},
  publisher={ACM New York, NY, USA}
}

@article{KSZ26,
  title={Sums of projections: quantitative stability and infinite-dimensional self-testing},
  author={Kar, Prem Nigam and Schafhauser, Christopher and Zhao, Yuming},
  journal={In Preparation},
  year={2026}
}

@article{paulsen2016estimating,
  title={Estimating quantum chromatic numbers},
  author={Paulsen, Vern I and Severini, Simone and Stahlke, Daniel and Todorov, Ivan G and Winter, Andreas},
  journal={Journal of Functional Analysis},
  volume={270},
  number={6},
  pages={2188--2222},
  year={2016},
  publisher={Elsevier}
}
